\newtheorem{theorem}{Theorem}
\newtheorem{lemma}[theorem]{Lemma}
\newtheorem{remark}{Remark}
\newtheorem{assumption}{Assumption~A-\kern-0pt}
\newtheorem{corollary}[theorem]{Corollary}
\newcommand*\rfrac[2]{{}^{#1}\!\!/_{#2}}
      \DeclareMathOperator{\tr}{tr}
      \newcommand{\bh}{{\bf h}}
      \newcommand{\bH}{{\bf H}}
      \newcommand{\wbHh}{\widehat{\bf H}^{\mbox{\tiny H}}}
      \newcommand{\wbh}{\widehat{\bf h}}
      \newcommand{\C}{{\mathbb C}}
      \newcommand{\E}{{\mathbb E}}
      \newcommand{\bPhi}{\boldsymbol{\Phi}}
      \newcommand{\lb}{\left. }
      \newcommand{\rabs}{\right| }
      \newcommand{\bsigma}{\boldsymbol{\Sigma}}
      \newcommand{\maximize}[1]{{\underset{{#1}}{\mathrm{maximize}}}}
\title{Linear Precoding Based on Polynomial Expansion: Large-Scale Multi-Cell MIMO Systems}
\author{Abla Kammoun,~\IEEEmembership{Member,~IEEE,}
        Axel M\"uller,~\IEEEmembership{Student Member,~IEEE,}
        Emil~Bj\"ornson,~\IEEEmembership{Member,~IEEE,}
        and~M\'erouane~Debbah,~\IEEEmembership{Senior~Member,~IEEE}
\thanks{Copyright (c) 2014 IEEE. Personal use of this material is permitted. However, permission to use this material for any other purposes must be obtained from the IEEE by sending a request to pubs-permissions@ieee.org}%
\thanks{A.~Kammoun, A.~M\"uller, E.~Bj\"ornson, and M.~Debbah are with the Alcatel-Lucent Chair on Flexible Radio, SUPELEC, Gif-sur-Yvette, France (e-mail: \{abla.kammoun, axel.mueller, emil.bjornson, merouane.debbah\}@supelec.fr). E.~Bj\"ornson is also with the Signal Processing Lab, ACCESS Linnaeus Centre, KTH Royal Institute of Technology, Stockholm, Sweden.}%
\thanks{E. Bj{\"o}rnson was with the Alcatel-Lucent Chair on Flexible Radio, Sup{\'e}lec,
Gif-sur-Yvette, France, and with the Department of Signal Processing, KTH
Royal Institute of Technology, Stockholm, Sweden. He is currently with the
Department of Electrical Engineering (ISY), Link{\"o}ping University, Link{\"o}ping, Sweden (email: emil.bjornson@liu.se)}%
\thanks{This research was funded by the International Postdoc Grant 2012-228 from The Swedish Research Council. It has been also supported by the ERC Starting Grant 305123 MORE (Advanced Mathematical Tools for Complex Network Engineering).}
}
\begin{document}
      \maketitle

      \begin{abstract}
Large-scale MIMO systems can yield a substantial improvements in spectral efficiency for future communication systems. Due to the finer spatial resolution and array gain achieved by a massive number of antennas at the base station, these systems have shown to be robust to inter-user interference and the use of linear precoding appears to be asymptotically optimal. However, from a practical point of view, most precoding schemes exhibit prohibitively high computational complexity as the system dimensions increase. For example, the near-optimal regularized zero forcing (RZF) precoding requires the inversion of a large matrix. To solve this issue, we propose in this paper to approximate the matrix inverse by a truncated polynomial expansion (TPE), where the polynomial coefficients are optimized to maximize the system performance. This technique has been recently applied in single cell scenarios and it was shown that a small number of coefficients is sufficient to reach performance similar to that of RZF, while it was not possible to surpass RZF.

In a realistic multi-cell scenario involving large-scale multi-user MIMO systems, the optimization of RZF precoding has, thus far, not been feasible. This is mainly attributed to the high complexity of the scenario and the non-linear impact of the necessary regularizing parameters. On the other hand, the scalar coefficients in TPE precoding give hope for possible throughput optimization. To this end, we exploit random matrix theory to derive a deterministic expression of the asymptotic signal-to-interference-and-noise ratio for each user based on channel statistics. We also provide an optimization algorithm to approximate the coefficients that maximize the network-wide weighted max-min fairness. The optimization weights can be used to mimic the user throughput distribution of RZF precoding. Using simulations, we compare the network throughput of the proposed TPE precoding with that of the suboptimal RZF scheme and show that our scheme can achieve higher throughput using a TPE order of only 5.

      \end{abstract}
\begin{IEEEkeywords}
Large-scale MIMO, linear precoding, multi-user systems, polynomial expansion, random matrix theory.
\end{IEEEkeywords}

      \section{Introduction}

A typical multi-cell communication system consists of $L>1$ base stations (BSs) that each are serving $K$ user terminals (UTs). The conventional way of mitigating inter-user interference in the downlink of such systems has been to assign orthogonal time/frequency resources to UTs within the cell and across neighboring cells. By deploying an array of $M$ antennas at each BSs, one can turn each cell into a multi-user multiple-input multiple-output (MIMO) system and enable flexible spatial interference mitigation \cite{Gesbert2010a}. The essence of downlink multi-user MIMO is \emph{precoding}, which means that the antenna arrays are used to direct each data signal spatially towards its intended receiver. The throughput of multi-cell multi-user MIMO systems ideally scales linearly with $\min(M,K)$. Unfortunately, the precoding design in multi-user MIMO requires very accurate instantaneous channel state information (CSI) \cite{Gesbert2007a} which can be cumbersome to achieve in practice \cite{Jindal2006a}. This is one of the reasons why only rudimentary multi-user MIMO techniques have found the way into current wireless standards, such as LTE-Advanced \cite{Holma2012a}.

Large-scale multi-user MIMO systems (with $M \gg K \gg 1$) have received massive attention lately \cite{Marzetta2010a,Rusek2013a,Hoydis2013a,WAG10}, partially because these systems are less vulnerable to inter-user interference. An exceptional spatial resolution is achieved when the number of antennas, $M$, is large; thus, the leakage of signal power caused by having imperfect CSI is less probable to arrive as interference at other users. Interestingly, the throughput of these systems become highly predictable in the large-($M,K$) regime; random matrix theory can provide simple deterministic approximations of the otherwise stochastic achievable rates \cite{HAC06,Evans08,WAG10,Muharar2011a,Hoydis2013a,COUbook}. These so-called \emph{deterministic equivalents} are tight as $M \rightarrow \infty$ due to channel hardening, but are often very accurate also at small/practical values of $M$ and $K$. The deterministic equivalents can, for example, be utilized for optimization of various system parameters \cite{WAG10}.

Many of the issues that made small-scale MIMO difficult to implement in practice appear to be solved by large-scale MIMO \cite{Rusek2013a}; for example, simple linear precoding schemes achieve (when $M\rightarrow \infty$ and $K$ is fixed) high performance in some multi-cell systems \cite{Rusek2013a} and robust to CSI imperfections \cite{Marzetta2010a}. The complexity of computing most of the state-of-the-art linear precoding schemes is, nevertheless, prohibitively high in the large-($M,K$) regime. For example, the optimal precoding parametrization in \cite{Bjornson2012c} and the near-optimal \emph{regularized zero-forcing (RZF)} precoding \cite{PEE05,WAG10,Hoydis2013a} require inversion of the Gram matrix of the joint channel of all users---this matrix operation has cubic complexity in $\min(M,K)$. A notable exception is the matched filter, also known as \emph{maximum ratio transmission (MRT)} \cite{Lo1999a}, which has only square complexity. This scheme is, however, not very appealing from a throughput perspective since it does not actively suppress inter-user interference and thus requires an order of magnitude more antennas to achieve performance close to that of RZF \cite{Hoydis2013a}.

In this paper, we propose to solve the precoding complexity issue by a new family of precoding schemes called truncated polynomial expansion (TPE) precoding. This family can be obtained by approximating the matrix inverse in RZF by a $(J-1)$-degree matrix polynomial which admits a low-complexity multistage hardware implementation. By changing $J$, one achieves a smooth transition in performance between MRT ($J=1$) and RZF ($J=\min(M,K)$). The hardware complexity of TPE precoding is proportional to $J$, thus the hardware complexity can be tailored to the deployment scenario. Furthermore, the TPE order $J$ needs not scale with the system dimensions $M$ and $K$ to maintain a fixed per-user rate gap to RZF, but it is desirable to increase it with the signal-to-noise ratio (SNR) and the quality of the CSI.

Building on the proof-of-concept provided by our work in \cite{Kammoun2014a} and the independent concurrent work of \cite{zarei}, this paper applies TPE precoding in a large-scale multi-cell scenario with realistic characteristics, such as user-specific channel covariance matrices, imperfect CSI, pilot contamination (due to pilot reuse in neighboring cells), and cell-specific power constraints. The $j$th BS serves its UTs using TPE precoding with an order $J_j$ that can be different between cells and thus tailored to factors such as cell size, performance requirements, and hardware resources.

In this paper, we derive new deterministic equivalents for the achievable user rates. The derivation of these expressions is the main analytical contribution and required major analytical advances related to the powers of stochastic Gram matrices with arbitrary covariances. The deterministic equivalents are tight when $M$ and $K$ grow large with a fixed ratio, but provide close approximations at small parameter values as well. Due to the inter-cell and intra-cell interference, the effective signal-to-interference-and-noise ratios (SINRs) are functions of the TPE coefficients in all cells. However, the deterministic equivalents only depend on the channel statistics, and not the instantaneous realizations, and can thus be optimized beforehand/offline. The joint optimization of all the polynomial coefficients is shown to be mathematically similar to the problem of multi-cast beamforming optimization considered in \cite{Sidiropoulos2006,Karipidis2008a,Gershman2010a}. We can therefore adapt the state-of-the-art optimization procedures from the multi-cast area and use these for offline optimization. We provide a simulation example that reveals that the optimized coefficients can provide even higher network throughput than RZF precoding at relatively low TPE orders, where TPE orders refers to the number of matrix polynomial terms.

\subsection{Notation}

Boldface (lower case) is used for column vectors, ${\bf x}$, and (upper case) for matrices, ${\bf X}$. Let ${\bf X}^{\mbox{\tiny T}}$, ${\bf X}^{\mbox{\tiny H}}$, and ${\bf X}^{*}$  denote the transpose, conjugate transpose, and conjugate of ${\bf X}$, respectively, while $\tr ({\bf X})$ denotes the matrix trace function. Moreover, $\mathbb{C}^{M\times K}$ denotes the set of matrices with size $M\times K$, whereas $\mathbb{C}^{M\times 1}$ is the set of vectors with size $M$. The $M\times M$ identity matrix is denoted by ${\bf I}_M$  and the ${\bf 0}_{M\times 1}$ stands for the $M\times 1$ vector with all entries equal to zero.
The expectation operator is denoted $\mathbb{E}[\cdot]$ and ${\rm var}[\cdot]$ denotes the variance. The spectral norm is denoted by $\|\cdot\|$ and equals the $L_2$ norm when applied to a vector. A circularly symmetric complex Gaussian random vector ${\bf x}$ is denoted ${\bf x} \sim \mathcal{CN}(\bar{{\bf x}},{\bf Q})$, where $\bar{{\bf x}}$ is the mean and ${\bf Q}$ is the covariance matrix.
For an infinitely differentiable monovariate function $f(t)$, the $\ell$th derivative at $t=t_0$ (i.e., $ \rfrac{d^\ell}{d t^\ell}f(t)|_{t=t_0} $) is denoted by $f^{(\ell)}(t_0)$ and more concisely by $f^{(\ell)}$ when $t=0$.
The big $\mathcal{O}$ notation $f(x) = \mathcal{O}(g(x))$ and little $o$ notation $f(x) = o(g(x))$  mean that
$\left| \frac{f(x)}{g(x)}\right|$ is bounded or approaches zero, respectively, as $x\rightarrow \infty$.

\section{System Model}

This section defines the multi-cell system with flat-fading channels, linear precoding, and channel estimation errors.

\subsection{Transmission Model}

We consider the downlink of a multi-cell system consisting of $L>1$ cells. Each cell consists of an $M$-antenna BS and $K$ single-antenna UTs. We consider a time-division duplex (TDD) protocol where the BS acquires instantaneous CSI in the uplink and uses it for the downlink transmission by exploiting channel reciprocity. We assume that the TDD protocols are synchronized across cells, such that pilot signaling and data transmission take place simultaneously in all cells.

The received complex baseband signal $y_{j,m}\in \mathbb{C}$ at the $m$th UT in the $j$th cell is
\begin{equation}
y_{j,m}=\sum_{\ell=1}^L {\bf h}_{\ell,j,m}^{\mbox{\tiny H}}{\bf x}_\ell + b_{j,m}
\label{eq:system_model}
\end{equation}
where ${\bf x}_\ell \in\mathbb{C}^{M \times 1}$ is the transmit signal from the $\ell$th BS and ${\bf h}_{\ell,j,m}\in\mathbb{C}^{M \times 1}$ is the channel vector from that BS to the $m$th UT  in the $j$th cell, and $b_{j,m}\sim\mathcal{CN}(0,\sigma^2)$ is additive white Gaussian noise (AWGN), with variance $\sigma^2$, at the receiver's input.

The small-scale channel fading is modeled as follows.

\begin{assumption}
The channel vector  ${\bf h}_{\ell,j,m}$ is modeled as
\begin{equation} \label{eq:channel-model}
{\bf h}_{\ell,j,m}={\bf R}_{\ell,j,m}^{\frac{1}{2}}{\bf z}_{\ell,j,m}
\end{equation}
where ${\bf z}_{\ell,j,m}\sim \mathcal{CN}({\bf 0}_{M\times 1},{\bf I}_M)$ and the channel covariance matrix ${\bf R}_{\ell,j,m} \in \mathbb{C}^{M \times M}$ satisfies the conditions
\begin{itemize}
	\item $\lim\sup_M \|{\bf R}_{\ell,j,m}\| <+\infty $, $\forall \ell,j,m$;
	\item $\lim\inf_M \frac{1}{M}\tr({\bf R}_{\ell,j,m}) >0$, $\forall \ell,j,m$.
	\end{itemize}
The channel vector has a fixed realization for a coherence interval and will then take a new independent realization.
This model is usually referred to as \emph{Rayleigh block-fading}.
\label{ass:channel}
\end{assumption}

The two technical conditions on ${\bf R}_{\ell,j,m}$ in Assumption A-\ref{ass:channel} enables asymptotic analysis and follow from the law of energy conservation and from increasing the physical size of the array with $M$; see \cite{Bjornson2014a} for a detailed discussion.

\begin{assumption}
All BSs use Gaussian codebooks and linear precoding. The precoding vector for the $m$th UT in the $j$th cell is ${\bf g}_{j,m} \in \C^{M \times 1}$ and its data symbol is $s_{j,m} \sim \mathcal{CN}(0,1)$.
\label{ass:linear-precoding}
\end{assumption}

Based on this assumption, the BS in the $j$th cell transmits the signal
\begin{equation}
{\bf x}_j = \sum_{m=1}^{K} {\bf g}_{j,m} s_{j,m}  ={\bf G}_j {\bf s}_j.
\label{eq:precoder_fun}
\end{equation}
The latter is obtained by letting ${\bf G}_j=\left[{\bf g}_{j,1},\ldots,{\bf g}_{j,K}\right] \in \C^{M \times K}$ be the precoding matrix of the $j$th BS and ${\bf s}_j = [s_{j,1} \, \ldots \, s_{j,K}]^{\mbox{\tiny T}} \sim \mathcal{CN}({\bf 0}_{K\times 1},{\bf I}_K) $ be the vector containing all the data symbols for UTs in the $j$th cell.
The transmission at BS $j$ is subject to a total transmit power constraint
\begin{equation}
\frac{1}{K}\tr  \left( {\bf G}_j {\bf G}_j^{\mbox{\tiny H}} \right) =P_j
\label{eq:power-constraint}
\end{equation}
where $P_j$ is the average transmit power per user in the $j$th cell.

The received signal \eqref{eq:system_model} can now be expressed as
\begin{equation}
y_{j,m}=\sum_{\ell=1}^L  \sum_{k=1}^{K}  {\bf h}_{\ell,j,m}^{\mbox{\tiny H}} {\bf g}_{\ell,k} s_{\ell,k}  + b_{j,m}.
\end{equation}
A well-known feature of large-scale MIMO systems is the channel hardening, which means that the effective useful channel $\bh_{j,j,m}^{\mbox{\tiny H}}{\bf g}_{j,m}$ of a UT converges to its average value when $M$ grows large. Hence, it is sufficient for each UT to have only statistical CSI and the performance loss vanishes as $M \rightarrow \infty$ \cite{Hoydis2013a}. An ergodic achievable information rate can be computed using a technique from \cite{Medard2000a}, which has been applied to large-scale MIMO systems in \cite{Marzetta2010a,jose,Hoydis2013a} (among many others).
The main idea is to decompose the received signal as
\begin{align*}
	{y}_{j,m}&=\mathbb{E}\left[\bh_{j,j,m}^{\mbox{\tiny H}}{\bf g}_{j,m}\right]{s}_{j,m}\\
		       &+\left(\bh_{j,j,m}^{\mbox{\tiny H}}{\bf g}_{j,m}-\mathbb{E}\left[\bh_{j,j,m}^{\mbox{\tiny H}}{\bf g}_{j,m}\right]\right) s_{j,m}\\
		       &+\sum_{(\ell,k) \neq (j,m)}\bh_{\ell,j,m}^{\mbox{\tiny H}}{\bf g}_{\ell,k}s_{\ell,k}+b_{j,m}
\end{align*}
and assume that the  channel gain $\mathbb{E}\left[ \left|\bh_{j,j,m}^{\mbox{\tiny H}}{\bf g}_{j,m} \right|^2 \right]$ is known at the corresponding UT, along with its variance ${\rm var}\left[{\bf h}_{j,j,m}^{\mbox{\tiny H}}{\bf g}_{j,m}\right] = \mathbb{E}\left[ \left| \bh_{j,j,m}^{\mbox{\tiny H}}{\bf g}_{j,m}-\mathbb{E}\left[\bh_{j,j,m}^{\mbox{\tiny H}}{\bf g}_{j,m}\right] \right|^2 \right]$ and the average sum interference power $\sum_{ (\ell,k) \neq (j,m) }\mathbb{E}[|{\bf h}_{\ell,j,m}^{\mbox{\tiny H}}{\bf g}_{\ell,k} |^2 ]$ caused by simultaneous transmissions to other UTs in the same and other cells.
By treating the inter-user interference (from the same and other cells) and channel uncertainty as worst-case Gaussian noise, UT $m$ in cell $j$ can  achieve the ergodic rate
$$r_{j,m} = \log_2 ( 1+ {\gamma}_{j,m})$$
without knowing the instantaneous values of ${\bf h}_{\ell,j,m}^{\mbox{\tiny H}} {\bf g}_{\ell, k}$ of its channel \cite{Medard2000a,Marzetta2010a,jose,Hoydis2013a}. The parameter $\gamma_{j,m}$ is given in \eqref{eq:gamma_jm} at the top of the next page and can be interpreted as the effective average SINR of the $m$th UT in the $j$th cell.

The last expression in \eqref{eq:gamma_jm} is obtained by using the following identities:
\begin{align*}
	{\rm var}({\bf h}_{j,j,m}^{\mbox{\tiny H}}{\bf g}_{j,m})&=\mathbb{E}\left[\left|{\bf h}_{j,j,m}^{\mbox{\tiny H}}{\bf g}_{j,m}\right|^2\right]\\
								&\quad -\left|\mathbb{E}\left[{\bf h}_{j,j,m}^{\mbox{\tiny H}}{\bf g}_{j,m}\right]\right|^2, \\
	\sum_{(\ell,k)\neq(j,m)} \mathbb{E}\left[\left|{\bf h}_{\ell,j,m}^{\mbox{\tiny H}}{\bf g}_{\ell,k}\right|^2\right]&=\sum_{\ell,k}\mathbb{E}\left[\left|{\bf h}_{\ell,j,m}^{\mbox{\tiny H}}{\bf g}_{\ell,k}\right|^2\right]\\				  &\quad -\mathbb{E}\left[\left|{\bf h}_{j,j,m}^{\mbox{\tiny H}}{\bf g}_{j,m}\right|^2\right].
\end{align*}

\begin{figure*}
\begin{align}
{\gamma}_{j,m}=\frac{\left|\E\left[{\bf h}_{j,j,m}^{\mbox{\tiny H}}{\bf g}_{j,m}\right]\right|^2}{\sigma^2+{\rm var}\left[{\bf h}_{j,j,m}^{\mbox{\tiny H}}{\bf g}_{j,m}\right]+\displaystyle{\sum_{
(\ell,k) \neq (j,m) } \mathbb{E}\left[\left|{\bf h}_{\ell,j,m}^{\mbox{\tiny H}}{\bf g}_{\ell,k}\right|^2\right]}} 
						       =\frac{\left|\E\left[{\bf h}_{j,j,m}^{\mbox{\tiny H}}{\bf g}_{j,m}\right]\right|^2}{\sigma^2+
\displaystyle{\sum_{\ell,k}} \, \mathbb{E}\left[\left|{\bf h}_{\ell,j,m}^{\mbox{\tiny H}}{\bf g}_{\ell,k}\right|^2\right]-\left|\mathbb{E}\left[{\bf h}_{j,j,m}^{\mbox{\tiny H}}{\bf g}_{j,m}\right]\right|^2}.
\label{eq:gamma_jm}
\end{align}
\hrulefill
\vspace*{2pt}
\end{figure*}

The achievable rates only depend on the statistics of the inner products ${\bf h}_{\ell,j,m}^{\mbox{\tiny H}}{\bf g}_{\ell,k}$ of the channel vectors and precoding vectors. The precoding vectors ${\bf g}_{j,m}$ should ideally be selected to achieve a strong signal gain and little inter-user and inter-cell interferences. This requires some instantaneous CSI at the BS, as described next.

\subsection{Model of Imperfect Channel State Information at BSs}

Based on the TDD protocol, uplink pilot transmissions are utilized to acquire instantaneous CSI at each BS.
Each UT in a cell transmits a mutually orthogonal pilot sequence, which allows its BS to estimate the channel to this user. Due to the limited channel coherence interval of fading channels, the same set of orthogonal sequences is reused in each cell; thus, the channel estimate is corrupted by pilot contamination emanating from neighboring cells \cite{Marzetta2010a}. When estimating the channel of UT $k$ in cell $j$, the corresponding BS takes its received pilot signal and correlates it with the pilot sequence of this UT. This results in the processed received signal
$$
{\bf y}_{j,k}^{\rm tr}={\bf h}_{j,j,k}+\sum_{\ell\neq j}{\bf h}_{j,\ell,k}+\frac{1}{\sqrt{\rho_{\rm tr}}}{\bf b}_{j,k}^{\rm tr}
$$
where ${\bf b}_{j,k}^{\rm tr}\sim \mathcal{CN}({\bf 0}_{M\times 1}, {\bf I}_M)$ and $\rho_{\rm tr}>0$ is the effective training SNR \cite{Hoydis2013a}.
The MMSE estimate $\wbh_{j,j,k}$ of $\bh_{j,j,k}$ is given as \cite{Bjornson2010a}:
\begin{align*}
\wbh_{j,j,k}&={\bf R}_{j,j,k}{\bf S}_{j,k}  {\bf y}_{j,k}^{\rm tr} \\
&={\bf R}_{j,j,k}{\bf S}_{j,k}\left(\sum_{\ell=1}^L {\bf h}_{j,\ell,k}+\frac{1}{\sqrt{\rho_{\rm tr}}}{\bf b}_{j,k}^{\rm tr}\right)
\end{align*}
where
$$
{\bf S}_{j,k}=\left(\frac{1}{\rho_{\rm tr}}{\bf I}_M+\sum_{\ell=1}^{L} {\bf R}_{j,\ell,k}\right)^{-1}  \quad \forall j,k
$$
and ${\bf R}_{j,j,k}$ is the channel covariance matrix of vector ${\bf h}_{j,j,k}$, as described in Assumption~A-\ref{ass:channel}.
The estimated channels from the $j$th BS to all UTs in its cell is denoted
\begin{equation}
\widehat{\bf H}_{j,j} = \left[ \wbh_{j,j,1} \, \ldots \, \wbh_{j,j,K} \right] \in \mathbb{C}^{M \times K}
\end{equation}
and will be used in the precoding schemes considered herein.

For notational convenience, we define the matrices
$$
\bPhi_{j,\ell,k}={\bf R}_{j,j,k}{\bf S}_{j,k}{\bf R}_{j,\ell,k}
$$
and note that $\wbh_{j,j,k}\sim\mathcal{CN}({\bf 0}_{M\times 1},\boldsymbol{\Phi}_{j,j,k})$ since the channels are Rayleigh fading and the MMSE estimator is used.



\section{Review on Regularized Zero-Forcing Precoding}
\label{sec:review-RZF}

The optimal linear precoding (in terms of maximal weighted sum rate or other criteria) is unknown under imperfect CSI and requires extensive optimization procedures under perfect CSI \cite{Bjornson2013d}. Therefore, only heuristic precoding schemes are feasible in fading multi-cell systems. Regularized zero-forcing (RZF) is a state-of-the-art heuristic scheme with a simple closed-form precoding expression \cite{PEE05,WAG10,Hoydis2013a}. The popularity of this scheme is easily seen from its many alternative names: transmit Wiener filter \cite{Joham2005a}, signal-to-leakage-and-noise ratio maximizing beamforming \cite{Sadek2007a}, generalized eigenvalue-based beamformer \cite{Stridh2006a}, and virtual SINR maximizing beamforming \cite{Bjornson2010c}. This section provides a brief review of prior performance results on RZF precoding in large-scale multi-cell MIMO systems. We also explain why RZF is computationally intractable to implement in practical large systems.

Based on the notation in \cite{Hoydis2013a}, the RZF precoding matrix used by the BS in the $j$th cell is
\begin{equation} \label{eq:RZF-definition}
{\bf G}_j^{\rm rzf}=\sqrt{K}\beta_j\left(\widehat{\bf H}_{j,j}\widehat{\bf H}_{j,j}^{\mbox{\tiny H}}+{\bf Z}_j+K\varphi_j{\bf I}_M\right)^{-1}\widehat{\bf H}_{j,j}
\end{equation}
where the scaling parameter $\beta_j$ is set so that the power constraint $\frac{1}{K}\tr \left( {\bf G}_j{\bf G}_j^{\mbox{\tiny H}} \right) =P_j$ in \eqref{eq:power-constraint} is fulfilled. The regularization parameters $\varphi_j$ and ${\bf Z}_j$ have the following properties.

\begin{assumption}
		\label{ass:regularization}
The regularizing parameter  $\varphi_j$ is strictly positive $\varphi_j>0$, for all  $j$.
The matrix ${\bf Z}_j$ is a deterministic Hermitian nonnegative definite matrix that satisfies $\lim\sup_{N} \frac{1}{N} \|{\bf Z}_j\|<+\infty$, for all $ j$.
\end{assumption}

Several prior works have considered the optimization of the parameter $\varphi_j$ in the single-cell case \cite{WAG10,Evans08} when ${\bf Z}_j={\bf 0}_{M\times M}$. This parameter provides a balance between maximizing the channel gain at each intended receiver (when $\varphi_j$ is large) and suppressing the inter-user interference (when $\varphi_j$ is small), thus $\varphi_j$  depends on the SNRs, channel uncertainty at the BSs, and the system dimensions \cite{PEE05,WAG10}. Similarly, the deterministic matrix ${\bf Z}_j$ describes a subspace where interference will be suppressed; for example, this can be the joint subspace spanned by (statistically) strong channel directions to users in neighboring cells, as proposed in \cite{Hosseini2013a}. The optimization of these two regularization parameters is a difficult problem in general multi-cell scenarios. To the authors' knowledge, previous works dealing with the multi-cell scenario have been restricted to considering intuitive choices of the regularizing parameters ${\varphi_j}$ and ${\bf Z}_j$. For example, this was recently done in \cite{Hoydis2013a}, where the performance of the RZF precoding was analyzed in the following asymptotic regime.


\begin{assumption}
	\label{ass:regime}
In the large-$(M,K)$ regime, $M$ and $K$ tend to infinity such that
	$$
	0< \liminf \frac{K}{M} \leq \limsup \frac{K}{M} < +\infty.
	$$	
\end{assumption}

In particular, it was shown in \cite{Hoydis2013a} that the SINRs perceived by the users tend to deterministic quantities in the large-$(M,K)$ regime. These quantities depend only on the statistics of the channels and are referred to as \emph{deterministic equivalents}.

In the sequel, by deterministic equivalent of a sequence of random variables $X_n$, we mean a deterministic sequence $\overline{X}_n$ which approximates $X_n$ such that
\begin{equation}
\E [X_n]-\overline{X}_n\xrightarrow[n\to+\infty]{}0.
\end{equation}

Before reviewing some results from \cite{Hoydis2013a}, we shall recall some  deterministic equivalents that play a key role in the next analysis. They are introduced in the following theorem.\footnote{We have chosen to work a slightly different definition of the deterministic equivalents than in \cite{Hoydis2013a}, since it fits better the analysis of our proposed precoding.}

\begin{theorem}[Theorem 1 in \cite{WAG10}]
\label{th:deterministic_1}
	Let ${\bf U}\in \mathbb{C}^{M\times M}$ have uniformly bounded spectral norm. Assume that matrix ${\bf Z}$ satisfies Assumption A-\ref{ass:regularization}. Let ${\bf H}\in\mathbb{C}^{M\times K}$ be a random matrix with independent column vectors ${\bf h}_j\sim \mathcal{CN}({\bf 0}_{M\times 1},{\bf R}_j)$ while the sequence of deterministic matrices ${\bf R}_j$  have uniformly bounded spectral norms. Denote by $\mathcal{R}$, the sequence of random matrices $\mathcal{R}=\left({\bf R}_k\right)_{k=1,\ldots,K}$ and by $\boldsymbol{\Sigma}(t)$ the resolvent matrix
	$$
	\boldsymbol{\Sigma}(t)=\left(\frac{t{\bf H}{\bf H}^{\mbox{\tiny H}}}{K}+\frac{t{\bf Z}}{K}+{\bf I}_M\right)^{-1}.
	$$
	Then, for any $t>0$ it holds that
	$$
	\frac{1}{K}\tr \left( {\bf U}\boldsymbol{\Sigma} \right) -\frac{1}{K}\tr \big( {\bf U}{\bf T}(t,\mathcal{R},{\bf Z})  \big) \xrightarrow[M,K\to+\infty]{\mathrm{a.s.}}0
	$$
	where ${\bf T}(t,\mathcal{R},{\bf Z})\in \mathbb{C}^{M\times M}$ is defined as
	$$
	{\bf T}(t,\mathcal{R},{\bf Z})=\left(\frac{1}{K}\sum_{k=1}^K \frac{t{\bf R}_{k}}{1+t\delta_k(t,\mathcal{R},{\bf Z})}+t\frac{1}{K}{\bf Z}+{\bf I}_M\right)^{-1}
	$$
	and the elements of $\boldsymbol{\delta}(t,\mathcal{R},{\bf Z})=\left[\delta_1(t,\mathcal{R},{\bf Z}),\ldots,\delta_K(t,\mathcal{R},{\bf Z})\right]^{\mbox{\tiny T}}$ are solutions to the following system of equations:
	\begin{align*}
		&\delta_{k}(t,\mathcal{R},{\bf Z})\\
						 &=\frac{1}{K}\tr \left( {\bf R}_k\left(\frac{1}{K}\sum_{j=1}^K\frac{t{\bf R}_j}{1+t\delta_j(t,\mathcal{R},{\bf Z})}+\frac{t}{K}{\bf Z}+{\bf I}_M\right)^{-1} \right).
\end{align*}
\end{theorem}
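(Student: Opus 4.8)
The plan is to follow the by-now standard deterministic-equivalent machinery for Gram matrices with a covariance profile (this statement being Theorem~1 of \cite{WAG10}), organized into four steps. First I would establish that the fixed-point system defining $\boldsymbol{\delta}(t,\mathcal{R},{\bf Z})$ has a unique solution with strictly positive entries. Writing its right-hand side as a map $\boldsymbol{\delta}\mapsto{\bf f}(\boldsymbol{\delta})$ on $(0,\infty)^K$, I would verify that ${\bf f}$ is a standard interference function (positivity, monotonicity, scalability)---equivalently a contraction for a suitable metric on the positive orthant---so that existence, uniqueness, and convergence of the iterates $\boldsymbol{\delta}^{(n+1)}={\bf f}(\boldsymbol{\delta}^{(n)})$ all follow. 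The hypotheses $\limsup_M\|{\bf R}_k\|<\infty$, $\liminf_M\frac1M\tr({\bf R}_k)>0$, and $\limsup\frac1K\|{\bf Z}\|<\infty$ then give uniform upper bounds and uniform lower bounds bounded away from zero on the $\delta_k$.

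Second, I would introduce the rank-one perturbed resolvents $\boldsymbol{\Sigma}_{(k)}=\left(\frac{t}{K}\sum_{j\neq k}{\bf h}_j{\bf h}_j^{\mbox{\tiny H}}+\frac{t}{K}{\bf Z}+{\bf I}_M\right)^{-1}$ and relate $\boldsymbol{\Sigma}$ to $\boldsymbol{\Sigma}_{(k)}$ via the Sherman--Morrison identity. Combined with the trace lemma---$ {\bf h}_k^{\mbox{\tiny H}}{\bf A}{\bf h}_k-\tr({\bf R}_k{\bf A})\to0 $ almost surely whenever ${\bf A}$ is independent of ${\bf h}_k$ with bounded spectral norm---and the rank-one trace bound $|\tr({\bf B}(\boldsymbol{\Sigma}-\boldsymbol{\Sigma}_{(k)}))|\le\|{\bf B}\|$, this shows that the empirical co-resolvent quantities $e_k:=\frac1K\tr({\bf R}_k\boldsymbol{\Sigma}_{(k)})$ satisfy, uniformly in $k$, the same fixed-point relations as the $\delta_k$ up to a vanishing error.

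Third---and this is the technical heart---I would close the fixed point: subtracting the two sets of relations and using that $x\mapsto t{\bf R}_k/(1+tx)$ is Lipschitz on the relevant positive interval yields a self-bounding inequality for $\max_k|e_k-\delta_k|$, which is closed by showing that the associated propagation operator has spectral radius bounded away from $1$ (again a consequence of the uniform spectral-norm bounds). This gives $\max_k|e_k-\delta_k|\to0$ almost surely.

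Finally I would pass to the bilinear form by writing $\boldsymbol{\Sigma}-{\bf T}=\boldsymbol{\Sigma}({\bf T}^{-1}-\boldsymbol{\Sigma}^{-1}){\bf T}$ with ${\bf T}^{-1}-\boldsymbol{\Sigma}^{-1}=\frac1K\sum_k\left(\frac{t{\bf R}_k}{1+t\delta_k}-t{\bf h}_k{\bf h}_k^{\mbox{\tiny H}}\right)$, and estimating $\frac1K\tr({\bf U}(\boldsymbol{\Sigma}-{\bf T}))$ term by term with Sherman--Morrison, the trace lemma, and the third step; the bounded matrix ${\bf U}$ simply rides along because $\|\boldsymbol{\Sigma}\|\le1$ and $\|{\bf T}\|\le1$. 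This delivers convergence in mean, and a martingale-difference decomposition along the columns ${\bf h}_1,\dots,{\bf h}_K$ together with the Burkholder inequality and Borel--Cantelli upgrades it to almost sure convergence. I expect the uniform control $\max_k|e_k-\delta_k|\to0$ of the third step to be the main obstacle: once that stability of the fixed point is secured, every remaining estimate is a routine linear perturbation.
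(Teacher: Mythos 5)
Your outline is correct, but note that the paper itself offers no proof of this statement: it is quoted verbatim as Theorem~1 of \cite{WAG10}, and your four steps (unique positive fixed point via Yates-type interference-function/iterative arguments, Sherman--Morrison plus the trace and rank-one perturbation lemmas that also appear as Lemmas~\ref{lemma:quadratic} and \ref{lemma:perturbation} in the appendix, stability of the fixed point to get $\max_k|e_k-\delta_k|\to 0$, and moment bounds with Borel--Cantelli for the bilinear form) reconstruct essentially the same route taken in that reference. The only slight imprecision is calling the interference-function property ``equivalently a contraction''---it is not a contraction in general, but the uniqueness/convergence conclusion you draw from it is the standard and correct one.
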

Theorem \ref{th:deterministic_1} shows how to approximate quantities with only one occurrence of the resolvent matrix $\bsigma(t)$. For many situations, this kind of result is sufficient to entirely characterize the asymptotic SINR, in  particular when dealing with the performance  of linear receivers  \cite{hoydis-asilomar,kammoun09}. However, when precoding is considered, random terms  involving two resolvent matrices arise, a case which is out of the scope of Theorem~\ref{th:deterministic_1}. For that, we recall the following result from \cite{WAG10}
which establishes deterministic equivalents for this kind of quantities.

\begin{theorem}[\cite{WAG10}]
\label{th:deterministic_2}
	Let $\boldsymbol{\Theta}\in \mathbb{C}^{M\times M}$ be Hermitian nonnegative definite with uniformly bounded spectral norm. Consider the setting of Theorem \ref{th:deterministic_1}. Then,
\begin{align*}
	&\frac{1}{K}\tr \left( {\bf U}\boldsymbol{\Sigma}(t)\boldsymbol{\Theta}\boldsymbol{\Sigma}(t) \right) - \frac{1}{K} \tr \left( {\bf U}\overline{\bf T}(t,\mathcal{R},{\bf Z},\boldsymbol{\Theta}) \right) \xrightarrow[M,K\to+\infty]{\mathrm{a.s.}} 0
\end{align*}
where
$$
\overline{\bf T}(t,\mathcal{R},{\bf Z},\boldsymbol{\Theta})={\bf T}\boldsymbol{\Theta}{\bf T}+t^2{\bf T}\frac{1}{K}\sum_{k=1}^K \frac{{\bf R}_k\overline{\delta}_k(t,\mathcal{R},{\bf Z},\boldsymbol{\Theta})}{(1+t\delta_k)^2}{\bf T},
$$
${\bf T}={\bf T}(t,\mathcal{R},{\bf Z})$, and $\boldsymbol{\delta}=\boldsymbol{\delta}(t,\mathcal{R},{\bf Z})$ are given by Theorem~\ref{th:deterministic_1}, and $\boldsymbol{\overline{\delta}}(t,\mathcal{R},{\bf Z},\boldsymbol{\Theta})=\left[\overline{\delta}_1(t,\mathcal{R},{\bf Z},\boldsymbol{\Theta}),\ldots,\overline{\delta}_K(t,\mathcal{R},{\bf Z},\boldsymbol{\Theta})\right]^{\mbox{\tiny T}}$ is computed as
$$
\boldsymbol{\overline{\delta}}=\left({\bf I}_K-t^2{\bf J}\right)^{-1}{\bf v}
$$
where ${\bf J}\in \mathbb{C}^{K\times K}$ and ${\bf v}\in\mathbb{C}^{K\times 1}$ are defined as
\begin{align*}
	\left[{\bf J}\right]_{k,\ell}&=\frac{\frac{1}{K}\tr \left( {\bf R}_k{\bf T}{\bf R}_\ell{\bf T} \right) }{K(1+t\delta_\ell)^2}, \ \  1\leq k,\ell\leq K \\
	\left[{\bf v}\right]_k&=\frac{1}{K} \tr \left( {\bf R}_k{\bf T}\boldsymbol{\Theta}{\bf T} \right), \ \ 1\leq k\leq K.
\end{align*}
\end{theorem}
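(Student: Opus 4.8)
The plan is to deduce Theorem~\ref{th:deterministic_2} from Theorem~\ref{th:deterministic_1} by a perturbation-and-differentiation argument, exploiting the fact that $\bsigma(t)\boldsymbol{\Theta}\bsigma(t)$ is, up to sign, a directional derivative of a resolvent. First I would introduce, for $\alpha$ in a small complex neighbourhood of $0$ (small enough that the matrix below is invertible), the perturbed resolvent
$$
\bsigma(t,\alpha)=\left(\frac{t{\bf H}{\bf H}^{\mbox{\tiny H}}}{K}+\frac{t{\bf Z}}{K}+{\bf I}_M+\alpha\boldsymbol{\Theta}\right)^{-1},
$$
which is well-defined and holomorphic in $\alpha$ near $0$ since the Hermitian part of its argument stays $\succeq\tfrac12{\bf I}_M$. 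Because $\frac{\partial}{\partial\alpha}\bsigma(t,\alpha)=-\bsigma(t,\alpha)\boldsymbol{\Theta}\bsigma(t,\alpha)$ and $\bsigma(t,0)=\bsigma(t)$, one gets the exact identity
$$
\frac{1}{K}\tr\!\left({\bf U}\bsigma(t)\boldsymbol{\Theta}\bsigma(t)\right)=-\left.\frac{\partial}{\partial\alpha}\right|_{\alpha=0}\frac{1}{K}\tr\!\left({\bf U}\bsigma(t,\alpha)\right).
$$

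Next I would apply a mild generalization of Theorem~\ref{th:deterministic_1}. Since ${\bf I}_M+\alpha\boldsymbol{\Theta}$ is Hermitian with spectral norm bounded above and smallest eigenvalue bounded below uniformly for small $\alpha$, the proof of Theorem~\ref{th:deterministic_1} carries over with ${\bf I}_M$ replaced by ${\bf I}_M+\alpha\boldsymbol{\Theta}$: there is a deterministic ${\bf T}(t,\alpha)=\big(\frac1K\sum_k\frac{t{\bf R}_k}{1+t\delta_k(t,\alpha)}+\frac{t}{K}{\bf Z}+{\bf I}_M+\alpha\boldsymbol{\Theta}\big)^{-1}$ with $\delta_k(t,\alpha)=\frac1K\tr({\bf R}_k{\bf T}(t,\alpha))$ such that $\frac1K\tr({\bf U}\bsigma(t,\alpha))-\frac1K\tr({\bf U}{\bf T}(t,\alpha))\to0$ a.s., reducing to Theorem~\ref{th:deterministic_1} at $\alpha=0$. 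Setting $\bar\delta_k:=-\frac{\partial}{\partial\alpha}\delta_k(t,\alpha)|_{\alpha=0}$ and differentiating the fixed-point expression for ${\bf T}(t,\alpha)$ by the chain rule at $\alpha=0$ gives
$$
\left.\frac{\partial}{\partial\alpha}\right|_{\alpha=0}{\bf T}(t,\alpha)=-{\bf T}\boldsymbol{\Theta}{\bf T}-t^2{\bf T}\Big(\frac1K\sum_k\frac{{\bf R}_k\bar\delta_k}{(1+t\delta_k)^2}\Big){\bf T}=-\overline{\bf T}(t,\mathcal R,{\bf Z},\boldsymbol{\Theta}).
$$
Taking $\frac1K\tr({\bf R}_k\,\cdot\,)$ of this identity and using $\bar\delta_k=-\frac{\partial}{\partial\alpha}\delta_k|_{\alpha=0}=\frac1K\tr({\bf R}_k\overline{\bf T})$ reproduces exactly $\bar\delta_k=[{\bf v}]_k+t^2[{\bf J}\boldsymbol{\overline{\delta}}]_k$, i.e.\ $({\bf I}_K-t^2{\bf J})\boldsymbol{\overline{\delta}}={\bf v}$; the matrix ${\bf I}_K-t^2{\bf J}$ is invertible because $\alpha\mapsto\boldsymbol{\delta}(t,\alpha)$ is well-defined and differentiable (equivalently, the spectral radius of $t^2{\bf J}$ is strictly below $1$, by the standard contraction argument used to set up the $\delta_k$ system). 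Combining the three displays, $\frac1K\tr({\bf U}\bsigma(t)\boldsymbol{\Theta}\bsigma(t))$ and $\frac1K\tr({\bf U}\overline{\bf T})$ have the same asymptotic value, which is the claim.

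The step I expect to be the main obstacle is the interchange of the large-$(M,K)$ limit with the $\alpha$-derivative: the above only delivers the theorem once we know that a.s.\ convergence of $\frac1K\tr({\bf U}\bsigma(t,\alpha))-\frac1K\tr({\bf U}{\bf T}(t,\alpha))$ for each fixed $\alpha$ forces convergence of the derivatives at $\alpha=0$. The remedy is an analyticity/normal-families argument: both $\alpha\mapsto\frac1K\tr({\bf U}\bsigma(t,\alpha))$ and $\alpha\mapsto\frac1K\tr({\bf U}{\bf T}(t,\alpha))$ are holomorphic on a fixed small disc around $0$ and uniformly bounded there (since $\|\bsigma(t,\alpha)\|$ and $\|{\bf T}(t,\alpha)\|$ are bounded uniformly in $M,K$ and $\alpha$ on that disc). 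Upgrading the a.s.\ statement of Theorem~\ref{th:deterministic_1} to hold simultaneously on a countable dense subset of the disc and invoking a Vitali/Montel-type theorem, the convergence is uniform on compact subsets, and hence the derivatives converge; this is precisely what legitimises differentiating through the deterministic equivalent. The remaining bookkeeping — verifying that the generalized Theorem~\ref{th:deterministic_1} with ${\bf I}_M+\alpha\boldsymbol{\Theta}$ is available, and carrying out the chain-rule differentiation cleanly — is routine.
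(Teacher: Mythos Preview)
The paper does not actually prove Theorem~\ref{th:deterministic_2}; it is quoted verbatim from \cite{WAG10} as a known result and no argument is supplied here. So there is no ``paper's own proof'' to compare your proposal against.

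That said, your approach is a correct and standard route to this type of statement. The perturbation--differentiation device (embedding $\boldsymbol{\Theta}$ into a one-parameter family of resolvents and recovering $\bsigma\boldsymbol{\Theta}\bsigma$ as $-\partial_\alpha\bsigma(t,\alpha)|_{\alpha=0}$), followed by differentiating the fixed-point system for ${\bf T}(t,\alpha)$ and $\delta_k(t,\alpha)$, does reproduce exactly the formulas for $\overline{\bf T}$ and the linear system $({\bf I}_K-t^2{\bf J})\boldsymbol{\overline{\delta}}={\bf v}$. Your identification of the only nontrivial step --- justifying the exchange of the large-$(M,K)$ limit with $\partial_\alpha$ via a normal-families/Vitali argument on a small disc in $\alpha$ --- is also right, and is precisely the mechanism the present paper itself invokes later (Appendix~\ref{app:derivation}, Montel's theorem) to pass from convergence of analytic functions to convergence of their derivatives. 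One minor wording issue: for complex $\alpha$ the matrix ${\bf I}_M+\alpha\boldsymbol{\Theta}$ is not Hermitian, so drop that adjective; what you actually need (and already state) is that its Hermitian part stays $\succeq\tfrac12{\bf I}_M$ for $|\alpha|$ small, which gives the uniform bound on $\|\bsigma(t,\alpha)\|$ and $\|{\bf T}(t,\alpha)\|$ required for the Vitali step.
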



\begin{remark}
	Note that the elements $\overline{\delta}_\ell$ are deterministic equivalents of $\frac{1}{K}\tr \left( {\bf R}_\ell \boldsymbol{\Sigma}(u)\boldsymbol{\Theta}\boldsymbol{\Sigma}(t) \right)$ in the sense that
	$$
	\frac{1}{K}\tr \left( {\bf R}_\ell \boldsymbol{\Sigma}(u)\boldsymbol{\Theta}\boldsymbol{\Sigma}(t) \right) - \overline{\delta}_\ell \xrightarrow[M,K\to+\infty]{\mathrm{a.s.}}0.
	$$
	Also, one can check that $\left(\overline{\delta}_k\right)_{k=1}^K$ is to $\overline{\bf T}$ as  $\left(\delta_k\right)_{k=1}^K$ is to ${\bf T}$, since
	$$
	\delta_k=\frac{1}{K}\tr \left( {\bf R}_k{\bf T}  \right) \ \ \textnormal{and}\ \ \overline{\delta}_k=\frac{1}{K} \tr \left( {\bf R}_k\overline{\bf T} \right).
	$$
\end{remark}
The performance of RZF precoding depends on a sequence of deterministic equivalents which we denote by $\left({\bf T}_\ell\right)_{\ell=1}^{L}$ and $\left(\overline{\bf T}_\ell\right)_{\ell=1}^L$. These are defined as
\begin{align*}
	{\bf T}_\ell&={\bf T}\left(\frac{1}{\varphi_\ell},\left(\boldsymbol{\Phi}_{\ell,\ell,k}\right)_{k=1}^K,{\bf Z}_\ell\right), \ \ell=1,\ldots,L \\
	\overline{\bf T}_\ell&=\overline{\bf T}\left(\frac{1}{\varphi_\ell},\left(\boldsymbol{\Phi}_{\ell,\ell,k}\right)_{k=1}^K,{\bf Z}_\ell,\frac{1}{\varphi_\ell}{\bf Z}_\ell+{\bf I}_M\right), \ \ell=1,\ldots,L.
\end{align*}
We are now in position to state the result establishing the convergence of the SINRs with RZF precoding. 
\begin{theorem}[Simplified from \cite{Hoydis2013a}]
	\label{th:theorem_rzf}
Denote by $\overline{\beta}_j$, $\theta_{\ell,j,m}$, $\kappa_{\ell,j,m}$, $\overline{\theta}_{\ell,j,m}$ and $\overline{\kappa}_{\ell,j,m}$ the deterministic quantities given by
\begin{align*}
	\overline{\beta}_j&=\frac{1}{\frac{1}{\varphi_j}\frac{1}{K}\tr ({\bf T}_j) - \frac{1}{K\varphi_j}\tr (\overline{\bf T}_j) } \\
	{\theta}_{\ell,j,m}&=\frac{1}{K}\tr ({\bf R}_{\ell,j,m}{\bf T}_\ell) \\
	\overline{\theta}_{\ell,j,m}&=\frac{1}{K}\tr ({\bf R}_{\ell,j,m}\overline{\bf T}_\ell) \\
	\kappa_{\ell,j,m}&=\frac{1}{K}\tr (\bPhi_{\ell,j,m}{\bf T}_\ell) \\
	\overline{\kappa}_{\ell,j,m}&=\frac{1}{K}\tr (\bPhi_{\ell,j,m}\overline{\bf T}_\ell) \\
	\zeta_{j,m}&=\frac{1}{\varphi_j+\delta_{j,m}}.
	\end{align*}
	The SINR at the $m$th user in the $j$th cell converges to $\overline{\gamma}_{j,m}$, where $\overline{\gamma}_{j,m}$ is given in \eqref{eq:gamma_jm_new} at the top of the next page.
	\begin{figure*}
		\begin{equation} \overline{\gamma}_{j,m}=\frac{\overline{\beta}_j(\delta_{j,m}\zeta_{j,m})^2}{\left(\displaystyle\sum_{\ell=1}^L\frac{\overline{\beta}_\ell}{\varphi_\ell}(\theta_{\ell,j,m}-\zeta_{\ell,m}\kappa_{\ell,j,m}^2)-\frac{\overline{\beta}_\ell}{\varphi_\ell}\overline{\theta}_{\ell,j,m}+\frac{2\overline{\beta}_\ell}{\varphi_\ell}\overline{\kappa}_{\ell,j,m}\kappa_{\ell,j,m}\zeta_{\ell,m}-\frac{\overline{\beta}_\ell}{\varphi_\ell}\kappa_{\ell,j,m}^2\overline{\delta}_{\ell,m}\zeta_{\ell,m}^2\right)-\overline{\beta}_j(\delta_{j,m}\zeta_{j,m})^2}.
	\label{eq:gamma_jm_new}
\end{equation}
\hrulefill
\vspace*{2pt}
\end{figure*}
\end{theorem}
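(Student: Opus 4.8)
The plan is to reduce the effective SINR $\gamma_{j,m}$ in \eqref{eq:gamma_jm} to a handful of scalar functionals of the resolvents associated with the channel estimates, and then invoke Theorems~\ref{th:deterministic_1} and~\ref{th:deterministic_2}. First I would rewrite the RZF precoder in resolvent form: since $\bigl(\widehat{\bf H}_{\ell,\ell}\widehat{\bf H}_{\ell,\ell}^{\mbox{\tiny H}}+{\bf Z}_\ell+K\varphi_\ell{\bf I}_M\bigr)^{-1}=\frac{1}{K\varphi_\ell}\bsigma_\ell$ with $\bsigma_\ell=\bigl(\frac{\widehat{\bf H}_{\ell,\ell}\widehat{\bf H}_{\ell,\ell}^{\mbox{\tiny H}}}{K\varphi_\ell}+\frac{{\bf Z}_\ell}{K\varphi_\ell}+{\bf I}_M\bigr)^{-1}$, one gets ${\bf g}_{\ell,k}=\frac{\beta_\ell}{\sqrt{K}\varphi_\ell}\bsigma_\ell\wbh_{\ell,\ell,k}$, and $\bsigma_\ell$ is precisely the resolvent of Theorem~\ref{th:deterministic_1} with $t=1/\varphi_\ell$, ${\bf H}=\widehat{\bf H}_{\ell,\ell}$ and $\mathcal{R}=(\bPhi_{\ell,\ell,k})_{k=1}^{K}$, so its deterministic equivalent is ${\bf T}_\ell$. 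Substituting this into \eqref{eq:gamma_jm}, $\gamma_{j,m}$ is built from: (i) the normalizing constants $\beta_\ell^2$, fixed by the power constraint \eqref{eq:power-constraint}; (ii) the signal amplitude $\E[\wbh_{j,j,m}^{\mbox{\tiny H}}\bsigma_j\wbh_{j,j,m}]$, where one uses that the MMSE error $\bh_{j,j,m}-\wbh_{j,j,m}$ is zero-mean and independent of both $\bsigma_j$ and $\wbh_{j,j,m}$ to replace $\bh_{j,j,m}$ by its estimate; and (iii) the interference powers $\E[|\bh_{\ell,j,m}^{\mbox{\tiny H}}\bsigma_\ell\wbh_{\ell,\ell,k}|^2]$.

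Next come the two milder ingredients. For the normalization, inserting ${\bf G}_\ell=\frac{\beta_\ell}{\sqrt{K}\varphi_\ell}\bsigma_\ell\widehat{\bf H}_{\ell,\ell}$ into $\frac{1}{K}\tr({\bf G}_\ell{\bf G}_\ell^{\mbox{\tiny H}})=P_\ell$ and using the resolvent identity $\widehat{\bf H}_{\ell,\ell}\widehat{\bf H}_{\ell,\ell}^{\mbox{\tiny H}}=K\varphi_\ell\bsigma_\ell^{-1}-{\bf Z}_\ell-K\varphi_\ell{\bf I}_M$ expresses $1/\beta_\ell^2$ (up to the fixed factor $P_\ell$) through $\frac{1}{K}\tr\bsigma_\ell$ and $\frac{1}{K}\tr\bigl(\bsigma_\ell\boldsymbol{\Theta}_\ell\bsigma_\ell\bigr)$ for a bounded $\boldsymbol{\Theta}_\ell$ built from ${\bf I}_M$ and ${\bf Z}_\ell$; Theorem~\ref{th:deterministic_1} disposes of the first trace and Theorem~\ref{th:deterministic_2} of the second, so that $\beta_\ell^2$ concentrates around (a multiple of) $\overline{\beta}_\ell$. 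For the signal amplitude, a rank-one perturbation (matrix-inversion) identity extracts $\wbh_{j,j,m}$ from $\bsigma_j$: writing $\bsigma_{j,[m]}$ for the resolvent built after deleting column $m$ of $\widehat{\bf H}_{j,j}$, one has $\bsigma_j\wbh_{j,j,m}=\frac{\bsigma_{j,[m]}\wbh_{j,j,m}}{1+\frac{1}{K\varphi_j}\wbh_{j,j,m}^{\mbox{\tiny H}}\bsigma_{j,[m]}\wbh_{j,j,m}}$, and the trace lemma together with Theorem~\ref{th:deterministic_1} give $\frac{1}{K}\wbh_{j,j,m}^{\mbox{\tiny H}}\bsigma_{j,[m]}\wbh_{j,j,m}\to\delta_{j,m}$; the denominator converges to $1+\delta_{j,m}/\varphi_j=1/(\varphi_j\zeta_{j,m})$, whence $\E[\bh_{j,j,m}^{\mbox{\tiny H}}{\bf g}_{j,m}]$ concentrates around a multiple of $\delta_{j,m}\zeta_{j,m}$, yielding the numerator of \eqref{eq:gamma_jm_new}.

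The core difficulty, which I expect to be the main obstacle, is the interference powers, because $\wbh_{\ell,\ell,m}$ is correlated \emph{both} with $\bsigma_\ell$ (being a column of $\widehat{\bf H}_{\ell,\ell}$) \emph{and}, through pilot contamination, with the channel $\bh_{\ell,j,m}$ that contracts against ${\bf g}_{\ell,k}$. I would organize $\E[|\bh_{\ell,j,m}^{\mbox{\tiny H}}\bsigma_\ell\wbh_{\ell,\ell,k}|^2]$ by cases: (a) the self term $(\ell,k)=(j,m)$, reducible to a second moment of $\wbh_{j,j,m}^{\mbox{\tiny H}}\bsigma_{j,[m]}\wbh_{j,j,m}$; (b) $k=m$, $\ell\neq j$, where $\bh_{\ell,j,m}$ and $\wbh_{\ell,\ell,m}$ are jointly complex Gaussian with cross-covariance $\E[\wbh_{\ell,\ell,m}\bh_{\ell,j,m}^{\mbox{\tiny H}}]=\bPhi_{\ell,j,m}$; and (c) $k\neq m$, where $\bh_{\ell,j,m}$ and $\wbh_{\ell,\ell,k}$ are independent but $\bsigma_\ell$ still depends on $\bh_{\ell,j,m}$ through its $m$th column. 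In each case the recipe is the same: peel columns $k$ and $m$ off $\widehat{\bf H}_{\ell,\ell}$ with rank-one perturbation identities (each contributing a $\zeta_{\ell,m}$-type factor) until the surviving resolvent is independent of $\bh_{\ell,j,m}$ and $\wbh_{\ell,\ell,k}$, then apply the complex Gaussian fourth-moment (Wick) identity, which decomposes $\E[|{\bf a}^{\mbox{\tiny H}}{\bf M}{\bf b}|^2]$ into a \emph{coherent} part $|\tr({\bf M}\,\E[{\bf b}{\bf a}^{\mbox{\tiny H}}])|^2/K^2$ and an \emph{incoherent} part $\tr({\bf M}\,\E[{\bf b}{\bf b}^{\mbox{\tiny H}}]\,{\bf M}^{\mbox{\tiny H}}\,\E[{\bf a}{\bf a}^{\mbox{\tiny H}}])/K$ (the coherent part being present only when $k=m$). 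The coherent contributions carry a single resolvent and are evaluated with Theorem~\ref{th:deterministic_1}, producing $\kappa_{\ell,j,m}$ and $\theta_{\ell,j,m}$; the incoherent ones carry \emph{two} resolvents and are evaluated with Theorem~\ref{th:deterministic_2}, which after simplification produces $\overline{\theta}_{\ell,j,m}$, $\overline{\kappa}_{\ell,j,m}$ and $\overline{\delta}_{\ell,m}$ (obtained, with $\overline{\bf T}_\ell$, from the linear system $\boldsymbol{\overline{\delta}}=({\bf I}_K-t^2{\bf J})^{-1}{\bf v}$). The delicate point is the bookkeeping: one must track which resolvent factor each surviving copy of $\wbh_{\ell,\ell,m}$ sits in — hence how many $\zeta_{\ell,m}$ factors accompany each term and which coherent/incoherent cross-terms survive — and then sum over $k$ for each cell $\ell$ to assemble the bracket $\frac{\overline{\beta}_\ell}{\varphi_\ell}\bigl(\theta_{\ell,j,m}-\zeta_{\ell,m}\kappa_{\ell,j,m}^2-\overline{\theta}_{\ell,j,m}+2\overline{\kappa}_{\ell,j,m}\kappa_{\ell,j,m}\zeta_{\ell,m}-\kappa_{\ell,j,m}^2\overline{\delta}_{\ell,m}\zeta_{\ell,m}^2\bigr)$ of \eqref{eq:gamma_jm_new}.

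Finally I would assemble the pieces: the numerator of $\gamma_{j,m}$ tends to $\overline{\beta}_j(\delta_{j,m}\zeta_{j,m})^2$ and the denominator to $\sum_{\ell=1}^{L}\frac{\overline{\beta}_\ell}{\varphi_\ell}(\cdots)-\overline{\beta}_j(\delta_{j,m}\zeta_{j,m})^2$, the additive noise being of lower order than the $\mathcal{O}(K)$ interference under the per-user power normalization used here; since the denominator stays bounded away from zero, the ratio converges to $\overline{\gamma}_{j,m}$. A last technical point is that $\gamma_{j,m}$ is itself an expectation over the channels (and $\beta_\ell$ is random), so the almost-sure trace convergences of Theorems~\ref{th:deterministic_1}--\ref{th:deterministic_2} must be upgraded to convergence in mean; this follows from uniform integrability, using $\|\bsigma_\ell\|\le 1$ and standard polynomial moment bounds on quadratic forms in Gaussian vectors — precisely the route of \cite{Hoydis2013a}, of which this statement is a specialization.
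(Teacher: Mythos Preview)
The paper does not actually supply a proof of Theorem~\ref{th:theorem_rzf}: it is presented in Section~\ref{sec:review-RZF} as a review result, explicitly labeled ``Simplified from \cite{Hoydis2013a}'', and no argument is given beyond the citation. So there is no in-paper proof to compare against.

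That said, your proposal is sound and is exactly the route one would take (and that \cite{Hoydis2013a} takes): rewrite the RZF precoder through the resolvent $\bsigma_\ell$ at $t=1/\varphi_\ell$, use Lemma~\ref{lemma:woodbury} to peel off the column $\wbh_{\ell,\ell,m}$, apply the trace lemma (Lemma~\ref{lemma:quadratic}) and the rank-one perturbation bound (Lemma~\ref{lemma:perturbation}) to replace quadratic forms by normalized traces, and finally invoke Theorems~\ref{th:deterministic_1} and~\ref{th:deterministic_2} for the single- and double-resolvent traces respectively. Your case split (self term, pilot-contaminated term $k=m$, and $k\neq m$) and the coherent/incoherent Wick decomposition are the standard bookkeeping for the interference denominator. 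This is also precisely the machinery the paper deploys in Appendix~\ref{app:main_eq} for its own Theorem~\ref{th:main}, so your proposal is fully consistent with the paper's toolkit.

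One minor caution: your remark that ``the additive noise [is] of lower order than the $\mathcal{O}(K)$ interference'' is not quite the right justification here. Under the per-user power normalization \eqref{eq:power-constraint}, the noise $\sigma^2$ and the per-cell powers $P_\ell$ do survive at the same order as the interference; the stated formula \eqref{eq:gamma_jm_new} omits them only because the theorem is a ``simplified'' restatement (the full expression in \cite{Hoydis2013a} carries $\sigma^2$ and $P_\ell$ explicitly, absorbed here into the $\overline{\beta}_\ell$ normalization). This does not affect the correctness of your argument, only the final algebraic packaging.
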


\subsection{Complexity Issues of RZF Precoding}
\label{subsec:complexity-RZF}

The SINRs achieved by RZF precoding converge in the large-$(M,K)$ regime to the deterministic equivalents in Theorem \ref{th:theorem_rzf}. However, the precoding matrices are still random quantities that need to be recomputed at the same pace as the channel knowledge is updated. With the typical coherence time of a few milliseconds, we thus need to compute the large-dimensional matrix inverse in \eqref{eq:RZF-definition} hundreds of times per second. The number of arithmetic operations needed for matrix inversion scales cubically in the rank of the matrix, thus this matrix operation is intractable in large-scale systems; we refer to \cite{Kammoun2014a,Shepard2012a,zarei} for detailed complexity discussions. To reduce the implementation complexity and maintain most of the RZF performance, the low-complexity TPE precoding was proposed in \cite{Kammoun2014a} and \cite{zarei} for single-cell systems.
This new precoding scheme has two main benefits over RZF precoding: 1) the precoding matrix is not precomputed at the beginning of each coherence interval, thus there is no computational delays and the computational operations are spread out uniformly over time; 2) the precoding computation is divided into a number of simple matrix-vector multiplications which can be highly parallelized and can be implemented using a multitude of simple application-specific circuits. The next section extends this class of precoding schemes to practical multi-cell scenarios.

\section{Truncated Polynomial Expansion Precoding}

Building on the concept of truncated polynomial expansion (TPE), we now provide a new class of low-complexity linear precoding schemes for the multi-cell case. We recall that the TPE concept originates from the Cayley-Hamilton theorem which states that the inverse of a matrix ${\bf A}$ of dimension $M$ can be written as a weighted sum of its first $M$ powers:
\begin{equation*}
{\bf A}^{-1}=\frac{{(-1)}^{M-1}}{{\rm det}({\bf A})}\sum_{\ell=0}^{M-1}\alpha_\ell {\bf A}^{\ell}
\end{equation*}
where $\alpha_\ell$ are the coefficients of the characteristic polynomial.
A simplified precoding could, hence,  be obtained by taking only a truncated sum of the matrix powers. We refers to it as TPE precoding.

For ${\bf Z}_j={\bf 0}_{M\times M}$ and truncation order $J_j$, the proposed TPE precoding is given by the precoding matrix:
\begin{align} \label{eq_TPE-definition}
	{\bf G}_j^{\rm TPE}&=\sum_{n=0}^{J_j-1}w_{n,j} \left(\frac{\widehat{\bf H}_{j,j}\widehat{\bf H}_{j,j}^{\mbox{\tiny H}}}{K}\right)^{n}\frac{\widehat{\bf H}_{j,j}}{\sqrt{K}}\\	
			   &\triangleq \sum_{n=0}^{J_j-1}w_{n,j}{\bf V}_{n,j}\frac{\widehat{\bf H}_{j,j}}{\sqrt{K}} \notag
\end{align}
where
\begin{equation*}
{\bf V}_{n,j}=\left(\frac{\widehat{\bf H}_{j,j}\widehat{\bf H}_{j,j}^{\mbox{\tiny H}}}{K}\right)^n
\end{equation*}
and  $\left\{w_{n,j},j=0,\ldots,J_j-1\right\}$ are the $J_j$ scalar coefficients that are used in cell $j$. While RZF precoding
only has the design parameter $\varphi_j$, the proposed TPE precoding scheme offers a larger set of $J_j$ design parameters. These polynomial coefficients define a parameterized class of precoding schemes ranging from MRT (if $J_j=1$) to RZF precoding when $J_j=\min(M,K)$ and $w_{n,j}$ given by the coefficients based on the characteristic polynomial of $\sqrt{K}\left(\widehat{\bf H}_{j,j}\widehat{\bf H}_{j,j}+K\phi_j{\bf I}_M\right)^{-1}$. We refer to $J_j$ as the \emph{TPE order} corresponding to the $j$th cell and note that the corresponding polynomial degree in \eqref{eq_TPE-definition} is $J_j-1$. For any $J_j  < \min(M,K)$, the polynomial coefficients have to be treated as design parameters that should be selected to maximize some appropriate system performance metric \cite{Kammoun2014a}. An initial choice is
\begin{equation} \label{eq:coeff-initial}
w_{n,j}^{\text{initial}} =  \beta_j \kappa_j \sum_{m=n}^{J_j-1} \binom{m}{n} (1-\kappa_j \varphi_j)^{m-n} (-\kappa_j)^{n}
\end{equation}
where $\beta_j$ and $\varphi_j$ are as in RZF precoding, while the parameter $\kappa_j$ can take any value such that $\Big\|  {\bf I}_M - \kappa_j \Big(\frac{1}{K}\widehat{\bH}\wbHh +\varphi_j {\bf I}_M \Big)  \Big\| < 1$. This expression is obtained by calculating a Taylor expansion of the matrix inverse. The coefficients in \eqref{eq:coeff-initial} gives performance close to that of RZF precoding when $J_j \rightarrow \infty$ \cite{Kammoun2014a}. However, the optimization of the RZF precoding has not, thus far, been feasible.
Therefore, we can obtain even better performance than the suboptimal RZF, using only small TPE orders (e.g., $J_j=4$), if the coefficients are optimized with the system performance metric in mind. This optimization of the polynomial coefficients in multi-cell systems is dealt with in Subsection~\ref{sec:optimum} and the results are evaluated in Section~\ref{sec:simulations}.

A fundamental property of TPE is that $J_j$ needs not scale with the $M$ and $K$, because ${\bf A}^{-1}$ is equivalent to inverting each eigenvalue of ${\bf A}$ and the polynomial expansion effectively approximates each eigenvalue inversion by a Taylor expansion with $J_j$ terms \cite{Sessler2005a}. More precisely, this means that the approximation error per UT is only a function of $J_j$ (and not the system dimensions), which was proved for multiuser detection in \cite{Honig2001a} and validated numerically in \cite{Kammoun2014a} for TPE precoding.

\begin{remark}
	The deterministic matrix ${\bf Z}_j$ was used in RZF precoding to suppress interference in certain subspaces. Although the TPE precoding in \eqref{eq_TPE-definition} was derived for the special case of ${\bf Z}_j={\bf 0}_{M\times M}$, the analysis can easily be extended for arbitrary ${\bf Z}_j$. To show this, we define the rotated channels $\tilde{\bf h}_{\ell,j,m} = (\frac{{\bf Z}_j}{K}+\varphi_j{\bf I}_M )^{-1/2} {\bf h}_{\ell,j,m} \sim \mathcal{CN}({\bf 0}_{M\times 1}, (\frac{{\bf Z}_j}{K}+\varphi_j{\bf I}_M )^{-1/2} {\bf R}_{\ell,j,m} (\frac{{\bf Z}_j}{K}+\varphi_j{\bf I}_M )^{-1/2})$.
RZF precoding can now be rewritten as
\begin{equation} \label{eq:RZF-definition-rewritten}
{\bf G}_j^{\rm rzf}=\frac{\beta_j}{\sqrt{K}} \left( \frac{{\bf Z}_j}{K}+\varphi_j{\bf I}_M \right)^{\!-1/2} \left(\frac{\widehat{\tilde{\bf H}}_{j,j}\widehat{\tilde{\bf H}}_{j,j}^{\mbox{\tiny H}}}{K} + {\bf I}_M \right)^{\! -1} \widehat{\tilde{\bf H}}_{j,j}
\end{equation}
where $\widehat{\tilde{\bf H}}_{j,j} = (\frac{{\bf Z}_j}{K}+\varphi_j{\bf I}_M )^{-1/2} [ \hat{\bf h}_{j,j,1} \, \ldots \, \hat{\bf h}_{j,j,K}]$. When this precoding matrix is multiplied with a channel as ${\bf h}_{j,\ell,m}^{\mbox{\tiny H}} {\bf G}_j^{\rm rzf}$, the factor $(\frac{{\bf Z}_j}{K}+\varphi_j{\bf I}_M )^{-1/2}$ will also transform ${\bf h}_{j,\ell,m}$ into a rotated channel. By considering the rotated channels instead of the original ones, we can apply the whole framework of TPE precoding. The only thing to keep in mind is that the power constraints might be different in the SINR optimization of Section \ref{sec:optimum}, but the extension in straightforward.
\end{remark}

Next, we provide an asymptotic analysis of the SINR for TPE precoding.

\subsection{Large-Scale Approximations of the SINRs}

In this section, we show that in the large-($M,K$) regime, defined by Assumption A-\ref{ass:regime}, the  SINR experienced by  the $m$th UT served by the $j$th cell, can be approximated by a deterministic term, depending solely on the channel statistics. Before stating our main result, we shall cast \eqref{eq:gamma_jm} in a simpler form by introducing some extra notation.


Let ${\bf w}_j=\left[w_{0,j},\ldots,w_{J_j-1,j}\right]^{\mbox{\tiny T}}$ and let ${\bf a}_{j,m}\in \mathbb{C}^{J_j\times 1}$ and ${\bf B}_{\ell,j,m}\in \mathbb{C}^{J_j\times J_j}$ be given by
$$
\left[{\bf a}_{j,m}\right]_n=\frac{{\bf h}_{j,j,m}^{\mbox{\tiny H}}}{\sqrt{K}}{\bf V}_{n,j}\frac{\widehat{{\bf h}}_{j,j,m}}{\sqrt{K}}, \ \ n\in \left[0,J_j-1\right],
$$
$$
\left[{\bf B}_{\ell,j,m}\right]_{n,p}=\frac{1}{K} {\bf h}_{\ell,j,m}^{\mbox{\tiny H}}{\bf V}_{n+p+1,\ell}{\bf h}_{\ell,j,m}, \ \ n,p \in\left[0,J_\ell-1\right].
$$
Then, the SINR experienced by the $m$th user in the $j$th cell is
\begin{equation} \label{eq:SINR-based-on-TPE}
\gamma_{j,m}=\frac{\left|\mathbb{E}[ {\bf w}_j^{\mbox{\tiny T}}{\bf a}_{j,m}] \right|^2}{\frac{\sigma^2}{K}+\displaystyle{\sum_{\ell=1}^{L}}\mathbb{E}\left[{\bf w}_\ell^{\mbox{\tiny T}}{\bf B}_{\ell,j,m}{\bf w}_\ell\right]-\left|\mathbb{E}[ {\bf w}_j^{\mbox{\tiny T}}{\bf a}_{j,m}]\right|^2}.
\end{equation}
Since ${\bf a}_{j,m}$ and ${\bf B}_{\ell,j,m}$ are of finite dimensions, it suffices to determine  an asymptotic approximation of the expected value of each of their elements. For that, similarly to our work in \cite{Kammoun2014a}, we link their elements to  the resolvent matrix
$$
\bsigma(t,j)=\left(t\frac{\widehat{\bf H}_{j,j}\widehat{\bf H}_{j,j}^{\mbox{\tiny H}}}{K}+{\bf I}_M\right)^{-1}
$$
by introducing the functionals ${ X}_{j,m}(t)$ and $Z_{\ell,j,m}(t)$
\begin{align} \label{eq:X-def}
	{ X}_{j,m}(t)&=\frac{1}{K}{\bf h}_{j,j,m}^{\mbox{\tiny H}}\bsigma(t,j)\widehat{\bf h}_{j,j,m} \\
	{ Z}_{\ell,j,m}(t)&=\frac{1}{K}{\bf h}_{\ell,j,m}^{\mbox{\tiny H}}\bsigma(t,\ell){\bf h}_{\ell,j,m} \label{eq:Z-def}
\end{align}
it is straightforward to see that:
\begin{align}
	\left[{\bf a}_{j,m}\right]_{n}&=\frac{(-1)^n}{n!}X_{j,m}^{(n)} \label{eq:a_jm} \\ 
	\left[{\bf B}_{\ell,j,m}\right]_{n,p}&=\frac{(-1)^{(n+p+1)}}{(n+p+1)!}Z_{\ell,j,m}^{(n+p+1)}\label{eq:B_ljm}
\end{align}
where $X_{j,m}^{(k)}\triangleq\lb\frac{d^{k}X_{j,m}(t)}{dt^k}\rabs_{\scriptscriptstyle t=0}$ and $Z_{\ell,j,m}^{(k)}\triangleq\left[{\lb\frac{d^{k}Z_{\ell,j,m}(t)}{d t^{k}}\rabs_{\scriptscriptstyle t=0}}\right]$.
Higher order moments of the spectral distribution of $\frac{1}{K}\widehat{\bf H}_{j,j}\widehat{\bf H}_{j,j}^{\mbox{\tiny H}}$ appear when taking derivatives of $X_{j,m}(t)$ or $Z_{\ell,j,m}(t)$.
The asymptotic convergence of these moments require an extra assumption ensuring that the spectral norm of $\frac{1}{K}\widehat{\bf H}_{j,j}\widehat{\bf H}_{j,j}^{\mbox{\tiny H}}$ is almost surely bounded. This assumption is expressed as follows.
\begin{assumption}
	The correlation matrices ${\bf R}_{\ell,j,m}$ belong to a finite-dimensional matrix space. This means that it exists a finite integer $S>0$ and a linear independent family of matrices ${\bf F}_1,\ldots,{\bf F}_S$ such that
	$$
	{\bf R}_{\ell,j,m}=\sum_{k=1}^S \alpha_{\ell,j,m,k}{\bf F}_k
	$$
	where $\alpha_{\ell,j,m,1},\ldots, \alpha_{\ell,j,m,S}$ denote the coordinates of ${\bf R}_{\ell,j,m}$ in the basis ${\bf F}_1,\ldots,{\bf F}_S$.
	\label{ass:correlation}
\end{assumption}
\begin{remark}
Two remarks are in order.
	\begin{enumerate}
		\item This condition is less restrictive than the one used in \cite{hoydis}, where  ${\bf R}_{\ell,j,m}$ is assumed to belong to a finite set of matrices.

\item Note that Assumption A-\ref{ass:correlation} is in agreement with several physical channel models presented in the literature. Among them, we distinguish the following models:
	\begin{itemize}
		\item The channel model of \cite{marzetta-icassp}, which considers a fixed number of dimensions or angular bins $S$ by letting
$$
{\bf R}_{\ell,j,m}^{\frac{1}{2}}=d_{\ell,j,m}^{-\frac{\theta}{2}}\left[{\bf K}, \ \ {\bf 0}_{M,M-S}\right]
$$
for some positive definite ${\bf K}\in \mathbb{C}^{M\times M-S}$, where $\theta$ is the path-loss exponent and $d_{\ell,j,m}$ is the distance between the $m$th user in the $j$th cell  and the $\ell$th cell.
\item The one-ring channel model with user groups from \cite{caire-12}. This channel model considers a finite number of groups ($G$ groups) which share approximately the same location and thus the same covariance matrix. Let $\theta_{\ell,j,g}$ and $\Delta_{\ell,j,g}$ be respectively the azimuth angle and the azimuth angular spread between the cell $\ell$ and the users in group $g$ of cell $j$. Moreover, let $d$ be the distance between two consecutive antennas (see Fig.~1 in \cite{caire-12}). Then, the $(u,v)$th entry of the covariance matrix ${\bf R}_{\ell,j,m}$ for users is group $g$ is
	 \begin{align}
		 &\left[{\bf R}_{\ell,j,m}\right]_{u,v}=\frac{1}{2\Delta_{\ell,j,g}}\int_{-\Delta_{\ell,j,g}+\theta_{\ell,j,g}}^{\Delta_{\ell,j,g}+\theta_{\ell,j,g}} e^{\jmath d(u-v)\sin\alpha}d\alpha \label{eq:correlation}\\
	  & (\textnormal{user } m \textnormal{ is in group }  g \textnormal{ of cell } j) .\nonumber
\end{align}
	
\end{itemize}

\end{enumerate}
\label{remark:correlation}
\end{remark}
Before stating our main result, we shall define (in a similar way, as in the previous section) the deterministic equivalents that will be used:
\begin{align*}
	{\bf T}_\ell(t)&={\bf T}\left(t,\left(\boldsymbol{\Phi}_{\ell,\ell,k}\right)_{k=1}^K,{\bf 0}_\ell\right) \\
	\delta_{\ell,k}(t)&=\delta_k\left(t,\left(\boldsymbol{\Phi}_{\ell,\ell,k}\right)_{k=1}^K,{\bf 0}_\ell\right).
\end{align*}
As it has been shown in \cite{hoydis}, the computation of the first $2J_\ell-1$ derivatives of ${\bf T}_\ell(t)$ and $\delta_{\ell,k}(t)$ at $t=0$, which we denote by  ${\bf T}_{\ell}^{(n)}$ and $\delta_{\ell,k}^{(n)}$, can be performed  using the iterative Algorithm~1, which we provide in Appendix~\ref{app:alogorithm}. These derivatives ${\bf T}_{\ell}^{(n)}$ and $\delta_{\ell,k}^{(n)}$ play a key role in the asymptotic expressions for the SINRs. We are now in a position to state our main results.

\begin{theorem}
	\label{th:main}
	Assume that Assumptions~A-\ref{ass:channel} and A-\ref{ass:correlation} hold true.
Let $\overline{X}_{j,m}(t)$ and $\overline{Z}_{\ell,j,m}(t)$ be
\begin{align*}
\overline{X}_{j,m}(t)&=\frac{\delta_{j,m}(t)}{1+t\delta_{j,m}(t)}\\
	\overline{Z}_{\ell,j,m}(t)&=\frac{1}{K}\tr \big( {\bf R}_{\ell,j,m}{\bf T}_\ell(t) \big)-\frac{t\left|\frac{1}{K}\tr \big(\bPhi_{\ell,j,m}{\bf T}_\ell(t) \big) \right|^2}{1+t\delta_{\ell,m}(t)}.
\end{align*}
Then, in the asymptotic regime defined by  Assumption~A-\ref{ass:regime}, we have
\begin{align*}
&\mathbb{E}\left[X_{j,m}(t)\right]-\overline{X}_{j,m}(t)\xrightarrow[M,K\to+\infty]{}0 \\
&\mathbb{E}\left[Z_{\ell,j,m}(t)\right]-\overline{Z}_{\ell,j,m}(t) \xrightarrow[M,K\to+\infty]{}0.
\end{align*}
Moreover, for every fixed $n$, we have that  ${\rm var}(X_{j,m}^{(n)})=o(1)$.
\end{theorem}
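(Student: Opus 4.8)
The plan is to reduce both deterministic-equivalent statements to Theorem~\ref{th:deterministic_1} applied to the Gram matrix $\frac1K\widehat{\bH}_{j,j}\widehat{\bH}_{j,j}^{\mbox{\tiny H}}$, whose columns $\wbh_{j,j,k}\sim\mathcal{CN}({\bf 0}_{M\times1},\bPhi_{j,j,k})$ are independent. The only difficulty is the dependence between the quadratic forms defining $X_{j,m}(t)$, $Z_{\ell,j,m}(t)$ and the resolvent $\bsigma(t,\cdot)$: in $X_{j,m}(t)$ the vector $\wbh_{j,j,m}$ is the $m$th column of $\widehat{\bH}_{j,j}$, and by pilot contamination the true channel $\bh_{\ell,j,m}$ is correlated with the $m$th column $\wbh_{\ell,\ell,m}$ of $\widehat{\bH}_{\ell,\ell}$. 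I would handle the first by the rank-one resolvent identity: writing $\bsigma_{[m]}(t,j)$ for the resolvent built from $\widehat{\bH}_{j,j}$ with column $m$ deleted and using $\frac1K\tr({\bf Q}\bsigma(t,j))-\frac1K\tr({\bf Q}\bsigma_{[m]}(t,j))\to0$ together with the Sherman--Morrison identity. The second I would handle by the joint Gaussianity of $(\bh_{\ell,j,m},\wbh_{\ell,\ell,m})$: split $\bh_{\ell,j,m}={\bf c}_{\ell,j,m}+{\bf e}_{\ell,j,m}$ where ${\bf c}_{\ell,j,m}=\mathbb{E}[\bh_{\ell,j,m}\mid\wbh_{\ell,\ell,m}]$ is a deterministic linear image of $\wbh_{\ell,\ell,m}$ (the relevant cross-covariance being $\bPhi_{\ell,j,m}$) and ${\bf e}_{\ell,j,m}$ is Gaussian and independent of the whole of $\widehat{\bH}_{\ell,\ell}$; when $\ell=j$ this is exactly the MMSE split $\bh_{j,j,m}=\wbh_{j,j,m}+\tilde{\bh}_{j,j,m}$.

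After these reductions every quadratic form becomes $\frac1K{\bf x}^{\mbox{\tiny H}}\bsigma_{[m]}(t,\cdot){\bf x}$ with ${\bf x}$ (conditionally) independent of $\bsigma_{[m]}$, plus cross terms carrying a mean-zero independent factor that vanish. The trace lemma then replaces each $\frac1K{\bf x}^{\mbox{\tiny H}}\bsigma_{[m]}{\bf x}$ by $\frac1K\tr({\bf Q}\bsigma_{[m]})$ for the appropriate ${\bf Q}$, and Theorem~\ref{th:deterministic_1} replaces the latter by $\frac1K\tr({\bf Q}{\bf T}_\ell(t))$. For $X_{j,m}(t)$ the surviving term is $\frac1K\wbh_{j,j,m}^{\mbox{\tiny H}}\bsigma(t,j)\wbh_{j,j,m}$, which Sherman--Morrison turns into $\frac{\frac1K\wbh_{j,j,m}^{\mbox{\tiny H}}\bsigma_{[m]}(t,j)\wbh_{j,j,m}}{1+t\,\frac1K\wbh_{j,j,m}^{\mbox{\tiny H}}\bsigma_{[m]}(t,j)\wbh_{j,j,m}}$; since $\frac1K\tr(\bPhi_{j,j,m}{\bf T}_j(t))=\delta_{j,m}(t)$ this gives $\overline{X}_{j,m}(t)$. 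For $Z_{\ell,j,m}(t)$ the ${\bf e}$-part contributes $\frac1K\tr(({\bf R}_{\ell,j,m}-{\rm Cov}({\bf c}_{\ell,j,m})){\bf T}_\ell(t))$, while the ${\bf c}$-part, after Sherman--Morrison, contributes $\frac1K\tr({\rm Cov}({\bf c}_{\ell,j,m}){\bf T}_\ell(t))$ minus the rational correction $t\,\big|\frac1K\tr(\bPhi_{\ell,j,m}{\bf T}_\ell(t))\big|^2/(1+t\delta_{\ell,m}(t))$; summing gives $\overline{Z}_{\ell,j,m}(t)$. The a.s.\ statements of the trace lemma and Theorem~\ref{th:deterministic_1} are promoted to convergence in expectation by uniform integrability, which follows once $\|\frac1K\widehat{\bH}_{j,j}\widehat{\bH}_{j,j}^{\mbox{\tiny H}}\|$ is a.s.\ bounded.

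That a.s.\ bound is where Assumption~A-\ref{ass:correlation} enters: because every ${\bf R}_{\ell,j,m}$ lies in the fixed finite-dimensional span of ${\bf F}_1,\dots,{\bf F}_S$, a standard $\varepsilon$-net argument over this span makes the trace-lemma and deterministic-equivalent estimates hold uniformly, which yields $\limsup_M\|\frac1K\widehat{\bH}_{j,j}\widehat{\bH}_{j,j}^{\mbox{\tiny H}}\|<\infty$ a.s. On the event where this norm is bounded by a constant, $t\mapsto\bsigma(t,j)$ extends analytically to a fixed complex neighbourhood of $0$ with uniformly bounded derivatives, so $X_{j,m}^{(n)}$ is recovered from $X_{j,m}(\cdot)$ by a Cauchy integral on a small circle and ${\rm var}(X_{j,m}^{(n)})\le C_n\sup_{|z|=r}{\rm var}(X_{j,m}(z))$. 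It then suffices to prove ${\rm var}(X_{j,m}(z))=\mathcal{O}(1/K)$, which I would obtain either from the Poincar\'e--Nash (Gaussian Poincar\'e) inequality applied to $X_{j,m}(z)$ as a function of the underlying complex Gaussian vectors, or from a martingale-difference decomposition exposing the columns of $\widehat{\bH}_{j,j}$ and the vectors ${\bf z}_{\ell,j,m}$ one at a time. I expect the main obstacle to be exactly this last point: the high matrix powers hidden in the derivatives $X_{j,m}^{(n)}$ have no convergent moments without a priori control of the top eigenvalue, so the crux is to use Assumption~A-\ref{ass:correlation} and the analyticity/$\varepsilon$-net argument to pass legitimately from the (bounded, for $t>0$) resolvent to its Taylor coefficients, all while carrying the pilot-contamination cross-covariances $\bPhi_{\ell,j,m}$ through the computation.
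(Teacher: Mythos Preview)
Your proposal is correct and matches the paper's approach: the rank-one resolvent identity (Lemma~\ref{lemma:woodbury}) to isolate column $m$, the trace lemma (Lemma~\ref{lemma:quadratic}) and rank-one perturbation (Lemma~\ref{lemma:perturbation}) on the resulting quadratic forms, and then Theorem~\ref{th:deterministic_1} to reach $\delta_{j,m}(t)$ and $\frac1K\tr(\bPhi_{\ell,j,m}{\bf T}_\ell(t))$. The only organisational difference is in the handling of $Z_{\ell,j,m}$: the paper applies Sherman--Morrison first and then centres the cross-term $\beta_{\ell,j,m}=\frac{\sqrt t}{K}\bh_{\ell,j,m}^{\mbox{\tiny H}}\bsigma_m(t,\ell)\wbh_{\ell,\ell,m}$ with respect to $\bh_{\ell,\cdot,m}$, whereas you first split $\bh_{\ell,j,m}$ by its conditional expectation on $\wbh_{\ell,\ell,m}$ and then apply Sherman--Morrison---both routes extract the same cross-covariance $\bPhi_{\ell,j,m}$ and are equivalent, and your treatment of the variance claim via a Cauchy-integral bound is in fact more explicit than what the paper provides.
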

\begin{proof}
	The proof is given in Appendix~\ref{app:main_eq}.
\end{proof}
\begin{corollary}
	\label{corollary:derivation}
Assume the setting of Theorem~\ref{th:main}. Then, in the asymptotic regime we have:
\begin{align*}
&\mathbb{E}\left[X_{j,m}^{(n)}\right] -\overline{X}_{j,m}^{(n)}\xrightarrow[M,K\to+\infty]{} 0 \\
&\mathbb{E}\left[Z_{\ell,j,m}^{(n)}\right] -\overline{Z}_{\ell,j,m}^{(n)}\xrightarrow[M,K\to+\infty]{} 0
\end{align*}
where $\overline{X}_{j,m}^{(n)}$ and $\overline{Z}_{\ell,j,m}^{(n)}$ are the derivatives of $\overline{X}(t)$ and $\overline{Z}_{\ell,j,m}(t)$ with respect to $t$ at $t=0$.
\end{corollary}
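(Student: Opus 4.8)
The plan is to deduce convergence of the Taylor coefficients at $t=0$ from the convergence of the functions themselves, established in Theorem~\ref{th:main}, via Cauchy's integral formula, once it is checked that all quantities involved are holomorphic, with $(M,K)$-uniform bounds, on a fixed complex disc around the origin. For each channel realization $t\mapsto X_{j,m}(t)$ and $t\mapsto Z_{\ell,j,m}(t)$ are rational functions whose only poles lie at $t=-1/\lambda$ with $\lambda\ge 0$ an eigenvalue of $\frac1K\widehat{\bf H}_{j,j}\widehat{\bf H}_{j,j}^{\mbox{\tiny H}}$, respectively of $\frac1K\widehat{\bf H}_{\ell,\ell}\widehat{\bf H}_{\ell,\ell}^{\mbox{\tiny H}}$; hence they are holomorphic on $\{|t|<1/\|\frac1K\widehat{\bf H}_{\ell,\ell}\widehat{\bf H}_{\ell,\ell}^{\mbox{\tiny H}}\|\}$. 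On the deterministic side, $\overline X_{j,m}(t)=\frac{\delta_{j,m}(t)}{1+t\delta_{j,m}(t)}$ and $\overline Z_{\ell,j,m}(t)$ are holomorphic near $0$: this is because $\delta_{j,m}(t)$ and ${\bf T}_\ell(t)$ extend holomorphically to a neighbourhood of the origin -- which is exactly what makes the coefficients ${\bf T}_\ell^{(n)},\delta_{\ell,k}^{(n)}$ produced by Algorithm~1 well defined -- and $1+t\delta_{j,m}(t)\to 1$ as $t\to 0$, so the denominator stays nonzero on a disc; in particular $\overline X_{j,m}^{(n)}$ and $\overline Z_{\ell,j,m}^{(n)}$ exist.

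Next I would fix a radius $\rho>0$ independent of $M,K$ and work with the \emph{truncated} expectations
\begin{equation*}
\widetilde X_{j,m}(t)=\mathbb E\!\left[X_{j,m}(t)\,{\bf 1}_{\mathcal G_{j}}\right],\qquad \mathcal G_{j}=\Big\{\big\|\tfrac1K\widehat{\bf H}_{j,j}\widehat{\bf H}_{j,j}^{\mbox{\tiny H}}\big\|\le \tfrac{1}{2\rho}\Big\},
\end{equation*}
and the analogous $\widetilde Z_{\ell,j,m}(t)$. On $\mathcal G_j$ the integrand is, for each outcome, holomorphic on $\{|t|\le\rho\}$ with $\|\bsigma(t,j)\|\le 2$ there, and since $\tfrac1K\|{\bf h}_{j,j,m}\|\,\|\widehat{\bf h}_{j,j,m}\|=\mathcal O(1)$ in $L^1$ (the covariances ${\bf R}_{j,j,m},\bPhi_{j,j,m}$ have trace $\mathcal O(M)$ and $M/K=\mathcal O(1)$ by Assumption~A-\ref{ass:regime}), a Morera/Fubini argument plus Cauchy estimates show that $\widetilde X_{j,m}$ is holomorphic on $\mathcal D_\rho=\{|t|<\rho\}$ and bounded there by a constant independent of $M,K$, with $\widetilde X_{j,m}^{(n)}(0)=\mathbb E[X_{j,m}^{(n)}{\bf 1}_{\mathcal G_j}]$. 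Two tail estimates then remove the truncation. First, for real $t\in(0,\rho)$ one has $\|\bsigma(t,j)\|\le1$ deterministically (as $\tfrac tK\widehat{\bf H}_{j,j}\widehat{\bf H}_{j,j}^{\mbox{\tiny H}}\succeq0$), so $\mathbb E[X_{j,m}(t)]$ is finite and $|\mathbb E[X_{j,m}(t)]-\widetilde X_{j,m}(t)|\le\|X_{j,m}(t)\|_{L^2}\,\mathbb P(\mathcal G_j^{\,c})^{1/2}\to0$. Second, from $X_{j,m}^{(n)}=(-1)^n n!\,\tfrac1K{\bf h}_{j,j,m}^{\mbox{\tiny H}}\big(\tfrac1K\widehat{\bf H}_{j,j}\widehat{\bf H}_{j,j}^{\mbox{\tiny H}}\big)^{n}\widehat{\bf h}_{j,j,m}$ one gets the crude polynomial bound $\|X_{j,m}^{(n)}\|_{L^2}=\mathcal O(M^{c_n})$, whence $|\mathbb E[X_{j,m}^{(n)}]-\widetilde X_{j,m}^{(n)}(0)|\to0$. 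Both steps use that $\mathbb P(\mathcal G_j^{\,c})$ decays faster than any power of $M$, which is precisely what Assumption~A-\ref{ass:correlation} guarantees.

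Finally I would invoke Vitali's convergence theorem: by Theorem~\ref{th:main}, $\widetilde X_{j,m}(t)\to\overline X_{j,m}(t)$ for every $t\in(0,\rho)$, a set with an accumulation point in $\mathcal D_\rho$, and the family $\{\widetilde X_{j,m}\}_{M,K}$ is holomorphic and uniformly bounded on $\mathcal D_\rho$; hence the convergence is uniform on the circle $\{|t|=\rho/2\}$, and Cauchy's formula for derivatives gives
\begin{equation*}
\widetilde X_{j,m}^{(n)}(0)-\overline X_{j,m}^{(n)}=\frac{n!}{2\pi\jmath}\oint_{|t|=\rho/2}\frac{\widetilde X_{j,m}(t)-\overline X_{j,m}(t)}{t^{n+1}}\,dt\xrightarrow[M,K\to\infty]{}0 .
\end{equation*}
Combining with the two tail estimates yields $\mathbb E[X_{j,m}^{(n)}]\to\overline X_{j,m}^{(n)}$, and the identical argument applied to $\widetilde Z_{\ell,j,m}$ yields $\mathbb E[Z_{\ell,j,m}^{(n)}]\to\overline Z_{\ell,j,m}^{(n)}$.

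I expect the main obstacle to be the middle step -- not the complex analysis, but the bookkeeping that turns the realization-dependent disc of holomorphy of $X_{j,m}(t),Z_{\ell,j,m}(t)$ into a fixed disc on which the (truncated) \emph{expectations} are holomorphic and uniformly bounded in $M,K$. This is exactly where Assumption~A-\ref{ass:correlation} is used, through the super-polynomial decay of $\mathbb P(\|\tfrac1K\widehat{\bf H}_{\ell,\ell}\widehat{\bf H}_{\ell,\ell}^{\mbox{\tiny H}}\|>\tfrac1{2\rho})$, so that the rare realizations for which $\bsigma(t,\ell)$ acquires a pole inside the disc contribute negligibly despite only a polynomial-in-$M$ a priori control on $X_{j,m}^{(n)}$ and $Z_{\ell,j,m}^{(n)}$; once this is secured, the remainder is a routine application of Vitali's theorem and the Cauchy estimates.
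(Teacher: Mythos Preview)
Your argument is correct and rests on the same complex-analytic backbone as the paper's proof (normal families plus identity principle), but the two proofs are organized quite differently. The paper works on the slit plane $\mathbb{C}\setminus\mathbb{R}_-$, where $\bsigma(t,\cdot)$ is holomorphic for \emph{every} realization; this makes truncation unnecessary, and Montel's theorem gives convergence of derivatives at every interior point. The price is that $t=0$ is a boundary point of this domain, so the paper closes with a three-term $\epsilon/3$ argument, splitting $X_{j,m}^{(n)}-\overline X_{j,m}^{(n)}$ into $(X_{j,m}^{(n)}-X_{j,m}^{(n)}(\eta))+(X_{j,m}^{(n)}(\eta)-\overline X_{j,m}^{(n)}(\eta))+(\overline X_{j,m}^{(n)}(\eta)-\overline X_{j,m}^{(n)})$ and sending $\eta\downarrow 0$, using the almost-sure boundedness of $\|\tfrac1K\widehat{\bf H}_{j,j}\widehat{\bf H}_{j,j}^{\mbox{\tiny H}}\|$ (again via Assumption~A-\ref{ass:correlation}) to control the first term.

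Your route trades the boundary difficulty for a probabilistic one: by working on a disc centred at the origin, $t=0$ is interior and Cauchy's formula delivers the derivatives directly, but holomorphy on that disc is now random, which forces the truncation on $\mathcal G_j$ and the two tail estimates. Both approaches ultimately exploit Assumption~A-\ref{ass:correlation} at the same spot---to control the top eigenvalue of $\tfrac1K\widehat{\bf H}_{\ell,\ell}\widehat{\bf H}_{\ell,\ell}^{\mbox{\tiny H}}$---but the paper uses it only qualitatively (almost-sure boundedness), whereas you need the stronger quantitative input of super-polynomial tails for $\mathbb P(\mathcal G_j^{\,c})$. That stronger input is indeed available here, so your argument goes through; the paper's version is slightly more economical in what it assumes, while yours is cleaner in that it never leaves the disc and needs no separate boundary step.
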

\begin{proof}
The proof is given in Appendix~\ref{app:derivation}.
\end{proof}
Theorem~\ref{th:main} provides the tools to calculate the derivatives of ${X}_{j,m}$ and ${Z}_{\ell,j,m}$ at $t=0$, in a recursive manner.
	
Now, denote by  $\overline{X}_{j,m}^{(0)}$  and $\overline{Z}_{\ell,j,m}^{(0)}$ the deterministic quantities given by
	\begin{align*}
		\overline{X}_{j,m}^{(0)}&=\frac{1}{K}\tr (\bPhi_{j,j,m}) \\
		\overline{Z}_{\ell,j,m}^{(0)}&=\frac{1}{K}\tr ({\bf R}_{\ell,j,m}).
	\end{align*}
We can now iteratively compute the deterministic sequences $\overline{X}_{j,m}^{(n)}$ and $\overline{Z}_{\ell,j,m}^{(n)}$ as
	\begin{align*}
		\overline{X}_{j,m}^{(n)}&=-\sum_{k=1}^n {n \choose k} k \overline{X}_{j,m}^{(k-1)}\delta_{j,m}^{(n-k)}+\delta_{j,m}^{(n)} \\
		\overline{Z}_{\ell,j,m}^{(n)}&=\frac{1}{K}\tr \left( {\bf R}_{\ell,j,m}{\bf T}_\ell^{(n)} \right)-\sum_{k=0}^n k {n \choose k} \delta_{l,m}^{(n-k)}\overline{Z}_{\ell,j,m}^{(k-1)} \\
		 &\hspace{-5mm}+\sum_{k=0}^n k {n\choose k} \delta_{l,m}^{(n-k)}\frac{1}{K}\tr \left( {\bf R}_{\ell,j,m}{\bf T}_\ell^{(k-1)} \right) \\
 &\hspace{-5mm}-\sum_{k=0}^n k{n\choose k} \frac{1}{K}\tr \left( \bPhi_{\ell,j,m}{\bf T}_\ell^{(k-1)} \right) \frac{1}{K} \tr \left( \bPhi_{\ell,j,m}{\bf T}_\ell^{(n-k)} \right).
	\end{align*}
Then, from Theorem~\ref{th:main}, we have
\begin{align*}
	&\E [{X}_{j,m}^{(n)}] -\overline{X}_{j,m}^{(n)}\xrightarrow[M,K\to+\infty]{}0,\\
	&\E [{Z}_{\ell,j,m}^{(n)} ]-\overline{Z}_{\ell,j,m}^{(n)}\xrightarrow[M,K\to+\infty]{}0 .
\end{align*}

Plugging the deterministic equivalent of Theorem~\ref{th:main} into \eqref{eq:a_jm} and \eqref{eq:B_ljm}, we get the following corollary.
\begin{corollary}
Let $\overline{\bf a}_{j,m}$ be the vector with elements
	$$
	\left[\overline{\bf a}_{j,m}\right]_{n}=\frac{(-1)^n}{n!}\overline{X}_{j,m}^{(n)}, \ \ n\in \left\{0,\ldots,J_j-1\right\}
	$$
and $\overline{\bf B}_{\ell,j,m}$ the $J_\ell\times J_\ell$ matrix with elements
	$$
	\left[\overline{\bf B}_{\ell,j,m}\right]_{n,p}=\frac{(-1)^{n+p+1}}{(n+p+1)!}\overline{Z}_{\ell,j,m}^{n+p+1},  \ \  n,p \in \left\{0,\ldots,J_\ell-1\right\}.
	$$
Then,
	$$
	\max_{\ell,j,m}\left(\E \left[ \|\overline{\bf B}_{\ell,j,m}-{\bf B}_{\ell,j,m}\| \right],\E \left[\|{\bf a}_{j,m}-\overline{\bf a}_{j,m}\|\right] \right)\xrightarrow[M,K\to+\infty]{}0.
	$$
\end{corollary}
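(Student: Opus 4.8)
The plan is to reduce the norm statement to the scalar convergences already established in Theorem~\ref{th:main} and Corollary~\ref{corollary:derivation}, and then to take care of two things that the corollary's formulation forces on us: upgrading ``convergence of expectations'' to convergence in $L^1$ of each individual entry, and making everything uniform in the user index $m$, whose range $\{1,\dots,K\}$ grows with $K$.

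\emph{Reduction to entries.} The vectors ${\bf a}_{j,m},\overline{\bf a}_{j,m}\in\mathbb{C}^{J_j}$ and the matrices ${\bf B}_{\ell,j,m},\overline{\bf B}_{\ell,j,m}\in\mathbb{C}^{J_\ell\times J_\ell}$ live in spaces whose dimension is fixed, since the truncation orders $J_\ell$ do not scale with $M$ or $K$; hence all the norms involved are comparable with dimension-free constants. Using $\|{\bf v}\|\le\sum_n|v_n|$ and $\|{\bf A}\|\le\|{\bf A}\|_{\rm F}\le\sum_{n,p}|[{\bf A}]_{n,p}|$ together with \eqref{eq:a_jm}, \eqref{eq:B_ljm} and the definitions of $\overline{\bf a}_{j,m},\overline{\bf B}_{\ell,j,m}$, one gets
$$
\|{\bf a}_{j,m}-\overline{\bf a}_{j,m}\|\le\sum_{n=0}^{J_j-1}\frac{\big|X_{j,m}^{(n)}-\overline{X}_{j,m}^{(n)}\big|}{n!},\qquad
\|{\bf B}_{\ell,j,m}-\overline{\bf B}_{\ell,j,m}\|\le\sum_{n,p=0}^{J_\ell-1}\frac{\big|Z_{\ell,j,m}^{(n+p+1)}-\overline{Z}_{\ell,j,m}^{(n+p+1)}\big|}{(n+p+1)!}.
$$
So it suffices to bound, for each fixed derivative order $k\in\{0,\dots,2\max_\ell J_\ell-1\}$, the quantities $\E[\,|X_{j,m}^{(k)}-\overline{X}_{j,m}^{(k)}|\,]$ and $\E[\,|Z_{\ell,j,m}^{(k)}-\overline{Z}_{\ell,j,m}^{(k)}|\,]$ \emph{uniformly} over $\ell,j\in\{1,\dots,L\}$ and $m\in\{1,\dots,K\}$.

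\emph{From means to $L^1$.} Since $\overline{X}_{j,m}^{(k)}$ is deterministic, I would split by Cauchy--Schwarz
$$
\E\big[\,|X_{j,m}^{(k)}-\overline{X}_{j,m}^{(k)}|\,\big]\le\sqrt{{\rm var}\big(X_{j,m}^{(k)}\big)}+\big|\E[X_{j,m}^{(k)}]-\overline{X}_{j,m}^{(k)}\big|,
$$
and likewise for $Z_{\ell,j,m}^{(k)}$. The mean-gap term vanishes by Corollary~\ref{corollary:derivation}; the variance term vanishes by ${\rm var}(X_{j,m}^{(k)})=o(1)$ from Theorem~\ref{th:main}, and by the analogous estimate ${\rm var}(Z_{\ell,j,m}^{(k)})=o(1)$, which follows from the same concentration arguments that underlie Theorem~\ref{th:main} (concentration of $Z_{\ell,j,m}(t)$ for $t$ near $0$, established in Appendix~\ref{app:main_eq}, combined with Cauchy's integral formula for the Taylor coefficients of $Z_{\ell,j,m}(t)$ at $t=0$). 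The essential point is the uniformity in $\ell,j,m$: the convergence rates in Theorem~\ref{th:main}/Corollary~\ref{corollary:derivation}, and the recursions building $\overline{X}_{j,m}^{(n)},\overline{Z}_{\ell,j,m}^{(n)}$ out of ${\bf T}_\ell^{(n)},\delta_{\ell,k}^{(n)}$, depend on the covariances $({\bf R}_{\ell,j,m})$ only through the uniform bound $\limsup_M\|{\bf R}_{\ell,j,m}\|<\infty$ of Assumption~A-\ref{ass:channel} and the finite-dimensional family of Assumption~A-\ref{ass:correlation}; hence $\sup_{\ell,j,m}\E[\,|X_{j,m}^{(k)}-\overline{X}_{j,m}^{(k)}|\,]\to0$ and $\sup_{\ell,j,m}\E[\,|Z_{\ell,j,m}^{(k)}-\overline{Z}_{\ell,j,m}^{(k)}|\,]\to0$.

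\emph{Assembly and the hard point.} Taking expectations in the two displays of the reduction step and then the supremum over $\ell,j,m$, each right-hand side becomes a finite sum (of length bounded independently of $M,K$, with bounded weights $1/n!$) of terms each tending to $0$ uniformly, so $\sup_{j,m}\E[\|{\bf a}_{j,m}-\overline{\bf a}_{j,m}\|]\to0$ and $\sup_{\ell,j,m}\E[\|{\bf B}_{\ell,j,m}-\overline{\bf B}_{\ell,j,m}\|]\to0$; taking the maximum of the two yields the corollary. I expect the hard part to be the middle step: the corollary is an $L^1$ statement ($\E[\|\cdot\|]$), whereas Theorem~\ref{th:main} and Corollary~\ref{corollary:derivation} give convergence of means, so one must supply variance decay for \emph{every} relevant derivative $X_{j,m}^{(k)},Z_{\ell,j,m}^{(k)}$ and simultaneously keep all constants uniform in $m$ although the number of users grows with $K$ --- and it is exactly Assumptions~A-\ref{ass:channel} and A-\ref{ass:correlation} that make this uniformity available. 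Everything else is routine, dimension-free triangle-inequality bookkeeping, permitted because the truncation orders $J_\ell$ are fixed.
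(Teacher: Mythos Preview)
Your argument is correct and is the natural rigorous elaboration of the paper's own treatment, which consists of the single sentence ``Plugging the deterministic equivalent of Theorem~\ref{th:main} into \eqref{eq:a_jm} and \eqref{eq:B_ljm}, we get the following corollary'' and nothing more. In particular, you correctly isolate and resolve the two technical points the paper leaves implicit---that $\E[\|\cdot\|]\to 0$ needs variance decay and not only convergence of means, and that uniformity in the growing index $m$ must come from Assumptions~A-\ref{ass:channel} and A-\ref{ass:correlation}---so your proof is strictly more complete than what the paper provides while following the same (only natural) route.
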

This corollary gives asymptotic equivalents of ${\bf a}_{j,m}$ and ${\bf B}_{\ell,j,m}$, which are the random quantities, that appear in the SINR expression in \eqref{eq:SINR-based-on-TPE}. Hence, we can use these asymptotic equivalents to obtain an asymptotic equivalent of the SINR for all UTs in every cell.

\subsection{Optimization of the System Performance}
\label{sec:optimum}

The previous section developed deterministic equivalents of the SINR at each UT in the multi-cell system, as a function of the polynomial coefficients $\left\{w_{j,\ell},\ell\in\left[1,L\right], j\in\left[0,J_\ell-1\right]\right\}$ of the TPE precoding applied in each of the $L$ cells. These coefficients can be selected arbitrarily, but should not be functions of any instantaneous CSI---otherwise the low complexity properties are not retained. Furthermore, the coefficients need to be scaled such that the transmit power constraints
\begin{equation} \label{eq:constraint-TPE}
\frac{1}{K}\tr \left( {\bf G}_{\ell,\rm TPE}{\bf G}_{\ell,\rm TPE}^{\mbox{\tiny H}} \right) =P_{\ell}
\end{equation}
are satisfied in each cell $\ell$. By plugging the TPE precoding expression from \eqref{eq_TPE-definition} into  \eqref{eq:constraint-TPE}, this implies 
\begin{equation}
	\frac{1}{K}\sum_{n=0}^{J_\ell-1}\sum_{m=0}^{J_\ell-1}w_{n,\ell} w_{m,\ell}^* \left(\frac{ \widehat{\bf H}_{\ell,\ell} \widehat{\bf H}_{\ell,\ell}^{\mbox \tiny H}}{K}\right)^{n+m+1} =P_{\ell}.
\label{eq:constraint}
\end{equation}

In this section, we optimize the coefficients to maximize a general metric of the system performance. To facilitate the optimization, we use the asymptotic equivalents of the SINRs developed in this paper and apply the corresponding asymptotic analysis in order to replace the constraint \eqref{eq:constraint} with its asymptotically equivalent condition
\begin{equation}
	{\bf w}_\ell^{\mbox{\tiny T}}\overline{\bf C}_\ell{\bf w}_\ell=P_{\ell}, \ \ \ell\in\left\{1,\ldots,L\right\}
\label{eq:constraint-approx}
\end{equation}
where $\left[\overline{\bf C}_\ell\right]_{n,m}=\frac{(-1)^{n+m+1}}{(n+m+1)!}\frac{1}{K}\tr ({\bf T}_\ell^{(n+m+1)})$ for all $1 \leq n \leq L$ and $1 \leq m \leq L$.

The performance metric in this section is the weighted max-min fairness, which can provide a good balance between system throughput, user fairness, and computational complexity \cite{Bjornson2013d}.\footnote{Other performance metrics are also possible, but the weighted max-min fairness has often relatively low computational complexity and can be used as a building stone for maximizing other metrics in an iterative fashion \cite{Bjornson2013d}.} This means, that we maximize the minimal value of $\frac{\log_2(1+\gamma_{j,m})}{\nu_{j,m}}$, where the user-specific weights $\nu_{j,m} >0$ are larger for users with high priority (e.g., with favorable channel conditions). Using deterministic equivalents, the corresponding optimization problem is
\begin{equation}
\begin{split}
\maximize{{\bf w}_1,\ldots,{\bf w}_{L}} & \, \min_{\begin{subarray}{l} j \in \left[1,L\right] \\ m\in \left[1,K\right]\end{subarray}}
\frac{1}{\nu_{j,m}} \times \\ & \,\, \log_2 \Bigg( 1 +
\frac{{\bf w}_j^{\mbox{\tiny T}}\overline{\bf a}_{j,m}\overline{\bf a}_{j,m}^{\mbox{\tiny H}}{\bf w}_j}{\displaystyle{\sum_{\ell=1}^{L}}{\bf w}_\ell^{\mbox{\tiny T}}\overline{\bf B}_{\ell,j,m}{\bf w}_\ell-{\bf w}_j^{\mbox{\tiny T}}\overline{\bf a}_{j,m}\overline{\bf a}_{j,m}^{\mbox{\tiny H}}{\bf w}_j}  \Bigg)\\
\mathrm{subject} \,\, \mathrm{to} & \quad \, {\bf w}_\ell^{\mbox{\tiny T}}\overline{\bf C}_\ell{\bf w}_\ell=P_{\ell}, \ \ \ell\in\left\{1,\ldots,L\right\}.
\label{eq:prob}
\end{split}
\end{equation}

This problem has a similar structure as the \emph{joint max-min fair beamforming} problem previously considered in \cite{Karipidis2008a} within the area of multi-cast beamforming communications with several separate user groups. The analogy is the following: The users in cell $j$ in our work corresponds to the $j$th multi-cast group in \cite{Karipidis2008a}, while the coefficients ${\bf w}_j$ in \eqref{eq:prob} correspond to the multi-cast beamforming to group $j$ in \cite{Karipidis2008a}. The main difference is that our problem \eqref{eq:prob} is more complicated due to the structure of the power constraints, the negative sign of the second term in the denominators of the SINRs, and the user weights. Nevertheless, the tight mathematical connection between the two problems implies, that \eqref{eq:prob} is an NP-hard problem because of \cite[Claim 2]{Karipidis2008a}. One should therefore focus on finding a sensible approximate solution to \eqref{eq:prob}, instead of the global optimum.

Approximate solutions to \eqref{eq:prob} can be obtained by well-known techniques from the multi-cast beamforming literature (e.g., \cite{Sidiropoulos2006,Karipidis2008a,Gershman2010a}). For the sake of brevity, we only describe the approximation approach of semi-definite relaxation in this section. To this end we note, we write \eqref{eq:prob} on its equivalent epigraph form
\begin{align} \label{eq:prob2}
\maximize{{\bf w}_1,\ldots,{\bf w}_{L},\xi} & \quad \xi \\ \notag
	\mathrm{subject} \,\, \mathrm{to}  & \quad \, \tr \left( \overline{\bf C}_\ell {\bf w}_\ell {\bf w}_\ell^{\mbox{\tiny T}} \right) =P_{\ell}, \ \ \ell\in\left\{1,\ldots,L\right\} \\ \notag
& \!\!\!\!\!\!\!\!\!\!\!\!\!\!\!\!\!\!\!\!\!\!\!\!  \frac{ \overline{\bf a}_{j,m}^{\mbox{\tiny H}}{\bf w}_j  {\bf w}_j^{\mbox{\tiny T}} \overline{\bf a}_{j,m}  }{\displaystyle{\sum_{\ell=1}^{L}} \tr\left(  \overline{\bf B}_{\ell,j,m}{\bf w}_\ell {\bf w}_\ell^{\mbox{\tiny T}} \right)-   \overline{\bf a}_{j,m}^{\mbox{\tiny H}} {\bf w}_j {\bf w}_j^{\mbox{\tiny T}}  \overline{\bf a}_{j,m}  } \geq 2^{ \nu_{j,m} \xi } \!-\! 1
 \quad \forall j, m
\end{align}
where the auxiliary variable $\xi$ represents the minimal weighted rate among the users. If we substitute the positive semi-definite rank-one matrix ${\bf w}_\ell {\bf w}_\ell^{\mbox{\tiny T}} \in \mathbb{C}^{J_{\ell} \times J_{\ell}}$ for a positive semi-definite matrix ${\bf W}_\ell \in \mathbb{C}^{J_{\ell} \times J_{\ell}}$ of arbitrary rank, we obtain the following tractable relaxed problem
\begin{equation}
\begin{split}
\maximize{{\bf W}_1,\ldots,{\bf W}_{L},\xi} & \quad \xi \\
	\mathrm{subject} \,\, \mathrm{to}  & \quad \, {\bf W}_{\ell} \succeq {\bf 0}, \quad \tr \left( \overline{\bf C}_\ell {\bf W}_\ell \right) =P_{\ell}, \ \ \ell\in\left\{1,\ldots,L\right\} \\
& \!\!\!\!\!\!\!\!\!\!\!\!\!\!\!\!\!\!\!\!\!\!\!\!  \frac{  \overline{\bf a}_{j,m}^{\mbox{\tiny H}}{\bf W}_j  \overline{\bf a}_{j,m}  }{\displaystyle{\sum_{\ell=1}^{L}} \tr\left(  \overline{\bf B}_{\ell,j,m}{\bf W}_\ell  \right)-  \overline{\bf a}_{j,m}^{\mbox{\tiny H}} {\bf W}_j \overline{\bf a}_{j,m} } \geq 2^{ \nu_{j,m} \xi } \!-\! 1 \quad \forall j, m.
\label{eq:prob-relaxation}
\end{split}
\end{equation}
This is a so-called semi-definite relaxation of the original problem \eqref{eq:prob}. Interestingly, for any fixed value on $\xi$, \eqref{eq:prob-relaxation} is a convex semi-definite optimization problem because the power constraints are convex and the SINR constraints can be written in the convex form $\overline{\bf a}_{j,m}^{\mbox{\tiny H}}{\bf W}_j  \overline{\bf a}_{j,m} \geq (2^{ \nu_{j,m} \xi } \!-\! 1) \big( \sum_{\ell=1}^{L} \tr\left(  \overline{\bf B}_{\ell,j,m}{\bf W}_\ell  \right)-  \overline{\bf a}_{j,m}^{\mbox{\tiny H}} {\bf W}_j \overline{\bf a}_{j,m} \big)$. Hence, we can solve \eqref{eq:prob-relaxation} by standard techniques from convex optimization theory for any fixed $\xi$ \cite{Boyd2004a}. In order to also find the optimal value of $\xi$, we note that the SINR constraints become stricter as $\xi$ grows and thus we need to find the largest value for which the SINR constraints are still feasible. This solution process is formalized by the following theorem.

\begin{theorem} \label{th:bisection-solution}
Suppose we have an upper bound $\xi_{\max}$ on the optimum of the problem \eqref{eq:prob-relaxation}. The optimization problem can then be solved by line search over the range $\mathcal{R}=[0,\xi_{\max}]$. For a given value $\xi^{\star} \in \mathcal{R}$, we need to solve the convex feasibility problem
\begin{equation} \label{eq:feasibility-problem}
\begin{split}
\mathrm{find}\,\, & \,\, {\bf W}_1 \succeq {\bf 0} ,\ldots,{\bf W}_{L} \succeq {\bf 0} \\
	\mathrm{subject}\,\,\mathrm{to}\,\, & \,\, \tr \left( \overline{\bf C}_\ell {\bf W}_\ell \right) =P_{\ell}, \ \ \ell\in\left\{1,\ldots,L\right\} \\
& \!\!\!\!\!\!\!\!\!\!\!\!\!\!\!\!\!\!\!\!\!\!\!\!\!\!  \frac{2^{ \nu_{j,m} \xi^{\star} } \!-\! 1  }{2^{ \nu_{j,m} \xi^{\star} } } \displaystyle{\sum_{\ell=1}^{L}} \tr\left(  \overline{\bf B}_{\ell,j,m}{\bf W}_\ell  \right)  -  \overline{\bf a}_{j,m}^{\mbox{\tiny H}} {\bf W}_j \overline{\bf a}_{j,m}  \leq 0   \quad \forall j, m.
\end{split}
\end{equation}
If this problem is feasible, all $\tilde{\xi} \in \mathcal{R}$ with $\tilde{\xi}< \xi^{\star}$ are removed. Otherwise, all $\tilde{\xi} \in \mathcal{R}$ with $\tilde{\xi} \geq \xi^{\star}$ are removed.
\end{theorem}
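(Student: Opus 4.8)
The plan is to verify two things: first, that for a fixed $\xi^\star$ the feasibility problem \eqref{eq:feasibility-problem} is exactly the test of whether $\xi^\star$ is attainable in \eqref{eq:prob-relaxation}; and second, that the set of attainable $\xi$ forms an interval, so that a line (bisection) search over $[0,\xi_{\max}]$ correctly locates the optimum. For the first point I would start from the SINR constraint in \eqref{eq:prob-relaxation}, namely
$\overline{\bf a}_{j,m}^{\mbox{\tiny H}}{\bf W}_j\overline{\bf a}_{j,m} \ge (2^{\nu_{j,m}\xi^\star}-1)\big(\sum_{\ell}\tr(\overline{\bf B}_{\ell,j,m}{\bf W}_\ell) - \overline{\bf a}_{j,m}^{\mbox{\tiny H}}{\bf W}_j\overline{\bf a}_{j,m}\big)$, and rearrange: moving the $\overline{\bf a}_{j,m}^{\mbox{\tiny H}}{\bf W}_j\overline{\bf a}_{j,m}$ terms together gives $2^{\nu_{j,m}\xi^\star}\,\overline{\bf a}_{j,m}^{\mbox{\tiny H}}{\bf W}_j\overline{\bf a}_{j,m} \ge (2^{\nu_{j,m}\xi^\star}-1)\sum_{\ell}\tr(\overline{\bf B}_{\ell,j,m}{\bf W}_\ell)$, and dividing by $2^{\nu_{j,m}\xi^\star}>0$ yields precisely the inequality written in \eqref{eq:feasibility-problem}. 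Hence the feasible set of \eqref{eq:feasibility-problem} coincides with the set of $({\bf W}_1,\dots,{\bf W}_L)$ that satisfy the constraints of \eqref{eq:prob-relaxation} at objective value $\xi^\star$. I should also note that each constraint in \eqref{eq:feasibility-problem} is affine in the $\overline{\bf W}_\ell$ (the coefficient $\tfrac{2^{\nu_{j,m}\xi^\star}-1}{2^{\nu_{j,m}\xi^\star}}$ is a nonnegative constant once $\xi^\star$ is fixed), so together with ${\bf W}_\ell\succeq{\bf 0}$ and the linear trace equalities this is a genuine convex (SDP) feasibility problem, as claimed.

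Next I would establish the monotonicity that legitimises the line search. Define $\Xi = \{\xi\ge 0 : \eqref{eq:feasibility-problem} \text{ is feasible with } \xi^\star=\xi\}$. The map $\xi\mapsto \tfrac{2^{\nu_{j,m}\xi}-1}{2^{\nu_{j,m}\xi}} = 1 - 2^{-\nu_{j,m}\xi}$ is strictly increasing in $\xi$ for each $(j,m)$ since $\nu_{j,m}>0$. Therefore, if $({\bf W}_1,\dots,{\bf W}_L)$ is feasible for some $\xi^\star$, then for any $\tilde\xi<\xi^\star$ the left-hand side of each SINR constraint only decreases (the $\sum_\ell\tr(\overline{\bf B}_{\ell,j,m}{\bf W}_\ell)$ term is nonnegative because $\overline{\bf B}_{\ell,j,m}\succeq{\bf 0}$ and ${\bf W}_\ell\succeq{\bf 0}$, and its coefficient shrinks), while the power equalities are unaffected; hence the same tuple is feasible for $\tilde\xi$. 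This shows $\Xi$ is a down-set, i.e. an interval of the form $[0,\xi^{\mathrm{opt}}]$ (or $[0,\xi^{\mathrm{opt}})$), and $\xi^{\mathrm{opt}}$ is exactly the optimal value of \eqref{eq:prob-relaxation}. Consequently the elimination rule stated in the theorem is correct: feasibility at $\xi^\star$ certifies $\xi^{\mathrm{opt}}\ge\xi^\star$ so all smaller candidates can be discarded, and infeasibility certifies $\xi^{\mathrm{opt}}<\xi^\star$ so all candidates $\ge\xi^\star$ can be discarded. Given the a~priori upper bound $\xi_{\max}$, the search range $[0,\xi_{\max}]$ contains $\xi^{\mathrm{opt}}$, and repeated bisection converges to it.

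I do not expect a serious obstacle here — the result is essentially a bookkeeping statement — but the one point that deserves care is the boundary/attainment issue: whether the supremum $\xi^{\mathrm{opt}}$ is itself attained, and whether a maximizer of \eqref{eq:prob-relaxation} is recovered from the feasibility problem at $\xi=\xi^{\mathrm{opt}}$. This requires observing that the feasible set of \eqref{eq:feasibility-problem}, intersected over the compact constraint region carved out by $\tr(\overline{\bf C}_\ell{\bf W}_\ell)=P_\ell$ and ${\bf W}_\ell\succeq{\bf 0}$ (compactness follows from $\overline{\bf C}_\ell\succ{\bf 0}$, which one checks from its definition and the positive-definiteness of the relevant ${\bf T}_\ell^{(n+m+1)}$-combination), is closed, so the nested family of nonempty closed sets over $\xi\uparrow\xi^{\mathrm{opt}}$ has nonempty intersection; thus $\xi^{\mathrm{opt}}\in\Xi$ and the optimum is attained. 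A remark to this effect, or simply restricting attention to the relaxed problem's optimal value as the target of the line search, suffices to close the argument.
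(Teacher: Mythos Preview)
Your proposal is correct and follows essentially the same approach as the paper: the paper's proof is a one-line observation that \eqref{eq:prob-relaxation} is quasi-convex (convex for fixed $\xi$, with a feasible set that shrinks as $\xi$ increases) and then invokes standard bisection line search. You have simply unpacked those two ingredients explicitly---the algebraic rewriting of the SINR constraint, convexity of the resulting SDP feasibility problem, and the monotonicity of $\xi\mapsto 1-2^{-\nu_{j,m}\xi}$---and added a careful remark on attainment that the paper omits.
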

\begin{IEEEproof}
This theorem follows from identifying \eqref{eq:prob-relaxation} as a quasi-convex problem (i.e., it is a convex problem for any fixed $\xi$ and the feasible set shrinks with increasing $\xi$) and applying any conventional line search algorithms (e.g., the bisection algorithm \cite[Chapter 4.2]{Boyd2004a}).
\end{IEEEproof}

Based on Theorem ~\ref{th:bisection-solution}, we devise the following algorithm based on conventional bisection line search.

\begin{algorithm}[H] \label{algorithm:bisection}
\caption{Bisection algorithm that solves \eqref{eq:prob-relaxation}}
\begin{algorithmic}
    \State Set $\xi_{\min}=0$ and initiate the upper bound $\xi_{\max}$
    \State Select a tolerance $\varepsilon>0$
    \While{$\xi_{\max}-\xi_{\min}> \varepsilon$}
    \State $\xi^{\star} \gets \frac{\xi_{\max}+\xi_{\min}}{2}$ 
    \State Solve \eqref{eq:feasibility-problem} for $\xi^{\star}$
    \If{problem \eqref{eq:feasibility-problem} is feasible}
        \State $\xi_{\min} \gets \xi^{\star}$
        \Else $\,\,\,\xi_{\max} \gets \xi^{\star}$
    \EndIf
    \EndWhile
    \State Output: $\xi_{\min}$ is now less than $\varepsilon$ from the optimum to \eqref{eq:prob-relaxation}
\end{algorithmic}
\end{algorithm}

In order to apply Algorithm \ref{algorithm:bisection}, we need to find a finite upper bound $\xi_{\max}$ on the optimum of \eqref{eq:prob-relaxation}. This is achieved by further relaxation of the problem. For example, we can remove the inter-cell interference and maximize the SINR of each user $m$ in each cell $j$ by solving the problem
\begin{equation}
\begin{split}
\maximize{{\bf w}_j} & \,\,\,  \frac{1}{\nu_{j,m}} \log_2 \left(1 + \frac{{\bf w}_j^{\mbox{\tiny T}}\overline{\bf a}_{j,m}\overline{\bf a}_{j,m}^{\mbox{\tiny H}}{\bf w}_j}{ {\bf w}_j^{\mbox{\tiny T}}\overline{\bf B}_{j,j,m}{\bf w}_j-{\bf w}_j^{\mbox{\tiny T}}\overline{\bf a}_{j,m}\overline{\bf a}_{j,m}^{\mbox{\tiny H}}{\bf w}_j} \right)\\
\mathrm{subject} \,\, \mathrm{to} & \quad \, {\bf w}_j^{\mbox{\tiny T}}\overline{\bf C}_j{\bf w}_j=P_{j}.
\label{eq:prob-single-user}
\end{split}
\end{equation}
This is essentially a generalized eigenvalue problem and therefore solved by scaling the vector ${\bf q}_{j,m} = (\overline{\bf B}_{j,j,m} - \overline{\bf a}_{j,m} \overline{\bf a}_{j,m})^{-1} \overline{\bf a}_{j,m}$ to satisfy the power constraint. We obtain a computationally tractable upper bound $\xi_{\max}$ by taking the smallest of the relaxed SINR among all the users:
\begin{equation}
\xi_{\max} = \min_{j,m} \,\, \frac{\log_2 \left( 1 + \overline{\bf a}_{j,m}^{\mbox{\tiny H}}  (\overline{\bf B}_{j,j,m} - \overline{\bf a}_{j,m} \overline{\bf a}_{j,m})^{-1} \overline{\bf a}_{j,m} \right)}{\nu_{j,m}}.
\end{equation}

The solution to the relaxed problem in \eqref{eq:prob-relaxation} is a set of matrices ${\bf W}_1,\ldots,{\bf W}_L$ that, in general, can have ranks greater than one. In our experience, the rank is indeed one in many practical cases, but when the rank is larger than one we cannot apply the solution directly to the original problem formulation in \eqref{eq:prob}. A standard approach to obtain rank-one approximations is to select the principal eigenvectors of ${\bf W}_1,\ldots,{\bf W}_L$ and scale each one to satisfy the power constraints in \eqref{eq:constraint} with equality.

As mentioned in the proof of Theorem~\ref{th:bisection-solution}, the optimization problem in \eqref{eq:prob-relaxation} belongs to the class of quasi-convex problems. As such, the computational complexity scales polynomially with the number of UTs $K$ and the TPE orders $J_1,\ldots,J_L$. It is important to note that the number of base station antennas $M$ has no impact on the complexity. The exact number of arithmetic operation depends strongly on the choice of the solver algorithm (e.g., interior-point methods \cite{cvx}) and if the implementation is problem-specific or designed for general purposes. As a rule-of-thumb, polynomial complexity means that the scaling is between linear and cubic in the parameters \cite{Laurent2005a}. In any case, the complexity is prohibitively large for real-time computation, but this is not an issue since the coefficients are only functions of the statistics and not the instantaneous channel realizations. In other words, the coefficients for a given multi-cell setup can be computed offline, e.g., by a central node or distributively using decomposition techniques \cite{Palomar2006a}. Even if the channel statistics would change with time, this happens at a relatively slow rate (as compared to the channel realizations), which makes the complexity negligible compared the precoding computations \cite{Kammoun2014a}. Furthermore, we note that the same coefficients can be used for each subcarrier in a multi-carrier system, as the channel statistics are essentially the same across all subcarriers, even though the channel realizations are different due to the frequency-selective fading.

\begin{remark}[User weights that mimic RZF precoding] \label{remark:mimic-RZF}
The user weights $\nu_{j,m}$ can be selected in a variety of ways, resulting in different performance at each UT. Since the main focus of TPE precoding is to approximate RZF precoding, it makes sense to select the user weights to push the performance towards that of RZF precoding. This is achieved by selecting $\nu_{j,m}$ as the rate that user $m$ in cell $j$ would achieve under RZF precoding for some regularization parameters $\varphi_j$ (which, preferably, should be chosen approximately optimal), or rather the deterministic equivalent of this rate in the large-($M,K$) regime; see
Theorem~\ref{th:theorem_rzf} in Section~\ref{sec:review-RZF} for a review of these deterministic equivalents. The optimal $\xi$ from Theorem~\ref{th:bisection-solution} can then be interpreted as the fraction of the RZF precoding performance that is achieved by TPE precoding.
\end{remark}

\section{Simulation Example} \label{sec:simulations}

This section provides a numerical validation of the proposed TPE precoding in a practical deployment scenario.
We consider a three-sector site composed of $L=3$ cells and BSs; see Fig.~\ref{fig:cell}. Similar to the channel model presented in \cite{caire-12}, we assume that the UTs in each cell are divided into $G=2$ groups. UTs of a group share approximatively the same location and statistical properties. We assume that the groups are uniformly distributed in an annulus with an outer radius of $250 \, \mathrm{m}$ and an inner radius of $35 \, \mathrm{m}$, which is compliant with a future LTE urban macro deployment \cite{status-report}.

\begin{figure}
	\begin{center}
		\includegraphics[scale=0.5]{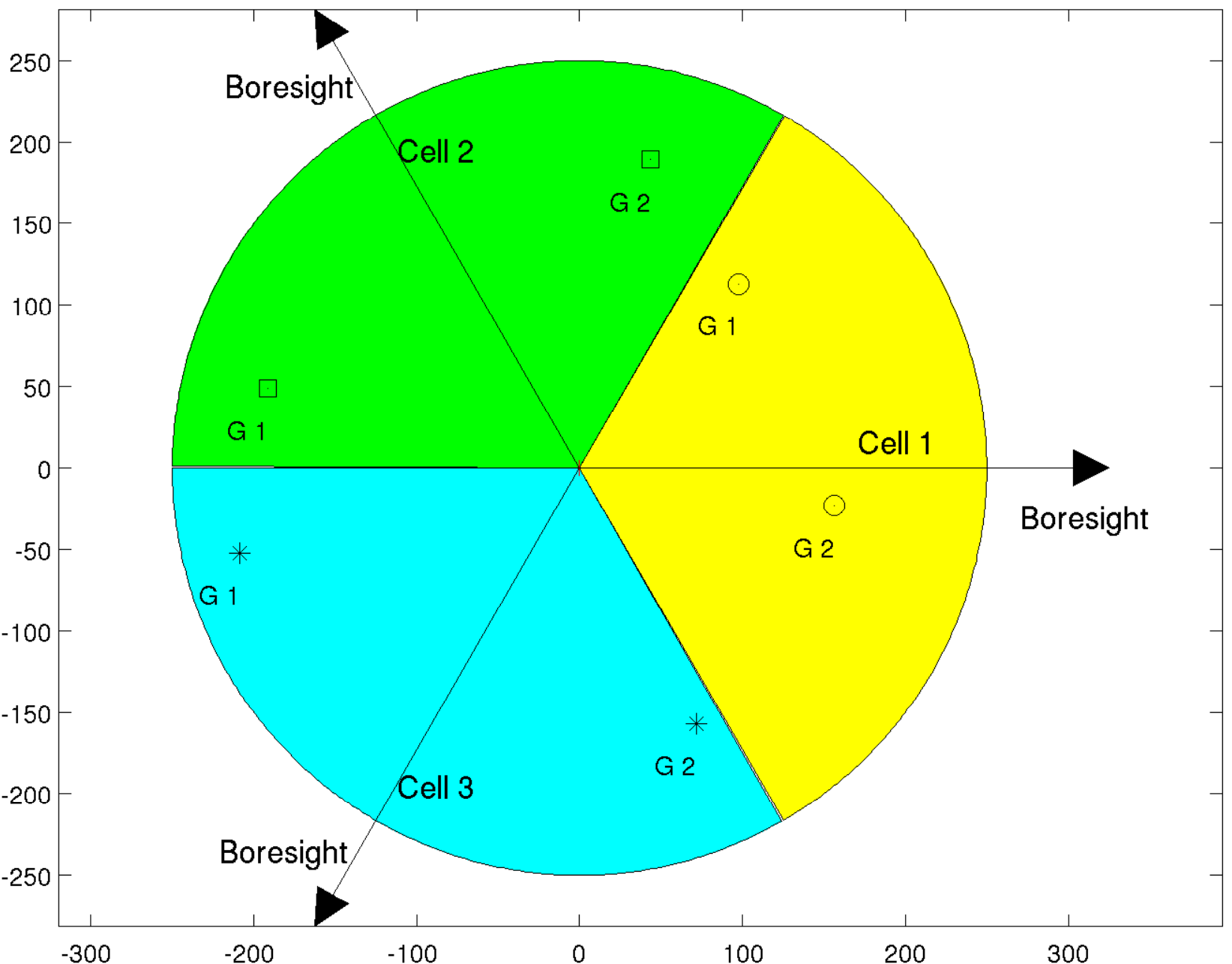}
	\caption{Illustration of the three-sector site deployment with $L=3$ cells considered in the simulations.}
	\label{fig:cell}
	\end{center}
\end{figure}

The pathloss between UT $m$ in group $g$ of cell $j$ and cell $\ell$ follows the same expression as in \cite{caire-12} and is given by
\begin{equation*}
	\mathrm{PL}(d_{\ell,j,m})=\frac{1}{1+(\frac{d_{\ell,j,m}}{d_0})^\delta}
\end{equation*}
where $\delta=3.7$ is the pathloss exponent and $d_0=30 \, \mathrm{m}$ is the reference distance.
Each base station is equipped with an horizontal linear array of $M$ antennas. The radiation pattern of each antenna is
$$
A(\theta)=-\min \left( 12\left(\frac{\theta}{\theta_{3dB}}\right)^2,30 \right) \quad \mathrm{dB}
$$
where $\theta_{3\mathrm{dB}}=70$ degrees and $\theta$ is measured with respect to the BS boresight.
We consider a similar channel covariance model as the one-ring model described in Remark~\ref{remark:correlation}. The only difference is that we scale  the covariance matrix in \eqref{eq:correlation} by the pathloss and the antenna gain:
	 \begin{align*}
		 \left[{\bf R}_{\ell,j,m}\right]_{u,v}=& \frac{10^{A(\theta_{\ell,j,g})/10} \mathrm{PL}(d_{\ell,j,m})}{2\Delta_{\ell,j,g}} \times \\
		 &\qquad \int_{-\Delta_{\ell,j,g}+\theta_{\ell,j,g}}^{\Delta_{\ell,j,g}+\theta_{\ell,j,g}} e^{\jmath d(u-v)\sin\alpha}d\alpha \\
	  &  (\textnormal{user } m \textnormal{ is in group } g  \textnormal{ of cell } j).
\end{align*}
We assume that each BS has acquired imperfect CSI from uplink pilot transmissions with $\rho_{\rm tr} = 15 \, {\rm dB}$. In the downlink, we assume for simplicity that all BSs use the same normalized transmit power of $1$ with $\rho_{\rm dl}=\frac{P}{\sigma^2}=10 \, {\rm dB}$.

The objective of this section is to compare the network throughput of the proposed TPE precoding with that of conventional RZF precoding. To make a fair comparison, the coefficients of the TPE precoding are optimized as described in Remark~\ref{remark:mimic-RZF}. More specifically, each user weight $\nu_{j,m}$ in the semi-definite relaxation problem \eqref{eq:prob} is set to the asymptotic rate that the same user would achieve using RZF precoding. Consequently, the relative differences in network throughput that we will observe in this section hold approximately also for the achievable rate of each UT.

Using Monte-Carlo simulations, we show in Fig.~\ref{fig:network} the average rate per UT, which is defined as
$$
\frac{1}{KL}\sum_{j=1}^L\sum_{m=1}^K \mathbb{E}\left[\log_2\left(1+\gamma_{j,m}\right)\right].
$$
We consider a scenario with $K=40$ users in each cell and different number of antennas at each BS: $M \in \{ 80, \, 160, \, 240,\, 320, \, 400 \}$. The TPE order is the same in all cells: $J = J_j, \forall j$.
As expected, the user rates increase drastically with the number of antennas, due to the higher spatial resolution. The throughput also increases monotonically  with the TPE order $J_j$, as the number of degrees of freedom becomes larger. Note that, if $J_j$ is equal to $4$, increasing $J_j$ leads to a negligible performance improvement that might not justify the increased complexity of having a greater $J_j$. TPE orders of less than $4$ can be relevant in situations when the need for interference-suppression is smaller than usual, for example, if $M/K$ is large (so that the user channels are likely to be near-orthogonal) or when the UTs anticipate small SINRs, due to low performance requirements or large cell sizes. The TPE order is limited only by the available hardware resources and we recall from \cite{Kammoun2014a} that increasing $J_j$ corresponds solely to duplicating already employed circuitry.

Contrary to the single-cell case analyzed in \cite{Kammoun2014a}, where TPE precoding was merely a low-complexity approximation of the optimal RZF precoding, we observe in Fig.~\ref{fig:network} that TPE precoding achieves higher user rates for all $J_j\geq 5$ than the suboptimal RZF precoding (obtained for $\varphi=\sigma^2$). This is due to the optimization of the polynomial coefficients in Section \ref{sec:optimum}, which enables a certain amount of inter-cell coordination, a feature which could not be implemented easily for RZF precoding in multi-cell scenarios.
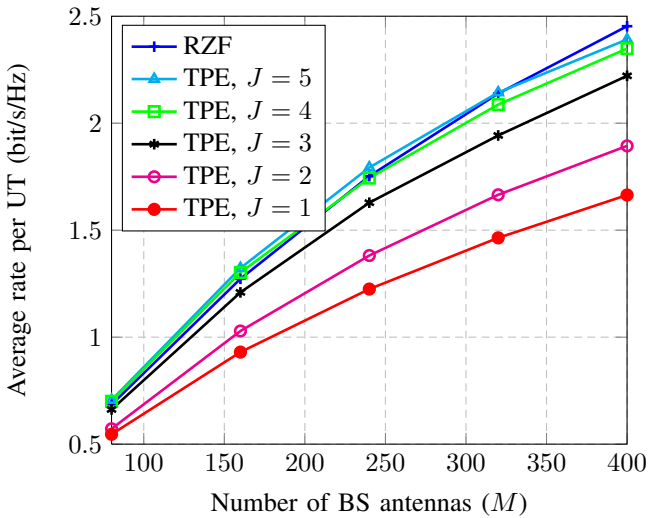
\begin{figure}
	\begin{center}
   \begin{tikzpicture}[scale=1,font=\normalsize]
    \renewcommand{\axisdefaulttryminticks}{4}
    \tikzstyle{every major grid}+=[style=densely dashed]
    \tikzstyle{every axis y label}+=[yshift=-10pt]
    \tikzstyle{every axis x label}+=[yshift=5pt]
    \tikzstyle{every axis legend}+=[cells={anchor=west},fill=white,
        at={(0.02,0.98)}, anchor=north west, font=\normalsize ]
    \begin{axis}[
      xmin=80,
      ymin=0.5,
      xmax=400,
        ymax=2.5,
      grid=major,
      scaled ticks=true,
      xlabel={Number of BS antennas ($M$)},
   			ylabel={Average rate per UT (bit/s/Hz) }			
      ]
      \addplot[color=blue,mark size=2pt,mark =+,line width=1pt] coordinates{
(80.000000,0.684886)(160.000000,1.272637)(240.000000,1.752463)(320.000000,2.138525)(400.000000,2.453235)
            };
      \addlegendentry{ {RZF} }
      
      \addplot[color=cyan,mark size=2pt,mark =triangle,line width=1pt] coordinates{
(80.000000,0.706129)(160.000000,1.322645)(240.000000,1.792772)(320.000000,2.143158)(400.000000,2.391505)
};
      \addlegendentry{ {TPE, $J=5$} }
      
      \addplot[color=green,mark size=2pt,mark =square,line width=1pt] coordinates{
(80.000000,0.700194)(160.000000,1.302508)(240.000000,1.743691)(320.000000,2.087106)(400.000000,2.348614)
};
      \addlegendentry{ {TPE, $J=4$} }
      
      \addplot[color=black,mark size=2pt,mark =asterisk,line width=1pt] coordinates{
(80.000000,0.661990)(160.000000,1.209252)(240.000000,1.628504)(320.000000,1.942589)(400.000000,2.221316)
};
     \addlegendentry{ {TPE, $J=3$} }
     
      \addplot[color=magenta,mark size=2pt,mark =o,line width=1pt] coordinates{
(80.000000,0.571875)(160.000000,1.028783)(240.000000,1.381546)(320.000000,1.665437)(400.000000,1.893923)
};
      \addlegendentry{ {TPE, $J=2$} }
      
      \addplot[color=red,mark size=2pt,mark =*,line width=1pt] coordinates{
(80.000000,0.546061)(160.000000,0.930156)(240.000000,1.224820)(320.000000,1.464116)(400.000000,1.664353)
     };
      \addlegendentry{ {TPE, $J=1$} }
      
    \end{axis}
  \end{tikzpicture}
    \caption{Comparison between conventional RZF precoding and the proposed TPE precoding with different orders $J = J_j, \forall j$.}
  \label{fig:network}
  \end{center}
\end{figure}

From the results of our work in \cite{Kammoun2014a}, we expected that RZF precoding would provide the highest performance if the regularization coefficient is optimized properly. To confirm this intuition, we consider the case where all BSs employ the same regularization coefficient $\varphi$. Fig.~\ref{fig:optimal_rzf} shows the performance of the RZF and TPE precoding schemes as a function of $\varphi$, when $K=100$, $M=250$, and $J=5$. We remind the reader that the TPE precoding scheme indirectly depends on the regularization coefficient $\varphi$, since while solving the optimization problem \eqref{eq:prob-single-user}, we choose the user weights $\nu_{j,m}$  as the asymptotic rates that are achieved by RZF precoding. Fig.~\ref{fig:optimal_rzf} shows that RZF precoding provides the highest performance if the regularization coefficient is chosen very carefully, but TPE precoding is generally competitive in terms of both user performance and implementation complexity.

\begin{figure}
	\begin{center}
   \begin{tikzpicture}[scale=1,font=\normalsize]
    \renewcommand{\axisdefaulttryminticks}{4}
    \tikzstyle{every major grid}+=[style=densely dashed]
    \tikzstyle{every axis y label}+=[yshift=-10pt]
    \tikzstyle{every axis x label}+=[yshift=5pt]
    \tikzstyle{every axis legend}+=[cells={anchor=west},fill=white,
        at={(0.98,0.98)}, anchor=north east, font=\normalsize ]
    \begin{axis}[
      xmin=0,
      ymin=0.8,
      xmax=0.6,
        ymax=1,
      grid=major,
      scaled ticks=true,
      xlabel={ Regularization coefficient $\varphi$},
   			ylabel={Average rate per UT (bit/s/Hz) }			
      ]
\addplot[color=black,mark size=2pt,mark =asterisk,line width=1pt] coordinates{
	     (0.01,0.9547)(0.015,0.9957)(0.1,0.9695)(0.2,0.9005)(0.3,0.8604)(0.4,0.8348)(0.5,0.8164)(0.6,0.8025)
  };
   \addlegendentry{RZF}
 \addplot[color=red,mark size=2pt,mark =square,line width=1pt] coordinates{
	      (0.01,0.9770)(0.015,0.9807)(0.1,0.9806)(0.2,0.9726)(0.3,0.9523)(0.4,0.9340)(0.5,0.9181)(0.6,0.9049)
	      };
	      \addlegendentry{TPE ($J=5$)}
 \addplot[color=magenta,mark size=2pt,mark =x,line width=1pt] coordinates{
(0.010000,0.817519)(0.015000,0.804101)(0.100000,0.903100)(0.200000,0.898846)(0.300000,0.869214)(0.400000,0.854082)(0.500000,0.838469)(0.600000,0.823888)
};
\addlegendentry{TPE ($J=3$)}
\end{axis}
 \end{tikzpicture}
    \caption{Comparison between RZF precoding and TPE precoding for a varying regularization coefficient in RZF.}
  \label{fig:optimal_rzf}
  \end{center}
\end{figure}
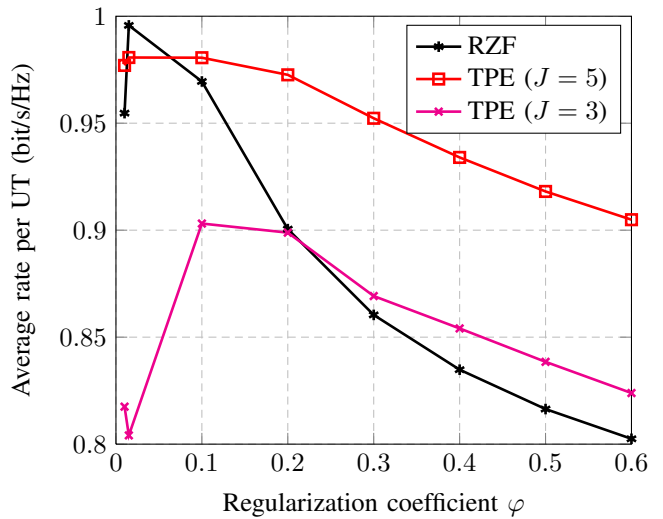

In an additional experiment, we investigate how the performance depends on the effective training SNR ($\rho_{\rm tr}$). Fig.~\ref{fig:training} shows the average rate per UT for $K=100$, $M=250$, $J\in\left\{3,5\right\}$, and $\varphi=0.01$. Note that, as expected, both precoding schemes achieve higher performance as the effective training SNR increases.

\begin{figure}
	\begin{center}
   \begin{tikzpicture}[scale=1,font=\normalsize]
    \renewcommand{\axisdefaulttryminticks}{4}
    \tikzstyle{every major grid}+=[style=densely dashed]
    \tikzstyle{every axis y label}+=[yshift=-10pt]
    \tikzstyle{every axis x label}+=[yshift=5pt]
    \tikzstyle{every axis legend}+=[cells={anchor=west},fill=white,
        at={(0.02,0.98)}, anchor=north west, font=\normalsize ]
    \begin{axis}[
      xmin=0.1,
      ymin=0,
      xmax=12,
        ymax=0.9,
      grid=major,
      scaled ticks=true,
      xlabel={ $\rho_{\rm tr}$},
   			ylabel={Average rate per UT (bit/s/Hz) }			
      ]
      \addplot[color=black,mark size=2pt,mark =triangle,line width=1pt] coordinates{
(0,0.2114)(4,0.3457)(8,0.5581)(12,0.8243)
};
      \addlegendentry{ {TPE, $J=5$} }
      
      \addplot[color=blue,mark size=2pt,mark =+,line width=1pt] coordinates{
      (0,0.1703)(4,0.315)(8,0.529)(12,0.7808)
      };
      \addlegendentry{ {RZF} }
                \end{axis}
  \end{tikzpicture}
    \caption{Comparison between RZF precoding and TPE precoding for a varying effective training SNR $\rho_{\rm tr}$.}
  \label{fig:training}
  \end{center}
\end{figure}
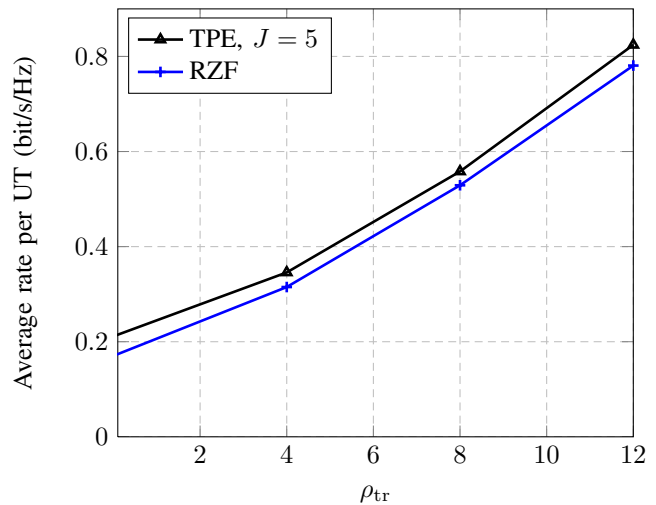

The observed high performance of our TPE precoding scheme is essentially due to the good accuracy of the asymptotic deterministic equivalents. To assess how accurate our asymptotic results are, we show in Fig.~\ref{fig:approximation} the empirical and theoretical UT rates with TPE precoding ($J_j=5$) and RZF precoding with respect to $M$, when $\varphi=\frac{M\sigma^2}{K}$.
 We see that the deterministic equivalents yield a good accuracy even for finite system dimensions. Similar accuracies are also achieved for other regularization factors (recall from Fig.~\ref{fig:network} that the value $\varphi=\frac{M\sigma^2}{K}$ is not optimal), but we chose to visualize a case where the differences between TPE and RZF are large so that the curves are non-overlapping.


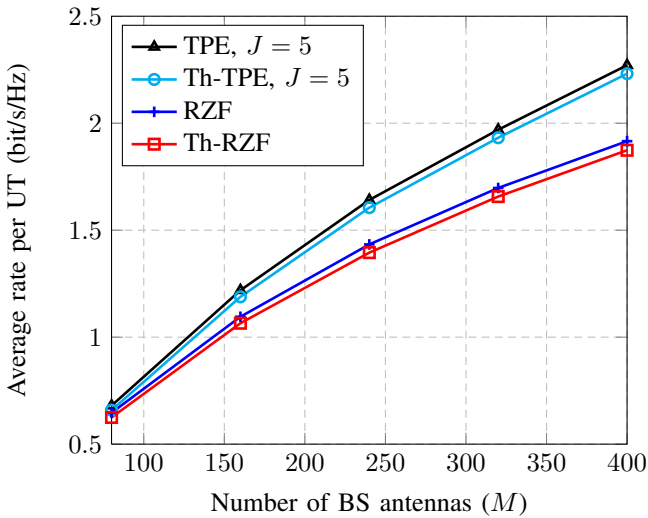
\begin{figure}
	\begin{center}
   \begin{tikzpicture}[scale=1,font=\normalsize]
    \renewcommand{\axisdefaulttryminticks}{4}
    \tikzstyle{every major grid}+=[style=densely dashed]
    \tikzstyle{every axis y label}+=[yshift=-10pt]
    \tikzstyle{every axis x label}+=[yshift=5pt]
    \tikzstyle{every axis legend}+=[cells={anchor=west},fill=white,
        at={(0.02,0.98)}, anchor=north west, font=\normalsize ]
    \begin{axis}[
      xmin=80,
      ymin=0.5,
      xmax=400,
        ymax=2.5,
      grid=major,
      scaled ticks=true,
      xlabel={Number of BS antennas ($M$)},
   			ylabel={Average rate per UT (bit/s/Hz) }			
      ]
      \addplot[color=black,mark size=2pt,mark =triangle,line width=1pt] coordinates{
(80.000000,0.680052)(160.000000,1.219182)(240.000000,1.641179)(320.000000,1.969436)(400.000000,2.270886)
};
      \addlegendentry{ {TPE, $J=5$} }
            \addplot[color=cyan,mark size=2pt,mark =o,line width=1pt] coordinates{
		    (80.000000,0.658750)(160.000000,1.188404)(240.000000,1.605213)(320.000000,1.931940)(400.000000,2.231242)
		};
      \addlegendentry{ {Th-TPE, $J=5$} }
      
      \addplot[color=blue,mark size=2pt,mark =+,line width=1pt] coordinates{
      (80.000000,0.646721)(160.000000,1.095424)(240.000000,1.433480)(320.000000,1.697219)(400.000000,1.915990)
      };
      \addlegendentry{ {RZF} }
      \addplot[color=red,mark size=2pt,mark =square,line width=1pt] coordinates{
(80.000000,0.624880)(160.000000,1.065109)(240.000000,1.395087)(320.000000,1.657076)(400.000000,1.873531)      };
      \addlegendentry{ {Th-RZF } }

    \end{axis}
  \end{tikzpicture}
    \caption{Comparison between the empirical and theoretical user rates. This figure illustrates the asymptotic accuracy of the deterministic approximations.}
  \label{fig:approximation}
  \end{center}
\end{figure}

\section{Conclusion}

This paper generalizes the recently proposed TPE precoder to multi-cell large scale MIMO systems.
This class of precoders originates from the high-complexity RZF precoding scheme by approximating the regularized channel inversion by a truncated polynomial expansion. 

The model includes important multi-cell characteristics, such as user-specific channel statistics, pilot contamination, different TPE orders in different cells, and cell-specific power constraints. We derived  asymptotic SINR expressions, which depend only on channel statistics, that are exploited to optimize the polynomial coefficients in an offline manner. 

The effectiveness of the proposed TPE precoding is illustrated numerically.
Contrary to the single-cell case, where  RZF leads to a near-optimal performance when the regularization coefficient is properly chosen, the use of the RZF precoding in the multi-cell scenario is more delicate. Until now, there is no general rule for the selection of its regularization coefficients.
This enabled us to achieve higher throughput with our TPE precoding for certain scenarios. This is a remarkable result, because TPE precoding therefore has \emph{both} lower complexity and better throughput. This is explained by the use of optimal polynomial coefficients in TPE precoding, while the corresponding optimization of the regularization matrix in RZF precoding has not been obtained so far.

\appendices

\section{Some Useful Results}
 \begin{lemma}[Common inverses of resolvents] \label{lemma:woodbury}
     Given any matrix $\widehat{\bf H} \in \mathbb{C}^{M\times K}$, let $\widehat{\bf h}_k$ denote its $k$th column and $\widehat{\bf H}_k$ be the matrix obtained after removing the $k$th column from $\widehat{\bf H}$.
      The resolvent matrices of $\widehat{\bf H}$ and $\widehat{\bf H}_k$ are denoted by
\begin{align*}
      {\bf Q}(t)&=\left(\frac{t}{K}\widehat{\bf H}\widehat{\bf H}^{\mbox{\tiny H}}+{\bf I}_M\right)^{-1} \\
      {\bf Q}_k(t)&=\left(\frac{t}{K}\widehat{\bf H}_k\widehat{\bf H}_k^{\mbox{\tiny H}}+{\bf I}_M\right)^{-1}
\end{align*}
respectively. It then holds, that
      \begin{equation*}
      {\bf Q}(t)={\bf Q}_k(t)-\frac{1}{K}\frac{t{\bf Q}_k(t)\widehat{\bf h}_k\widehat{\bf h}_k^{\mbox{\tiny H}}{\bf Q}_k(t)}{1+\frac{t}{K}\widehat{\bf h}_k^{\mbox{\tiny H}}{\bf Q}_k(t)\widehat{\bf h}_k}
      \label{eq:Q_k}
      \end{equation*}
      and also
     \begin{equation}
	     {\bf Q}(t)\widehat{\bf h}_k=\frac{{\bf Q}_k(t)\widehat{\bf h}_k}{1+\frac{t}{K}\widehat{\bf h}_k^{\mbox{\tiny H}}{\bf Q}_k(t)\widehat{\bf h}_k} .
	     \label{eq:Q_k_h}
     \end{equation}
     \end{lemma}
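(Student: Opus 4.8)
The plan is to derive both identities from the Sherman--Morrison rank-one update formula. The first observation is that, since $\widehat{\bf H}_k$ is obtained from $\widehat{\bf H}$ by deleting its $k$th column, the Gram matrices differ by a rank-one term, $\widehat{\bf H}\widehat{\bf H}^{\mbox{\tiny H}} = \widehat{\bf H}_k\widehat{\bf H}_k^{\mbox{\tiny H}} + \widehat{\bf h}_k\widehat{\bf h}_k^{\mbox{\tiny H}}$, so that the inverse resolvents satisfy ${\bf Q}(t)^{-1} = {\bf Q}_k(t)^{-1} + \frac{t}{K}\widehat{\bf h}_k\widehat{\bf h}_k^{\mbox{\tiny H}}$. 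For $t>0$ both ${\bf Q}(t)^{-1}$ and ${\bf Q}_k(t)^{-1}$ are the sum of a positive semidefinite matrix and ${\bf I}_M$, hence invertible, and the scalar $1+\frac{t}{K}\widehat{\bf h}_k^{\mbox{\tiny H}}{\bf Q}_k(t)\widehat{\bf h}_k$ that appears below is strictly positive; this is the only well-definedness point worth stating explicitly.

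Next I would apply the Sherman--Morrison identity $({\bf A}+{\bf u}{\bf v}^{\mbox{\tiny H}})^{-1} = {\bf A}^{-1} - \frac{{\bf A}^{-1}{\bf u}{\bf v}^{\mbox{\tiny H}}{\bf A}^{-1}}{1+{\bf v}^{\mbox{\tiny H}}{\bf A}^{-1}{\bf u}}$ with ${\bf A}={\bf Q}_k(t)^{-1}$, ${\bf u}=\frac{t}{K}\widehat{\bf h}_k$ and ${\bf v}=\widehat{\bf h}_k$. Substituting ${\bf A}^{-1}={\bf Q}_k(t)$ and pulling the factor $\frac{t}{K}$ out of the numerator reproduces verbatim the first displayed identity; the formula holds over $\mathbb{C}$ with conjugate transposes exactly as over $\mathbb{R}$, so no extra argument is needed.

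Finally, for the second identity I would right-multiply the first one by $\widehat{\bf h}_k$ and introduce the scalar $c := \frac{t}{K}\widehat{\bf h}_k^{\mbox{\tiny H}}{\bf Q}_k(t)\widehat{\bf h}_k$, obtaining ${\bf Q}(t)\widehat{\bf h}_k = {\bf Q}_k(t)\widehat{\bf h}_k - \frac{c}{1+c}{\bf Q}_k(t)\widehat{\bf h}_k = \frac{1}{1+c}{\bf Q}_k(t)\widehat{\bf h}_k$, which is precisely \eqref{eq:Q_k_h}. There is no genuine obstacle in this lemma: it is a direct application of a standard identity, and the proof is essentially the two-line computation just sketched, with the invertibility remark included for completeness.
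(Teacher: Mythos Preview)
Your proposal is correct and complete; the lemma is a direct instance of the Sherman--Morrison formula, exactly as you outline. The paper itself does not supply a proof for this lemma---it lists it under ``Some Useful Results'' as a standard identity---so your argument is the natural and expected one.
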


\begin{lemma}[Convergence of quadratic forms \cite{SIL98}]
	\label{lemma:quadratic}
Let ${\bf x}_M=\left[X_1,\ldots,X_M\right]^{\mbox{\tiny T}}$ be an $M\times 1$ vector where the $X_n$ are i.i.d.\ Gaussian complex random variables  with unit variance. Let ${\bf A}_M$ be an $M\times M$ matrix independent of ${\bf x}_M$ whose spectral norm is bounded; that is, there exists $C_A$ such that $\|{\bf A}\|\leq C_A$. Then, for any $p\geq 2$, there exists a constant $C_p$, depending only in $p$, such that
\begin{align}
	& \mathbb{E}_{{\bf x}_M} \left[ \left|\frac{1}{M}{\bf x}_M^{\mbox{\tiny H}}{\bf A}_M{\bf x}_M - \frac{1}{M}\tr ({\bf A}_M) \right|^p \right] \leq \nonumber \\
	& \ \frac{C_p}{M^p}\left(\left(\mathbb{E}|X_1|^4\tr\left({{\bf A}{\bf A}^{\mbox{\tiny H}}}\right)\right)^{p/2} \right.								\left.+\mathbb{E}|X_1|^{2p}\tr\left({\bf A}{\bf A}^{\mbox{\tiny H}}\right)^{p/2}\right)\label{eq:quadratic_form}
\end{align}
where the expectation is taken over the distribution of ${\bf x}_M$.
Noticing, that $\tr\left({{\bf A}{\bf A}^{\mbox{\tiny H}}}\right)\leq M\|{\bf A}\|^2$ and that $\tr\left({{\bf A}{\bf A}^{\mbox{\tiny H}}}\right)^{p/2}\leq M\|{\bf A}\|^{p}$, we obtain the simpler inequality:
$$
\mathbb{E}_{{\bf x}_M} \left[ \left|\frac{1}{M}{\bf x}_M^{\mbox{\tiny H}}{\bf A}_M{\bf x}_M - \frac{1}{M}\tr ({\bf A}_M) \right|^p \right] \leq \frac{C_p^{'}\|{\bf A}\|^p}{M^{p/2}}$$
where $C_p^{'}=C_p\left(\left(\mathbb{E} [ |X_1|^4 ] \right)^{p/2}+\mathbb{E} [ |X_1|^{2p} ] \right)$.
By choosing $p\geq 4$, we thus have that
$$
\frac{1}{M}{\bf x}^{\mbox{\tiny H}}{\bf A}_M{\bf x}-\frac{1}{M} \tr ({\bf A}_M) \xrightarrow[M\to+\infty]{a.s.} 0.
$$
\end{lemma}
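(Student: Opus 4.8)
The plan is to establish the core moment inequality \eqref{eq:quadratic_form} first, since the ``simpler inequality'' and the almost-sure convergence are routine consequences of it. Write $Y={\bf x}_M^{\mbox{\tiny H}}{\bf A}{\bf x}_M-\tr({\bf A})$. Because the $X_n$ are i.i.d.\ with $\E|X_n|^2=1$, one has $\E[Y]=0$, and expanding in coordinates gives the split into a diagonal and an off-diagonal part, $Y=\sum_i A_{ii}(|X_i|^2-1)+\sum_{i\neq j}\bar{X}_i A_{ij}X_j$. The natural tool is a martingale decomposition with respect to the filtration generated by the successive coordinates, combined with Burkholder's inequality.

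Concretely, I would let $\E_k[\cdot]=\E[\cdot\mid X_1,\ldots,X_k]$ and set $\gamma_k=(\E_k-\E_{k-1})Y$. A direct computation (using $\E_k[X_j]=0$ for $j>k$) shows that $(\gamma_k)_k$ is a martingale-difference sequence with $\gamma_k=A_{kk}(|X_k|^2-1)+\bar{X}_k\sum_{j<k}A_{kj}X_j+X_k\sum_{i<k}\bar{X}_i A_{ik}$. Burkholder's inequality for $p\geq 2$ then yields
\[
\E|Y|^p\leq C_p\Bigg[\E\Big(\sum_k \E_{k-1}|\gamma_k|^2\Big)^{p/2}+\sum_k \E|\gamma_k|^p\Bigg].
\]
For the conditional quadratic variation, circular symmetry ($\E[X_k^2]=0$ and $\E[|X_k|^2\bar{X}_k]=0$) removes all cross terms, leaving $\E_{k-1}|\gamma_k|^2=|A_{kk}|^2(\E|X_1|^4-1)+\big|\sum_{j<k}A_{kj}X_j\big|^2+\big|\sum_{i<k}\bar{X}_i A_{ik}\big|^2$. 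Summing over $k$ expresses $\sum_k\E_{k-1}|\gamma_k|^2$ through nonnegative quadratic forms ${\bf x}_M^{\mbox{\tiny H}}{\bf L}^{\mbox{\tiny H}}{\bf L}{\bf x}_M$ in the strictly triangular parts ${\bf L}$ of ${\bf A}$, whose means reproduce $\tr({\bf A}{\bf A}^{\mbox{\tiny H}})$; this is the source of the $\E|X_1|^4\,\tr({\bf A}{\bf A}^{\mbox{\tiny H}})$ term. The jump term $\sum_k\E|\gamma_k|^p$ is controlled by independence together with the inequality $\|v\|_p\leq\|v\|_2$ valid for $p\geq 2$, and produces the $\E|X_1|^{2p}\,\tr({\bf A}{\bf A}^{\mbox{\tiny H}})^{p/2}$ term.

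The main obstacle is obtaining the $p/2$-th moment, rather than merely the mean, of the conditional variation $\sum_k\E_{k-1}|\gamma_k|^2$: this quantity is itself a quadratic form, so I expect to control it by writing ${\bf x}_M^{\mbox{\tiny H}}{\bf B}{\bf x}_M=\tr({\bf B})+\big({\bf x}_M^{\mbox{\tiny H}}{\bf B}{\bf x}_M-\tr({\bf B})\big)$ and bootstrapping, either by induction on the dyadic scale of $p$ or through a Rosenthal/Marcinkiewicz--Zygmund inequality, so as to reduce to strictly lower-order moments that close the estimate; this is precisely the delicate step carried out in \cite{SIL98}. Once \eqref{eq:quadratic_form} is in hand, the simpler bound follows immediately by inserting $\tr({\bf A}{\bf A}^{\mbox{\tiny H}})\leq M\|{\bf A}\|^2$ and $\tr({\bf A}{\bf A}^{\mbox{\tiny H}})^{p/2}\leq M\|{\bf A}\|^p$ and absorbing the moment factors into $C_p'=C_p((\E|X_1|^4)^{p/2}+\E|X_1|^{2p})$. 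Finally, for the almost-sure statement I would fix $\varepsilon>0$ and apply Markov's inequality: $\Pr\big(\big|\frac{1}{M}{\bf x}_M^{\mbox{\tiny H}}{\bf A}{\bf x}_M-\frac{1}{M}\tr({\bf A})\big|>\varepsilon\big)\leq \varepsilon^{-p}C_p'\|{\bf A}\|^p M^{-p/2}$. Since $\|{\bf A}\|$ is uniformly bounded and the choice $p\geq 4$ makes $\sum_M M^{-p/2}<\infty$, the Borel--Cantelli lemma gives $\frac{1}{M}{\bf x}_M^{\mbox{\tiny H}}{\bf A}{\bf x}_M-\frac{1}{M}\tr({\bf A})\xrightarrow{\mathrm{a.s.}}0$.
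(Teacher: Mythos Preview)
The paper does not supply a proof of this lemma: it is quoted as a known result from \cite{SIL98}, and only the two easy consequences (the ``simpler inequality'' obtained by inserting $\tr({\bf A}{\bf A}^{\mbox{\tiny H}})\le M\|{\bf A}\|^2$, and the almost-sure convergence via $p\ge 4$ and Borel--Cantelli) are spelled out inside the statement itself. So there is nothing in the paper to compare your argument against beyond those two steps, which you handle exactly as the paper does.

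Your sketch of the core inequality is the standard Bai--Silverstein proof: martingale-difference decomposition $\gamma_k=(\E_k-\E_{k-1})Y$, Burkholder's inequality, explicit identification of $\gamma_k$ and of $\E_{k-1}|\gamma_k|^2$, and then the bootstrapping step on the conditional quadratic variation. This is precisely the route taken in the cited reference, so your proposal is correct and aligned with the source the paper relies on. The only place where your write-up is still a sketch rather than a proof is the bootstrapping of $\E\big(\sum_k\E_{k-1}|\gamma_k|^2\big)^{p/2}$: in \cite{SIL98} this is closed by applying the already-established bound at exponent $p/2$ to the auxiliary quadratic form (induction on dyadic levels of $p$), together with the elementary inequality $|a+b|^{p/2}\le 2^{p/2}(|a|^{p/2}+|b|^{p/2})$; you have correctly flagged this step but would need to carry it through to have a complete argument.
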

\begin{corollary}
\label{corollary:zero_quadratic}
	Let ${\bf A}_M$ be as in Lemma \ref{lemma:quadratic}, and ${\bf x}_M,{\bf y}_M$ be random, mutually independent with complex Gaussian entries of zero mean and variance $1$. Then, for any $p\geq 2$ we have
$$
\mathbb{E} \left[ \left|\frac{1}{M} {\bf y}_M^{\mbox{\tiny H}}{\bf A}_M{\bf x}_M\right|^p \right] =O(M^{-p/2}).
$$
In particular,
$$
\frac{1}{M} {\bf y}_M^{\mbox{\tiny H}}{\bf A}_M{\bf x}_M\xrightarrow[M,K\to+\infty]{a.s.}0.
$$
\end{corollary}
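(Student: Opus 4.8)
The plan is to exploit the Gaussianity of the inner products in two stages --- first in ${\bf x}_M$, then in ${\bf y}_M$ --- and then invoke Lemma~\ref{lemma:quadratic}. First I would condition on ${\bf y}_M$ and on ${\bf A}_M$ (the latter being independent of ${\bf x}_M,{\bf y}_M$ with $\|{\bf A}_M\|\le C_A$). Setting ${\bf b}_M={\bf A}_M^{\mbox{\tiny H}}{\bf y}_M$, one has $\frac1M{\bf y}_M^{\mbox{\tiny H}}{\bf A}_M{\bf x}_M=\frac1M{\bf b}_M^{\mbox{\tiny H}}{\bf x}_M$; since ${\bf x}_M$ is independent of ${\bf y}_M$ with i.i.d.\ standard complex Gaussian entries, conditionally on ${\bf y}_M$ the scalar ${\bf b}_M^{\mbox{\tiny H}}{\bf x}_M$ is $\mathcal{CN}(0,\|{\bf b}_M\|^2)$. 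Hence $\E\big[\,|\frac1M{\bf b}_M^{\mbox{\tiny H}}{\bf x}_M|^p\,\big|\,{\bf y}_M\big]=c_p\,\|{\bf b}_M\|^p/M^p$, where $c_p$ is the $p$-th absolute moment of a standard complex Gaussian.

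The remaining step is to bound $\E[\|{\bf b}_M\|^p]=\E\big[({\bf y}_M^{\mbox{\tiny H}}{\bf B}_M{\bf y}_M)^{p/2}\big]$ with ${\bf B}_M={\bf A}_M{\bf A}_M^{\mbox{\tiny H}}\succeq{\bf 0}$ and $\|{\bf B}_M\|\le C_A^2$. I would write ${\bf y}_M^{\mbox{\tiny H}}{\bf B}_M{\bf y}_M=\tr({\bf B}_M)+\big({\bf y}_M^{\mbox{\tiny H}}{\bf B}_M{\bf y}_M-\tr({\bf B}_M)\big)$, apply the triangle inequality in $L^{p/2}$, and use Lemma~\ref{lemma:quadratic} on the centered term to get $\big\|{\bf y}_M^{\mbox{\tiny H}}{\bf B}_M{\bf y}_M-\tr({\bf B}_M)\big\|_{L^{p/2}}\le C\,M^{1/2}\|{\bf B}_M\|$; combined with $\tr({\bf B}_M)\le M\|{\bf B}_M\|$ this gives $\E[\|{\bf b}_M\|^p]\le(C'M)^{p/2}$. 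Putting the two bounds together,
\[
\E\Big[\big|\tfrac1M{\bf y}_M^{\mbox{\tiny H}}{\bf A}_M{\bf x}_M\big|^p\Big]=\frac{c_p}{M^p}\,\E[\|{\bf b}_M\|^p]\le c_p(C')^{p/2}M^{-p/2}=\mathcal{O}(M^{-p/2}),
\]
which is the first assertion.

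For the almost sure convergence I would take $p>2$ (e.g.\ $p=4$) and apply Markov's inequality: $\mathbb{P}\big(|\tfrac1M{\bf y}_M^{\mbox{\tiny H}}{\bf A}_M{\bf x}_M|>\varepsilon\big)\le\varepsilon^{-p}\,\mathcal{O}(M^{-p/2})$, which is summable over $M$, so the Borel--Cantelli lemma yields $\tfrac1M{\bf y}_M^{\mbox{\tiny H}}{\bf A}_M{\bf x}_M\to0$ almost surely. The argument is essentially routine; the only step requiring some care is the uniform-in-$M$ control of the $(p/2)$-th moment of the quadratic form ${\bf y}_M^{\mbox{\tiny H}}{\bf A}_M{\bf A}_M^{\mbox{\tiny H}}{\bf y}_M$ --- in particular making sure the exponent $p/2$ is admissible in Lemma~\ref{lemma:quadratic} (which needs $p\ge4$, the range $2\le p<4$ then following from monotonicity of $L^q$-norms) and that the dominant contribution $\tr({\bf A}_M{\bf A}_M^{\mbox{\tiny H}})$ scales only like $M$, so that the final power of $M$ comes out as $-p/2$ and no worse.
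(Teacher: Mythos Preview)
Your argument is correct. The paper itself states this result as an unproved corollary of Lemma~\ref{lemma:quadratic}, so there is no explicit proof to compare against; the labelling ``corollary'' suggests the intended reduction is the standard block-matrix embedding: set ${\bf z}_M=\begin{pmatrix}{\bf x}_M\\ {\bf y}_M\end{pmatrix}\in\mathbb{C}^{2M}$, which again has i.i.d.\ standard complex Gaussian entries, and $\widetilde{\bf A}_M=\begin{pmatrix}{\bf 0}&{\bf 0}\\ {\bf A}_M&{\bf 0}\end{pmatrix}$, so that ${\bf y}_M^{\mbox{\tiny H}}{\bf A}_M{\bf x}_M={\bf z}_M^{\mbox{\tiny H}}\widetilde{\bf A}_M{\bf z}_M$, $\tr(\widetilde{\bf A}_M)=0$, and $\|\widetilde{\bf A}_M\|=\|{\bf A}_M\|\le C_A$. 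Lemma~\ref{lemma:quadratic} applied to $({\bf z}_M,\widetilde{\bf A}_M)$ then gives the $\mathcal{O}(M^{-p/2})$ bound directly for every $p\ge 2$, and the almost sure convergence follows as you wrote.

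Compared with this, your two-stage conditioning argument is a genuinely different route: you first exploit the conditional Gaussianity of ${\bf b}_M^{\mbox{\tiny H}}{\bf x}_M$ and then control the quadratic form ${\bf y}_M^{\mbox{\tiny H}}{\bf A}_M{\bf A}_M^{\mbox{\tiny H}}{\bf y}_M$ via Lemma~\ref{lemma:quadratic}. This is slightly longer and forces the case split $p\ge4$ versus $2\le p<4$ (since Lemma~\ref{lemma:quadratic} is only stated for exponents $\ge2$), which the block-matrix trick avoids entirely. On the other hand, your approach makes the mechanism behind the rate $M^{-p/2}$ more explicit and does not rely on the embedding device. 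Both are valid; the embedding is simply the shortest path from the lemma to the corollary.
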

\begin{lemma}[Rank-one perturbation lemma]
\label{lemma:perturbation}
Let ${\bf Q}(t)$ and ${\bf Q}_k(t)$ be the resolvent matrices as defined in Lemma~\ref{lemma:woodbury}. Then, for any matrix ${\bf A}$ we have:
$$
\tr \big( {\bf A}\left({\bf Q}(t)-{\bf Q}_k(t)\right) \big) \leq \|{\bf A}\| .
$$
\end{lemma}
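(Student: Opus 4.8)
The plan is to reduce the claim to a scalar estimate by invoking the Sherman--Morrison identity already recorded in Lemma~\ref{lemma:woodbury}. Writing ${\bf Q}={\bf Q}(t)$ and ${\bf Q}_k={\bf Q}_k(t)$ for brevity, that identity gives ${\bf Q}-{\bf Q}_k=-\frac{1}{K}\frac{t\,{\bf Q}_k\widehat{\bf h}_k\widehat{\bf h}_k^{\mbox{\tiny H}}{\bf Q}_k}{1+\frac{t}{K}\widehat{\bf h}_k^{\mbox{\tiny H}}{\bf Q}_k\widehat{\bf h}_k}$. Using the cyclic property of the trace together with $\tr({\bf A}{\bf u}{\bf u}^{\mbox{\tiny H}})={\bf u}^{\mbox{\tiny H}}{\bf A}{\bf u}$ applied to ${\bf u}={\bf Q}_k\widehat{\bf h}_k$ then yields
$$
\tr\big({\bf A}({\bf Q}-{\bf Q}_k)\big)=-\,\frac{\frac{t}{K}\,\widehat{\bf h}_k^{\mbox{\tiny H}}{\bf Q}_k{\bf A}{\bf Q}_k\widehat{\bf h}_k}{1+\frac{t}{K}\,\widehat{\bf h}_k^{\mbox{\tiny H}}{\bf Q}_k\widehat{\bf h}_k},
$$
so it suffices to bound the modulus of the right-hand side by $\|{\bf A}\|$.

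The second step is to control numerator and denominator separately using the positivity of the resolvent. For $t>0$ and $\widehat{\bf H}_k\widehat{\bf H}_k^{\mbox{\tiny H}}\succeq{\bf 0}$, the matrix ${\bf Q}_k=\big(\frac{t}{K}\widehat{\bf H}_k\widehat{\bf H}_k^{\mbox{\tiny H}}+{\bf I}_M\big)^{-1}$ is Hermitian with all eigenvalues in $(0,1]$, hence ${\bf 0}\preceq{\bf Q}_k\preceq{\bf I}_M$ and therefore ${\bf Q}_k^2\preceq{\bf Q}_k$. For the denominator this gives $1+\frac{t}{K}\widehat{\bf h}_k^{\mbox{\tiny H}}{\bf Q}_k\widehat{\bf h}_k\geq 1>0$, so no small-denominator issue arises. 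For the numerator, sub-multiplicativity of the spectral norm gives $\big|\widehat{\bf h}_k^{\mbox{\tiny H}}{\bf Q}_k{\bf A}{\bf Q}_k\widehat{\bf h}_k\big|=\big|({\bf Q}_k\widehat{\bf h}_k)^{\mbox{\tiny H}}{\bf A}({\bf Q}_k\widehat{\bf h}_k)\big|\leq\|{\bf A}\|\,\|{\bf Q}_k\widehat{\bf h}_k\|^2=\|{\bf A}\|\,\widehat{\bf h}_k^{\mbox{\tiny H}}{\bf Q}_k^2\widehat{\bf h}_k\leq\|{\bf A}\|\,\widehat{\bf h}_k^{\mbox{\tiny H}}{\bf Q}_k\widehat{\bf h}_k$.

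Combining the two estimates, the modulus of the trace is at most $\|{\bf A}\|\,\frac{\frac{t}{K}\widehat{\bf h}_k^{\mbox{\tiny H}}{\bf Q}_k\widehat{\bf h}_k}{1+\frac{t}{K}\widehat{\bf h}_k^{\mbox{\tiny H}}{\bf Q}_k\widehat{\bf h}_k}\leq\|{\bf A}\|$, since $\frac{x}{1+x}\leq 1$ for every $x\geq 0$; this is exactly the claimed bound. I do not expect a genuine obstacle here: the whole argument is a routine rank-one perturbation computation, and the only points needing a word of care are that $t>0$ is what confines the spectrum of ${\bf Q}_k$ to $(0,1]$ (which simultaneously makes ${\bf Q}_k^2\preceq{\bf Q}_k$ valid and the denominator bounded away from zero), and that for a general, possibly non-Hermitian ${\bf A}$ the trace is complex, so the stated ``$\leq$'' is to be read as a bound on $\big|\tr\big({\bf A}({\bf Q}-{\bf Q}_k)\big)\big|$, which is what the above establishes.
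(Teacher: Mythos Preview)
Your argument is correct and is exactly the standard proof of this rank-one perturbation bound. The paper itself does not supply a proof of Lemma~\ref{lemma:perturbation}; it is listed among the ``useful results'' in the appendix as a known fact from the random matrix literature, so there is nothing to compare against. Your remark that the inequality should be read as a bound on $\big|\tr\big({\bf A}({\bf Q}-{\bf Q}_k)\big)\big|$ for general ${\bf A}$ is well taken.
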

\begin{lemma}[Leibniz formula for the derivatives of a product of functions]
	\label{lemma:derivative}
	Let $t\mapsto f(t)$ and $t\mapsto g(t)$ be two $n$ times differentiable functions. Then, the $n$th derivative of the product $f \cdot g$ is given by
	$$
	\frac{d^n f \cdot g}{dt^n}=\sum_{k=0}^n {n\choose k} \frac{d^k f}{dt^k}\frac{d^{n-k}g}{dt^{n-k}}.
	$$
\end{lemma}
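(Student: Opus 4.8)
The plan is to establish the identity by induction on the differentiation order $n$. For the base case $n=0$ the asserted formula reads $f\cdot g=\binom{0}{0}f^{(0)}g^{(0)}$, which is a tautology; equivalently, one may take $n=1$ as the base case, where the formula is exactly the ordinary product rule $(fg)'=f'g+fg'$.

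For the inductive step, assume the claim holds for a given $n\geq 0$, and let $f$ and $g$ be $(n+1)$ times differentiable. Since they are in particular $n$ times differentiable, the induction hypothesis gives $(fg)^{(n)}=\sum_{k=0}^{n}\binom{n}{k}f^{(k)}g^{(n-k)}$, and each summand is differentiable because $f$ and $g$ admit derivatives of order $n+1$. Differentiating the right-hand side termwise with the ordinary product rule yields $\sum_{k=0}^{n}\binom{n}{k}\big(f^{(k+1)}g^{(n-k)}+f^{(k)}g^{(n+1-k)}\big)$. I would then reindex the first of the two resulting sums via $k\mapsto k-1$, so that both sums run over products of the form $f^{(k)}g^{(n+1-k)}$, and collect the coefficient of this product. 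Using Pascal's rule $\binom{n}{k-1}+\binom{n}{k}=\binom{n+1}{k}$, together with the conventions $\binom{n}{-1}=\binom{n}{n+1}=0$ to absorb the boundary terms at $k=0$ and $k=n+1$, the total coefficient of $f^{(k)}g^{(n+1-k)}$ becomes $\binom{n+1}{k}$, so that $(fg)^{(n+1)}=\sum_{k=0}^{n+1}\binom{n+1}{k}f^{(k)}g^{(n+1-k)}$. This is the desired formula at order $n+1$, completing the induction.

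There is no genuine obstacle here: the statement is a classical elementary fact, recalled only to streamline the derivative computations appearing elsewhere in the appendix (e.g.\ in the recursions for $\overline{X}_{j,m}^{(n)}$ and $\overline{Z}_{\ell,j,m}^{(n)}$). The single point requiring a little care is the index bookkeeping when the two shifted sums are merged and the two boundary terms are handled; the differentiability hypothesis on $f$ and $g$ (here up to order $n+1$) is used precisely to license differentiating each term in the inductive step.
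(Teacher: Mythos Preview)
Your induction argument is correct and is the standard textbook proof of the Leibniz rule. The paper itself does not give a proof of this lemma at all: it is stated as a classical fact in the appendix of auxiliary results, and is then immediately applied (to $t\mapsto tf(t)$) without justification. So there is nothing to compare against; your proposal simply supplies the missing elementary verification.
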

Applying Lemma \ref{lemma:derivative} to the function $t\mapsto tf(t)$, we obtain the following result.
\begin{corollary}
	The $n$th derivative of $t\mapsto tf(t)$ at $t=0$ yields
$$
\lb\frac{d^n tf(t)}{dt^n}\rabs_{t=0}=n\lb\frac{d^{n-1}f}{dt^{n-1}}\rabs_{t=0}.
$$
\end{corollary}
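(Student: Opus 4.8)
The plan is to view $t \mapsto t f(t)$ as a product of the two functions $g(t) = t$ and $f(t)$ and then invoke the Leibniz rule of Lemma~\ref{lemma:derivative}. First I would write, for $f$ that is $n$ times differentiable,
\[
\frac{d^n\, t f(t)}{dt^n} = \sum_{k=0}^n \binom{n}{k} \left(\frac{d^k}{dt^k} t\right) \frac{d^{n-k} f}{dt^{n-k}} .
\]
The key observation is that the factor $\frac{d^k}{dt^k} t$ is especially simple: it equals $t$ when $k=0$, equals $1$ when $k=1$, and vanishes identically for every $k \ge 2$. Hence all but two terms of the sum drop out and it collapses to
\[
\frac{d^n\, t f(t)}{dt^n} = t\,\frac{d^n f}{dt^n} + n\,\frac{d^{n-1} f}{dt^{n-1}} .
\]

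Evaluating this identity at $t = 0$ then removes the first term, because it carries the explicit prefactor $t$, and what remains is exactly
\[
\left.\frac{d^n\, t f(t)}{dt^n}\right|_{t=0} = n \left.\frac{d^{n-1} f}{dt^{n-1}}\right|_{t=0},
\]
which is the claimed formula. In the places where the corollary is applied, $f$ is one of the infinitely differentiable resolvent functionals (such as $X_{j,m}(t)$, $Z_{\ell,j,m}(t)$, or $\delta_{\ell,k}(t)$), so the differentiability hypothesis needed for Lemma~\ref{lemma:derivative} is satisfied automatically.

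There is no genuine obstacle in this argument: it is a one-line specialization of the general Leibniz rule, and the only point requiring a moment's care is the bookkeeping of which of the $n+1$ Leibniz terms survive after setting $t=0$ (namely only $k=1$, since $k=0$ is killed by the prefactor $t$ and $k\ge 2$ is killed by $g^{(k)}\equiv 0$).
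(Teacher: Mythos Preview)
Your argument is correct and is exactly the approach indicated in the paper, which simply states that the corollary follows by applying the Leibniz rule of Lemma~\ref{lemma:derivative} to the product $t\mapsto t f(t)$. You have merely spelled out the two surviving terms and the evaluation at $t=0$ that the paper leaves implicit.
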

\section{Proof of Theorem \ref{th:main}}
\label{app:main_eq}
The objective of this section is to find deterministic equivalents for $\mathbb{E}\left[X_{j,m}(t)\right]$ and $\mathbb{E}\left[Z_{j,m}(t)\right]$.
 These  quantities involve the resolvent matrix
$$
\bsigma(t,j)=\left(t\frac{\widehat{\bf H}_{j,j}\widehat{\bf H}_{j,j}^{\mbox{\tiny H}}}{K}+{\bf I}_M\right)^{-1}.
$$
For technical reasons, the resolvent matrix $\bsigma_{m}(t,j)$, that is obtained by removing the contribution of vector $\widehat{\bf h}_{j,j,m}$ will  be extensively used.
In particular, if $\widehat{\bf H}_{j,j,-m}$ denotes the matrix $\widehat{\bf H}_{j,j}$ after removing the $m$th column,  $\bsigma_{m}(t,j)$ is given by
$$
\bsigma_{m}(t,j)=\left(t\frac{\widehat{\bf H}_{j,j,-m}\widehat{\bf H}_{j,j,-m}^{\mbox{\tiny H}}}{K}+{\bf I}_M\right)^{-1}.
$$
With this notation on hand, we are now in position to prove Theorem \ref{th:main}.
In the sequel, we will mean by "controlling a certain quantity" the study of its asymptotic behaviour in the asymptotic regime.

\subsection{Controlling $X_{j,m}(t)$ and $Z_{\ell,j,m}(t)$}
Next, we study sequentially the random quantities $X_{j,m}(t)$ and $Z_{\ell,j,m}(t)$.
Using Lemma \ref{lemma:woodbury}, the matrix $\bsigma(t,j)$ writes as
\begin{equation}
\bsigma(t,j)=\bsigma_{m}(t,j)-\frac{t}{K}\frac{\bsigma_m(t,j)\widehat{\bf h}_{j,j,m}\widehat{\bf h}_{j,j,m}^{\mbox{\tiny H}}\bsigma_{m}(t,j)}{1+\frac{t}{K}\widehat{\bf h}_{j,j,m}^{\mbox{\tiny H}}\bsigma_m(t,j)\widehat{\bf h}_{j,j,m}}.
\label{eq:essential}
\end{equation}
Plugging \eqref{eq:essential} into the expression of $X_{j,m}(t)$, we get
\begin{align}
X_{j,m}(t)&=\frac{1}{K}{\bf h}_{j,j,m}^{\mbox{\tiny H}}\bsigma_{m}(t,j)\widehat{\bf h}_{j,j,m}\nonumber\\
			       &-\frac{\frac{t}{K^2}{\bf h}_{j,j,m}^{\mbox{\tiny H}}\bsigma_{m}(t,j)\widehat{\bf h}_{j,j,m}\widehat{\bf h}_{j,j,m}^{\mbox{\tiny H}}\bsigma_{m}(t,j)\widehat{\bf h}_{j,j,m}^{\mbox{\tiny H}}}{1+\frac{t}{K}\widehat{\bf h}_{j,j,m}^{\mbox{\tiny H}}\bsigma_{m}(t,j)\widehat{\bf h}_{j,j,m}}\nonumber\\
		 &=\frac{\frac{1}{K}{\bf h}_{j,j,m}^{\mbox{\tiny H}}\bsigma_m(t,j)\widehat{\bf h}_{j,j,m}}{1+\frac{t}{K}\widehat{\bf h}_{j,j,m}^{\mbox{\tiny H}}\bsigma_m(t,j)\widehat{\bf h}_{j,j,m}}. \label{eq:X_jm}
\end{align}
Since ${\bf h}_{j,j,m}-\widehat{\bf h}_{j,j,m}$ is uncorrelated with $\widehat{\bf h}_{j,j,m}$, we have
$$
\mathbb{E}\left[X_{j,m}(t)\right]=\mathbb{E}\left[\frac{\frac{1}{K}\widehat{\bf h}_{j,j,m}^{\mbox{\tiny H}}\bsigma_m(t,j)\widehat{\bf h}_{j,j,m}}{1+\frac{t}{K}\widehat{\bf h}_{j,j,m}^{\mbox{\tiny H}}\bsigma_m(t,j)\widehat{\bf h}_{j,j,m}}\right].
$$
Using Lemma \ref{lemma:quadratic}, we then prove that
\begin{equation}
\frac{1}{K}\widehat{\bf h}_{j,j,m}^{\mbox{\tiny H}}\bsigma_m(t,j)\widehat{\bf h}_{j,j,m}-\frac{1}{K}\tr \big( \bPhi_{j,j,m}\bsigma_m(t,j) \big) \xrightarrow[M,K\to+\infty]{\mathrm{a.s.}}0.
\label{eq:first_eq}
\end{equation}
Applying the rank one perturbation Lemma \ref{lemma:perturbation},
\begin{equation}
\frac{1}{K}\tr \big( \bPhi_{j,j,m}\bsigma_m(t,j) \big) -\frac{1}{K}\tr \big( \bPhi_{j,j,m} \bsigma(t,j) \big) \xrightarrow[M,K\to+\infty]{\mathrm{a.s.}}0.
\label{eq:second_eq}
\end{equation}
On the other hand, Theorem \ref{th:deterministic_1} implies that
\begin{equation}
\frac{1}{K}\tr \big( \bPhi\bsigma(t,j) \big) -\frac{1}{K} \tr \big( \bPhi_{j,j,m}{\bf T}_j(t) \big) \xrightarrow[M,K\to+\infty]{\mathrm{a.s.}}0.
\label{eq:third_eq}
\end{equation}
Combining  \eqref{eq:first_eq}, \eqref{eq:second_eq}, and \eqref{eq:third_eq}, we obtain the following result:
$$
\frac{1}{K}\widehat{\bf h}_{j,j,m}^{\mbox{\tiny H}}\bsigma_m(t,j)\widehat{\bf h}_{j,j,m} -\delta_{j,m}(t)\xrightarrow[M,K\to+\infty]{\mathrm{a.s.}}0
$$
where we used the fact that $\delta_{j,m}(t)=\frac{1}{K} \tr ( \bPhi_{j,j,m}{\bf T}_j(t) )$.
Since $f:x\mapsto\frac{x}{tx+1}$ is bounded by $\frac{1}{t}$, the dominated convergence theorem \cite{BIL08} allow us to conclude that
$$
\mathbb{E}\left[X_{j,m}(t)\right]-\frac{\delta_{j,m}(t)}{1+t\delta_{j,m}(t)}\xrightarrow[M,K\to+\infty]{}0.
$$
We now move to the control of $\mathbb{E}\left[Z_{j,\ell,m}(t)\right]$. Similarly, we first decompose $\mathbb{E}\left[Z_{\ell,j,m}(t)\right]$, by using Lemma \ref{lemma:woodbury}, as
\begin{align*}
Z_{\ell,j,m}(t)&=\frac{1}{K}{\bf h}_{\ell,j,m}^{\mbox{\tiny H}}\bsigma_m(t,\ell){\bf h}_{\ell,j,m}\\
&- \frac{\frac{t}{K^2}{\bf h}_{\ell,j,m}^{\mbox{\tiny H}}\bsigma_m(t,\ell)\widehat{\bf h}_{\ell,\ell,m}\widehat{\bf h}_{\ell,\ell,m}^{\mbox{\tiny H}}\bsigma_m(t,\ell){\bf h}_{\ell,j,m}}{1+\frac{t}{K}\widehat{\bf h}_{\ell,\ell,m}^{\mbox{\tiny H}}\bsigma_m(t,\ell)\widehat{\bf h}_{\ell,\ell,m}}\\
&\triangleq U_{\ell,j,m}(t)-V_{\ell,j,m}(t).
\end{align*}
Let us begin by treating  $\mathbb{E}\left[U_{\ell,j,m}(t)\right]$. Since ${\bf h}_{\ell,j,m}$ and $\bsigma_m(t,\ell)$ are independent, we have
$$
\mathbb{E}\left[U_{\ell,j,m}(t)\right]=\mathbb{E}\left[\frac{1}{K}\tr \big( {\bf R}_{\ell,j,m}\bsigma_m(t,\ell) \big) \right].
$$
Working out the obtained expression using  \eqref{eq:second_eq} and \eqref{eq:third_eq}, we obtain
$$
\mathbb{E}\left[U_{\ell,j,m}(t)\right]-\frac{1}{K}\tr \big( {\bf R}_{\ell,j,m}{\bf T}_\ell(t) \big) \xrightarrow[M,K\to+\infty]{} 0.
$$
As for the control of $V_{\ell,j,m}$ we need to introduce the following quantities:
$$\beta_{\ell,j,m}=\frac{\sqrt{t}}{K}{\bf h}_{\ell,j,m}^{\mbox{\tiny H}}\bsigma_m(t,\ell)\widehat{\bf h}_{\ell,\ell,m}$$
and
$$\stackrel{o}{\beta}_{\ell,j,m}=\beta_{\ell,j,m}-\mathbb{E}_{h}\left[\beta_{\ell,j,m}\right]$$
where $\mathbb{E}_{h}[\cdot]$ denotes the expectation with respect to vector ${\bf h}_{\ell,k,m}$, $k=1,\ldots,L$.
Let $\alpha_{\ell,m}=\widehat{\bf h}_{\ell,\ell,m}\bsigma_m(t,\ell)\widehat{\bf h}_{\ell,m}$. Then, we have
\begin{align}
\mathbb{E}\left[{V}_{\ell,j,m}(t)\right]&=\mathbb{E}\left[\frac{\left|{\beta}_{\ell,j,m}\right|^2}{1+t\alpha_{\ell,m}}\right]\nonumber \\
&=\mathbb{E}\left[\frac{\left|\mathbb{E}_h\beta_{\ell,j,m}\right|^2}{(1+t\alpha_{\ell,m})}\right] +\mathbb{E}\left[\frac{\left|\mathbb{E}_h\left[\stackrel{o}{\beta}_{\ell,j,m}\right]\right|^2}{1+t\alpha_{\ell,m}} \right] \nonumber \\
&+\mathbb{E}\left[\frac{2\Re\left(\stackrel{o}{\beta}_{\ell,j,m}\mathbb{E}_h\left[\beta_{\ell,j,m}\right]\right)}{1+t\alpha_{l,m}}\right]\label{eq:v_ljm}
\end{align}
where $\Re(\cdot)$ denotes the real-valued part of a scalar. Using Lemma \ref{lemma:quadratic}, we can show that the last terms in the right hand side of \eqref{eq:v_ljm} tend to zero. Therefore,
\begin{align}
\mathbb{E}\left[V_{\ell,j,m}(t)\right]&=\mathbb{E}\left[\frac{t\left|\frac{1}{K}\tr \big( \bPhi_{\ell,j,m}\bsigma_{m}(t,\ell) \big) \right|^2}{1+t\alpha_{\ell,m}}\right]+o(1)\nonumber\\
&\stackrel{(a)}=\mathbb{E}\left[\frac{t\left|\frac{1}{K}{\tr \big( \bPhi_{\ell,j,m}{\bf T}_\ell(t) \big) }\right|^2}{1+t\alpha_{\ell,m}}\right]+o(1) \label{eq:v_ljm_inter}
\end{align}
where $(a)$ follows from that
$$
\mathbb{E}\left[\frac{1}{K}\tr \big( \bPhi_{\ell,j,m}\bsigma_{m}(t,\ell) \big) \right]-\frac{1}{K}\tr \big(\bPhi_{\ell,j,m}{\bf T}_\ell(t) \big) \xrightarrow[M,K\to+\infty]{} 0.
$$
On the other hand, one can prove using \eqref{lemma:quadratic} that
$$
\alpha_{\ell,m} - \delta_{\ell,m}\xrightarrow[M,K\to+\infty]{\mathrm{a.s.}} 0
$$
and as such
\begin{equation}
\mathbb{E}\left[\frac{1}{1+t\alpha_{\ell,m}}\right]-\frac{1}{1+t\delta_{\ell,m}(t)}\xrightarrow[M,K\to+\infty]{} 0.
\label{eq:v_ljm_inter_bis}
\end{equation}
Combining \eqref{eq:v_ljm_inter} and \eqref{eq:v_ljm_inter_bis}, we obtain
$$
\mathbb{E}\left[V_{\ell,j,m}(t)\right]=\frac{t\left|\frac{1}{K}\tr \big( \bPhi_{\ell,j,m}{\bf T}_\ell(t) \big)\right|^2}{1+t\delta_{\ell,m}(t)}+o(1).
$$
Finally, substituting $\mathbb{E}\left[U_{\ell,j,m}(t)\right]$ and $\mathbb{E}\left[V_{\ell,j,m}(t)\right]$ by their deterministic equivalents gives the desired result.
\section{Proof of Corollary \ref{corollary:derivation}}
\label{app:derivation}
From Theorem~\ref{th:main} we have that, $X_{j,m}(t)$ and $Z_{\ell,j,m}(t)$ converge to deterministic equivalents which we denote by $\overline{X}_{j,m}(t)$ and $\overline{Z}_{\ell,j,m}(t)$. Corollary~\ref{corollary:derivation} extends this result to the convergence of the derivatives. Its proof is based on the same techniques used in our work in \cite{Kammoun2014a}. We provide hereafter the adapted proof for sake of completeness. We restrict ourselves to the control of $X_{j,m}^{(n)}$, as $Z_{\ell,j,m}^{(n)}$ can be treated analogously.
First note that  $X_{j,m}(t)-\overline{X}_{j,m}(t)$ is analytic, when extended to $\mathbb{C}\backslash\mathbb{R}_{-}$, where $\mathbb{R}_{-}$ is the set of negative real-valued scalars. As it is almost surely bounded on every compact subset of  $\mathbb{C}\backslash\mathbb{R}_{-}$, Montel's theorem \cite{RUD86} ensures that there exists a converging subsequence that converges to an analytic function. Since the limiting function is zero on $\mathbb{R}_{+}$, it must be zero everywhere because of analyticity. Therefore, from every subsequence one can extract a convergent subsequence, that converges to zero. Necessarily, $X_{j,m}(t)-\overline{X}_{j,m}(t)$  converges to zero for every $t\in \mathbb{C}\backslash\mathbb{R}_{-}$. Due to analyticity of the functions \cite{RUD86}, we also have
\begin{equation}X_{j,m}^{(n)}(t)-\overline{X}_{j,m}^{(n)}(t) \xrightarrow[M,K\to+\infty]{a.s.} 0, \quad \textnormal{for every } t\in\mathbb{C}\backslash\mathbb{R}_{-}.\label{eq:final_result}\end{equation}
To extend the convergence result to $t=0$ we will, in a similar fashion as in \cite{Kammoun2014a}, decompose $X_{j,m}^{(n)}-\overline{X}_{j,m}^{(n)}$ as
$$
X_{j,m}^{(n)}-\overline{X}_{j,m}^{(n)}=\alpha_1+\alpha_2+\alpha_3
$$
where $\alpha_1, \alpha_2$ and $\alpha_3$  are
\begin{align*}
	\alpha_1&=X_{j,m}^{(n)}-X_{j,m}^{(n)}(\eta) \\
	\alpha_2&=X_{j,m}^{(n)}(\eta)-\overline{X}_{j,m}^{(n)}(\eta) \\
	\alpha_3&=\overline{X}_{j,m}^{(n)}(\eta)-\overline{X}_{j,m}^{(n)}.
\end{align*}
Note that $X_{j,m}^{(n)}(\eta)$ and $\overline{X}_{j,m}^{(n)}(\eta)$ are, respectively, the $n$th derivatives of $X_{j,m}(t)$ and $\overline{X}_{j,m}(t)$ at $t=\eta$. We rewrite $\alpha_1$ as
\begin{align*}
	\alpha_1&=\frac{1}{K}{\bf h}_{j,j,m}^{\mbox{\tiny H}}\left({\bf I}-\bsigma(\eta,j)\right)\widehat{\bf h}_{j,j,m}\\
		       &=\frac{\eta}{K}{\bf h}_{j,j,m}^{\mbox{\tiny H}}\frac{\widehat{\bf H}_{j,j}\widehat{\bf H}_{j,j}^{\mbox{\tiny H}}}{K}\bsigma(\eta,j)\widehat{\bf h}_{j,j,m}.
\end{align*}
Therefore,
$$
\left|\alpha_1\right|\leq |\eta|  \left\|\frac{{\bf h}_{j,j,m}}{\sqrt{K}}\right\| \left\|\frac{\widehat{\bf h}_{j,j,m}}{\sqrt{K}}\right\| \left\|\frac{\widehat{\bf H}_{j,j}\widehat{\bf H}_{j,j}^{\mbox{\tiny H}}}{K}\right\| .
$$
Since $\|\frac{{\bf h}_{j,j,m}}{\sqrt{K}}\|$, $\|\frac{\widehat{\bf h}_{j,j,m}}{\sqrt{K}}\|$ and $\|\frac{\widehat{\bf H}_{j,j}\widehat{\bf H}_{j,j}^{\mbox{\tiny H}}}{K}\|$ are almost surely bounded, there exists $M_0$ and a constant $C_0$, such that for all $M\geq M_0$, $\left|\alpha_1\right|\leq C_0\eta$. Hence, for $\eta\leq \frac{\epsilon}{3C_0}$, we have  $\left|\alpha_1\right|\leq\frac{\epsilon}{3}$.
On the other hand, $\overline{X}_{j,m}^{(n)}(t)$ is continuous at $t=0$. So there exists $\eta$ small enough such that $\left|\alpha_3\right|=\left|\overline{X}_{j,m}^{(n)}(\eta)-\overline{X}_{j,m}^{(n)}\right|\leq \frac{\epsilon}{3}$. Finally, Eq.~\eqref{eq:final_result} asserts that there exists $M_1$ such that for any $M\geq M_1$, $\left|\alpha_2\right|\leq \frac{\epsilon}{3}$. Take $M\geq \max(M_0,M_1)$ and $\eta\leq \frac{\epsilon}{3C_0}$, we then have
$$
\left|X_{j,m}^{(n)}-\overline{X}_{j,m}^{(n)}\right|\leq {\epsilon},
$$
thereby establishing
$$
X_{j,m}^{(n)}-\overline{X}_{j,m}^{(n)}\xrightarrow[M,K\to+\infty]{a.s.}0.
$$

\section{Algorithm for computing ${\bf T}_\ell$ and $\delta_{\ell,m}$.}
\label{app:alogorithm}
\begin{algorithm}[H]
\caption{Iterative algorithm for computing the first $D$ derivatives of deterministic equivalents at $t=0$ }
\begin{algorithmic}
	\For {$\ell=1 \to L$}
	\For {$k=1 \to K$}
	\State $\delta_{\ell,k}^{(0)}\gets \frac{1}{K}\tr ({\bPhi}_{\ell,\ell,k})$ 
	\State $g_{\ell,k}^{(0)} \gets 0$
	\State $f_{\ell,k}^{(0)} \gets -\frac{1}{1+g_{\ell,k}^{(0)}}$
	\EndFor
	\State ${\bf T}_\ell^{(0)} \gets {\bf I}_M$
	\State ${\bf Q}^{(0)} \gets {\bf 0}_M$
	\For {$i=1 \to D$}
	\State ${\bf Q}^{(i)}\gets \frac{i}{K}\sum_{k=1}^{K}f_{k}^{(i-1)}{\bPhi}_{\ell,\ell,k}$
	\State ${\bf T}_\ell^{(i)} \gets \displaystyle{\sum_{n=0}^{i-1}\sum_{j=0}^n {i-1\choose n }{n\choose j}{\bf T}_\ell^{(i-1-n)}{\bf Q}^{(n-j+1)}{\bf T}_\ell^{(j)}}$
	\For {$k=1 \to K$}
	\State $f_{\ell,k}^{(i)}\gets \displaystyle{\sum_{n=0}^{i-1}\sum_{j=0}^i} {i-1\choose n}{n \choose j}(i-n) \times \newline \hspace*{30mm} f_{\ell,k}^{(j)} f_{\ell,k}^{(i-j)}\delta_{\ell,k}^{(i-1-n)}$
	\State $g_{\ell,k}^{(i)} \gets i\delta_{\ell,k}^{(i-1)}$
	\State $\delta_{\ell,k}^{(i)}\gets \frac{1}{K}\tr ({\bPhi}_{\ell,\ell,k}{\bf T}_\ell^{(i)})$
	\EndFor
	\EndFor
	\EndFor
\end{algorithmic}
\end{algorithm}
    \bibliographystyle{IEEEbib}
\bibliography{IEEEabrv,IEEEconf,./tutorial_RMT}

\begin{IEEEbiographynophoto}
{Abla Kammoun} was born in Sfax, Tunisia. She received the engineering degree in signal and systems from the Tunisia Polytechnic School, La Marsa, and the Master's degree and the Ph.D. degree in digital communications from Telecom Paris Tech [then Ecole Nationale Supérieure des Télécommunications (ENST)]. From June 2010 to April 2012, she has been a Postdoctoral Researcher in the TSI Department, Telecom Paris Tech. Then she has been at Supélec at the Alcatel-Lucent Chair on Flexible Radio until December 2013. Currently, she is a Postodoctoral fellow at KAUST university. Her research interests include performance analysis, random matrix theory, and semi-blind channel estimation.
\end{IEEEbiographynophoto}
\begin{IEEEbiographynophoto}{
Axel M\"uller} (S'11)
received a Dipl.-Ing.(FH) degree in electrical engineering from the university of applied sciences Ulm and a M.Sc. degree in communications engineering from the University of Ulm, Germany, in 2009 and 2011, respectively.
Since 2011, he is pursuing a Ph.D. degree at Supélec, Gif-sur-Yvette, France. His research interests are in the area of large random matrix theory and information theory, especially in the context of applications to interference mitigation techniques in dense heterogeneous cellular networks.
\end{IEEEbiographynophoto}
\begin{IEEEbiographynophoto}{
Emil Bj\"ornson} was born in Malmö, Sweden, in 1983. He received the M.S. degree in Engineering Mathematics from Lund University, Lund, Sweden, in 2007. He received the Ph.D. degree in Telecommunications from the Department of Signal Processing at KTH Royal Institute of Technology, Stockholm, Sweden, in 2011. Dr. Björnson was one of the first recipients of the International Postdoc Grant from the Swedish Research Council. This grant funded a joint postdoctoral position from Sept. 2012 to July 2014 at the Alcatel-Lucent Chair on Flexible Radio, Supélec, Paris, France, and the Department of Signal Processing at KTH Royal Institute of Technology, Stockholm, Sweden. From 2014, Dr. Björnson is an Assistant Professor in the tenure-track at the Division of Communication Systems at Linköping University, Linköping, Sweden.

His research interests include multi-antenna cellular communications, massive MIMO techniques, radio resource allocation, green energy efficient systems, and network topology design. He is the first author of the monograph “Optimal Resource Allocation in Coordinated Multi-Cell Systems” published in Foundations and Trends in Communications and Information Theory, 2013. He is also dedicated to reproducible research and has made a large amount of simulation code publically available.

For his work on optimization of multi-cell MIMO communications, he received a Best Paper Award at the 2009 International Conference on Wireless Communications \& Signal Processing (WCSP’09) and a Best Student Paper Award at the 2011 IEEE International Workshop on Computational Advances in Multi-Sensor Adaptive Processing (CAMSAP’11).

\end{IEEEbiographynophoto}
\begin{IEEEbiographynophoto}{M\'erouane Debbah}
 Mérouane Debbah entered the Ecole Normale Supérieure de Cachan (France) in 1996 where he received his M.Sc and Ph.D. degrees respectively. He worked for Motorola Labs (Saclay, France) from 1999-2002 and the Vienna Research Center for Telecommunications (Vienna, Austria) until 2003. He then joined the Mobile Communications department of the Institut Eurecom (Sophia Antipolis, France) as an Assistant Professor until 2007. He is now a Full Professor at Supelec (Gif-sur-Yvette, France), holder of the Alcatel-Lucent Chair on Flexible Radio and a recipient of the ERC grant MORE (Advanced Mathematical Tools for Complex Network Engineering). His research interests are in information theory, signal processing and wireless communications. He is a senior area editor for IEEE Transactions on Signal Processing and an Associate Editor in Chief of the journal Random Matrix: Theory and Applications. Mérouane Debbah is the recipient of the "Mario Boella" award in 2005, the 2007 General Symposium IEEE GLOBECOM best paper award, the Wi-Opt 2009 best paper award, the 2010 Newcom++ best paper award, the WUN CogCom Best Paper 2012 and 2013 Award, the 2014 WCNC best paper award as well as the Valuetools 2007, Valuetools 2008, CrownCom2009 and Valuetools 2012 best student paper awards. In 2011, he received the IEEE Glavieux Prize Award and in 2012, the Qualcomm Innovation Prize Award. He is a WWRF fellow and a member of the academic senate of Paris-Saclay.

\end{IEEEbiographynophoto}
  
      \end{document}